\let\chapter\section
\def\thmspace{0.0em}
\newtheorem{theorem}{\hspace{\thmspace}{\bf Theorem}\!}
\newtheorem{definition}{\hspace{\thmspace}{\bf Definition}\!}
\newtheorem{lemma}{\hspace{\thmspace}{\bf Lemma}\!}
\newenvironment{proof}{{\textit{Proof}.}}{\hfill$\Box$}
\newtheorem{example}{\hspace{\thmspace}{\bf Example}\!}
\def \iff{\;\Leftrightarrow\;}
\newcommand{\benumerate}{\begin{list}{$\bullet$}{\topsep=0pt \parsep=0pt \itemsep=1pt \labelwidth=1.5em \labelsep=0.5em \leftmargin=20pt}}
\newcommand{\eenumerate}{\end{list}}
\newcommand{\bitemize}{\begin{list}{$\bullet$}{\topsep=0pt \parsep=0pt \itemsep=1pt \leftmargin=10pt}}
\newcommand{\eitemize}{\end{list}}
\newcommand{\setf}[1]{{\bf{#1}}}
\newcommand{\cA}{\mathcal{A}}
\newcommand{\cD}{\mathcal{D}}
\newcommand{\cI}{\mathcal{I}}
\newcommand{\cO}{{\mathcal{O}}}	%
\newcommand{\cR}{\mathcal{R}}
\newcommand{\cT}{\mathcal{T}}
\newcommand{\sU}{{\mathscr{U}}}	
\newcommand{\sF}{{\mathscr{F}}} 
\newcommand{\sR}{{\mathscr{R}}}
\newcommand{\sd}{\setf{d}}
\newcommand{\MLap}{\mathcal{M}_{\mathrm{Lap}}}
\newcommand{\bF}{\setf{F}}
\newcommand{\bc}{\setf{c}}
\newcommand{\bx}{\setf{x}}
\newcommand{\bD}{\setf{D}}
\newcommand{\bDp}{\setf{D}'}
\newcommand{\bxp}{\setf{x}'}
\newcommand{\EE}{\mathbb{E}}
\newcommand{\RR}{\mathbb{R}}
\newcommand{\Lap}{{\text{Lap}}}
\def \algname{\textsc{OptStream}}
\begin{document}

\title{OptStream: Releasing Time Series Privately}
\author{\name Ferdinando Fioretto \email fioretto@gatech.edu \\
       \addr School of Industrial and Systems Engineering \\
       Georgia Institute of Technology \\
       Atlanta, GA  30332, USA
       \AND
       \name Pascal Van Hentenryck \email pvh@isye.gatech.edu \\
       \addr School of Industrial and Systems Engineering \\
       Georgia Institute of Technology \\
       Atlanta, GA  30332, USA
}
\maketitle\allowdisplaybreaks\sloppy

\begin{abstract} 
Many applications of machine learning and optimization operate on data streams. While these datasets are fundamental to fuel decision-making algorithms, often they contain sensitive information about individuals, and their usage poses significant privacy risks. Motivated by an application in energy systems, this paper presents \algname{}, a novel algorithm for releasing differentially private data streams under the w-event model of privacy. \algname{} is a 4-step procedure consisting of sampling, perturbation, reconstruction, and post-processing modules. First, the \emph{sampling} module selects a small set of points to access in each period of interest. Then, the \emph{perturbation} module adds noise to the sampled data points to guarantee privacy. Next, the \emph{reconstruction} module re-assembles non-sampled data points from the perturbed sample points. Finally, the \emph{post-processing} module uses convex optimization over the private output of the previous modules, as well as the private answers of additional queries on the data stream, to improve accuracy by redistributing the added noise. \algname{} is evaluated on a test case involving the release of a real data stream from the largest European transmission operator. Experimental results show that \algname{} may not only improve the accuracy of state-of-the-art methods by at least one order of magnitude but also supports accurate load forecasting on the private data.
\end{abstract}


\section{Introduction}

Differential privacy \cite{dwork:06} has emerged as a robust framework to release datasets while limiting the disclosure of 
participating individuals. Informally, it ensures that what can be 
learned about an individual in a differential private dataset 
is, with high probability, limited to what could have been learned about the individual in the same dataset but without her data.

Many applications of machine learning and optimization, in areas such
as healthcare, traffic management, and social networks, operate over
streams of data. The use of differential privacy for the private
release of time series has attracted increased attention in recent
years (e.g.,
\cite{dwork:10,dwork:2010a,fanti:16,ding:17,chen2017pegasus}) where
aggregated statistics are continuously reported. Two common approaches for time series data release are the \emph{event-level} and
\emph{user-level} privacy models \cite{dwork:10}. The former focuses
on protecting a single \emph{event}, while the latter aims at
protecting \emph{all} the events associated with a single user, i.e.,
it focuses on protecting the presence of an individual in the
dataset. Additionally, \citeA{kellaris:14} proposed the notion of
$w$-event privacy to achieve a balance between event-level and
user-level privacy, trading off utility and privacy to protect event
sequences within a time window of $w$ time steps.

This paper was motivated by a desire to release private streams of
energy demands, also called \emph{loads}, in transmission systems. 
The goal is that of protecting changes in consumer loads up to some 
desired amount within critical time intervals. Although customer 
identities are typically considered public information (e.g., each facility is served by some energy provider), their loads can be highly sensitive as they may reveal the economic activities of grid customers. For example, changes in load consumption may indirectly reveal production levels and strategic investments and other similar information. 
Moreover, these time series are often input to complex analytic 
tasks,  e.g., demand forecasting algorithms \cite{nogales2002forecasting} and optimal power flows
\cite{ochoa2011minimizing}. As a result, the accuracy of the private
datasets is critical and, as shown later in the paper, existing
algorithms for time series fall short in this respect for this
application.

The main contribution of this paper is a new privacy mechanism that
remedies these limitations and is sufficiently precise for use in
forecasting and optimization applications. The new algorithm, called
\algname{}, is presented under the framework of $w$-event privacy and
is a 4-step procedure consisting of sampling, perturbation,
reconstruction, and post-processing modules.  The \emph{sampling}
module selects a small set of points for private measurement in each
period of interest, the \emph{perturbation} module introduces noise to
the sampled data points to guarantee privacy, the
\emph{reconstruction} module reconstructs the non-sampled data points
from the perturbed sampled points, and the \emph{post-processing}
module uses convex optimization over the private output of the
previous modules, as well as the private answers of additional queries
on the data stream, to redistribute the noise to ensure consistency of
salient features of the data. \algname{} is also generalized to answer
queries over hierarchical streams, allowing data curators to monitor
simultaneously streams produced by energy profile data at different
levels of aggregation. It is important to emphasize that, although
\algname{} was motivated by an energy application, it is potentially
useful for many other domains, since its design is independent of the
underlying problem.

\algname{} is evaluated on real datasets from \emph{R\'{e}seau de
  Transport d'\'{E}lectricit\'{e}}, the French transmission operator
and the largest in Europe. The dataset contains the energy consumption
for one year at a granularity of 30 minutes. \algname{} is also
compared with state-of-the-art algorithms adapted to $w$-event
privacy.  Experimental results show that \algname{} improves the
accuracy of state-of-the-art algorithms by at least one order of
magnitude for this application domain. The effectiveness of the
proposed algorithm is measured, not only in terms of the error between
the reported private streams and the original stream but also in the
accuracy of a load forecasting algorithm based on the private
data. Finally, the paper shows that the sampling and
optimization-based post-processing steps are critical in achieving the
desired performance and that the improvements are also observed when
releasing hierarchical streams of data.

The rest of this paper is organized as follows. Section
\ref{sec:preliminaries} discusses the stream model, summarizes the
privacy goals of this work, and reviews the notion of differential
privacy over streams.  Section \ref{sec:algorithm} describes
\algname{} and the design choices of its components.  Section
\ref{sec:error_analysis} analyzes the accuracy of the proposed
framework and shows how it reduces the error introduced to preserve
privacy when compared to a standard solution.  Section
\ref{sec:extensions} extends \algname{} to the $\alpha$-\emph{indistinguishability} privacy model, allowing privacy protection of arbitrary quantities and which is critical for the motivating application. Additionally, \algname{} is extended
to handle hierarchical stream data.  Section \ref{sec:experiments}
performs a comprehensive experimental analysis of real data streams
from energy load profiles.  Section \ref{sec:discussion} discusses key aspects of the privacy model adopted to privately releasing streams of data, as well as differences with the event-based model for data streams.  Section~\ref{sec:related_work} discusses the related work and, finally, Section \ref{sec:conclusions} concludes the work.

\def\Dw{D_{\rhd w}}
\def\xw{\bx_{\rhd w}}

\section{Preliminaries} \label{sec:preliminaries} 

This section first reviews basic concepts in differential privacy.  It then presents the $w$-event privacy model used to protect privacy in data streams and its definition of differential privacy.

\subsection{Differential Privacy}
\label{sec:dp}

This section reviews the standard definition of \emph{differential
privacy} \cite{dwork:06}.  Differential privacy focuses on protecting
the privacy of an individual user participating to a dataset. Such
notion relies on the definition of a \emph{adjacency relation
$\sim$} between datasets. Two datasets $X, X'$ are called $neighbors$
if their content differs in at most one tuple: $X\sim X' \iff \| X -
X'\| \leq 1$.
\begin{definition}[Differential Privacy]
   Let $\cA$ be a randomized algorithm that takes as input a dataset
    and outputs an element from a set of possible responses.  $\cA$
    achieves $\epsilon$-differential privacy if, for all sets
    $O \subseteq \cO$ and all neighboring datasets $X,
    X' \in \cD$: 
    \begin{equation} \label{eq:dp_def} \frac{Pr[\cA(X) \in
    O]}{Pr[\cA(X') \in O]} \leq \exp(\epsilon).  \end{equation}
\end{definition}

\noindent The level of privacy is controlled by the parameter
$\epsilon \geq 0$, called the \emph{privacy budget}, with values close
to $0$ denoting strong privacy. The adjacency relation $\sim$ captures
the participation of an individual into the dataset. While
differential privacy algorithms commonly adopts the $1$-Hamming
distance as adjacency relation between datasets, the latter can be
generalized to be any symmetric binary relation $\sim \in \cD^2$. In
particular, the relation $\sim$ captures the aspects of the private
data $D$ that are considered sensitive and has been generalized to
protect locations of individuals \cite{fawaz2014location} and quantities
in general \cite{chatzikokolakis:13}. When a single entry is
associated with a user in the database, an algorithm satisfying
Equation \eqref{eq:dp_def} prevents an attacker with access to the
algorithm's output from learning anything substantial about any
individual.

\subsubsection{Properties of Differential Privacy}

Differential privacy enjoys several important properties, including
composability and immunity to post-processing. 

\paragraph{Composability} 
\emph{Composability} ensures that a combination of differentially private
algorithms preserve differential privacy \cite{dwork:13}.

\begin{theorem}[Sequential Composition]
\label{th:composabilty}
The composition $\cA(D) = (\cA_1(D), \ldots, \cA_k(D))$ of a collection
$\{\cA_i\}_{i=1}^k$ of $\epsilon_i$-differentially private algorithms
satisfies $(\sum_{i=1}^{k}
\epsilon_i)$-differentially privacy.
\end{theorem}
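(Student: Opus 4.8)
The plan is to reduce to the case of two algorithms and then obtain the general statement by induction on $k$. So fix a collection $\{\cA_i\}_{i=1}^k$ with $\cA_i$ being $\epsilon_i$-differentially private, and consider first $k=2$. Write $\cA(D)=(\cA_1(D),\cA_2(D))$, fix an arbitrary (measurable) response set $O\subseteq\cO_1\times\cO_2$ and a pair of neighboring datasets $D\sim D'$; the goal is to establish $Pr[\cA(D)\in O]\le\exp(\epsilon_1+\epsilon_2)\,Pr[\cA(D')\in O]$, which is exactly the inequality of Equation~\eqref{eq:dp_def} for $\cA$ with budget $\epsilon_1+\epsilon_2$.

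First I would treat the non-adaptive case, in which $\cA_1$ and $\cA_2$ use independent internal randomness. Then the joint probability mass (or density) of the output factorizes as $Pr[\cA(D)=(o_1,o_2)]=Pr[\cA_1(D)=o_1]\cdot Pr[\cA_2(D)=o_2]$, and applying the differential privacy guarantees of the two components pointwise gives $Pr[\cA(D)=(o_1,o_2)]\le\exp(\epsilon_1)\exp(\epsilon_2)\,Pr[\cA(D')=(o_1,o_2)]$. Summing (or integrating) this pointwise bound over all $(o_1,o_2)\in O$ yields the desired inequality for $k=2$.

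The step I expect to be the main obstacle is the more general adaptive situation, where $\cA_2$ is additionally allowed to read the output of $\cA_1$, so that the factorization above no longer holds verbatim. The remedy is to condition on the first output: for each fixed $o_1$, the map $D\mapsto\cA_2(D,o_1)$ is still $\epsilon_2$-differentially private, hence $Pr[\cA_2(D,o_1)\in O_{o_1}]\le\exp(\epsilon_2)\,Pr[\cA_2(D',o_1)\in O_{o_1}]$ for the $o_1$-section $O_{o_1}=\{o_2:(o_1,o_2)\in O\}$; multiplying by the $\epsilon_1$-bound on the mass (density) of $\cA_1(D)$ at $o_1$ and integrating over $o_1$ recovers the factor $\exp(\epsilon_1+\epsilon_2)$. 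The care here is in the measurability of the sections $O_{o_1}$ and in justifying the interchange of summation/integration, but no genuinely new idea is needed.

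Finally, the inductive step is routine: assuming the result for $k-1$, the algorithm $D\mapsto(\cA_1(D),\dots,\cA_{k-1}(D))$ is $(\sum_{i=1}^{k-1}\epsilon_i)$-differentially private, and composing this single algorithm with $\cA_k$ via the two-algorithm case above shows that $\cA$ is $(\sum_{i=1}^{k-1}\epsilon_i+\epsilon_k)=(\sum_{i=1}^{k}\epsilon_i)$-differentially private, which completes the proof.
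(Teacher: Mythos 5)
The paper does not prove this theorem at all: it is quoted as a standard property of differential privacy with a citation to the literature (\citeA{dwork:13}), so there is no in-paper argument to compare against. Your proposal is the standard textbook proof and is correct: the pointwise factorization over independent internal randomness for $k=2$, summing/integrating over $O$, and induction on $k$ give exactly the claimed $(\sum_{i=1}^{k}\epsilon_i)$ bound. One small remark: the statement as written is non-adaptive --- each $\cA_i$ takes only $D$ as input --- so your conditioning argument for the adaptive case is extra generality rather than a required step (though it is the version one usually wants in practice, and it is handled correctly via the $o_1$-sections).
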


\begin{theorem}[Parallel Composition] 
\label{th:composabilty2} 
Let $D_1$ and $D_2$ be disjoint subsets of $D$ and $\cA$ be an
$\epsilon$-differentially private algorithm.  Then computing $\cA(D
\cap D_1)$ and $\cA(D \cap D_2)$ satisfies $\epsilon$-differential privacy.
\end{theorem}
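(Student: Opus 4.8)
The plan is to use the disjointness of $D_1$ and $D_2$ to argue that a single-tuple change to $D$ is seen by at most one of the two sub-computations, while the other receives an identical input. First I would fix an arbitrary pair of neighboring datasets $D \sim D'$ in $\cD$; by definition $D$ and $D'$ differ in at most one tuple, and the only nontrivial case is when they differ in exactly one tuple $t$. Since $D_1 \cap D_2 = \emptyset$, the tuple $t$ lies in at most one of $D_1$ and $D_2$; without loss of generality suppose $t \notin D_2$. Then $D \cap D_2$ and $D' \cap D_2$ are the same dataset, whereas $D \cap D_1$ and $D' \cap D_1$ differ in the single tuple $t$ (or are equal, if $t \notin D_1$ either), so in all cases $D \cap D_1 \sim D' \cap D_1$.

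Next I would write the composed mechanism as $\mathcal{B}(D) = (\cA(D \cap D_1), \cA(D \cap D_2))$, with the two invocations of $\cA$ drawing independent internal randomness, so that its output distribution factorizes. It suffices to bound the privacy-loss ratio on product events $O = O_1 \times O_2$, since these generate the output $\sigma$-algebra of $\cO \times \cO$ and the bound then lifts to an arbitrary $O \subseteq \cO \times \cO$ by a standard extension argument. On such an event, independence gives
\[
\frac{Pr[\mathcal{B}(D) \in O_1 \times O_2]}{Pr[\mathcal{B}(D') \in O_1 \times O_2]} = \frac{Pr[\cA(D \cap D_1) \in O_1]}{Pr[\cA(D' \cap D_1) \in O_1]} \cdot \frac{Pr[\cA(D \cap D_2) \in O_2]}{Pr[\cA(D' \cap D_2) \in O_2]}.
\]
The second factor equals $1$ because $D \cap D_2 = D' \cap D_2$, and the first factor is at most $\exp(\epsilon)$ by the $\epsilon$-differential privacy of $\cA$ applied to the neighbors $D \cap D_1 \sim D' \cap D_1$. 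Hence the product is at most $\exp(\epsilon)$, which is exactly the definition of $\epsilon$-differential privacy for $\mathcal{B}$.

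The only genuinely delicate step is the passage from product events to arbitrary output events, which relies on the independence of the two runs of $\cA$ (this is where the implicit assumption of fresh randomness per invocation is used) together with a routine measure-theoretic extension; the rest --- the case split on which block contains the differing tuple, and the observation that intersecting with a fixed set either preserves the neighbor relation or makes the two datasets identical --- is mechanical. I would also remark that the same argument goes through verbatim for a partition $D_1, \ldots, D_k$ of the domain rather than just two pieces: the differing tuple still lands in a single block, every other block contributes a factor of $1$, and the one surviving block contributes at most $\exp(\epsilon)$, so the composed mechanism is still $\epsilon$-differentially private.
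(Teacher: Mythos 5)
The paper does not actually prove this statement: it is quoted as a standard property of differential privacy with a citation to \cite{dwork:13}, so there is no in-paper argument to compare against. Your overall strategy is the standard one and its core is sound: under the paper's adjacency relation ($\|X-X'\|\le 1$, i.e., addition or removal of a single tuple), the differing tuple lies in at most one of the disjoint blocks, the run of $\cA$ on the other block receives literally the same input under $D$ and $D'$, and only one factor can incur privacy loss.

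One step is not right as stated, however: you pass from product events $O_1\times O_2$ to arbitrary $O\subseteq\cO\times\cO$ on the grounds that product events generate the output $\sigma$-algebra and ``a standard extension argument'' applies. A likelihood-ratio inequality $P(A)\le e^{\epsilon}Q(A)$ that holds on a generating $\pi$-system does not in general extend to the generated $\sigma$-algebra: the family of sets satisfying the inequality is closed under countable disjoint unions but not under complements, so the $\pi$--$\lambda$ theorem gives nothing (simple two-point counterexamples exist). The correct lift instead uses the structure you already identified: the distribution $\mu_2$ of $\cA(D\cap D_2)$ is \emph{identical} under $D$ and $D'$, so by independence and Fubini, writing $O^{y}=\{o_1 : (o_1,y)\in O\}$ for the section of $O$, one has $\Pr[\mathcal{B}(D)\in O]=\int \Pr[\cA(D\cap D_1)\in O^{y}]\,d\mu_2(y)\le e^{\epsilon}\int \Pr[\cA(D'\cap D_1)\in O^{y}]\,d\mu_2(y)=e^{\epsilon}\Pr[\mathcal{B}(D')\in O]$, the inner bound coming from $D\cap D_1\sim D'\cap D_1$. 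With that replacement your proof is complete, and, as you note, it extends verbatim to any finite partition. A final caveat worth recording: the case analysis relies on the add/remove neighboring relation the paper adopts; under a substitution-style adjacency the removed and inserted tuples could fall in different blocks, each sub-computation would then see neighboring inputs, and this argument would only yield $2\epsilon$.
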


\paragraph{Post-processing Immunity}
\emph{Post-processing immunity} ensures that privacy guarantees are
preserved by arbitrary post-processing steps \cite{dwork:13}. 

\begin{theorem}[Post-Processing Immunity] 
\label{th:postprocessing} 
Let $\cA$ be an $\epsilon$-differentially private algorithm and $g$ be an arbitrary mapping from the set of possible output sequences $\cO$ to an arbitrary set. Then, $g \circ \cA$ is $\epsilon$-differentially private.
\end{theorem}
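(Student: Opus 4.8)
The plan is to show directly from the definition of $\epsilon$-differential privacy that $g \circ \cA$ satisfies inequality \eqref{eq:dp_def}. First I would fix an arbitrary neighboring pair $X, X' \in \cD$ and an arbitrary measurable set $S$ in the codomain of $g$. The key observation is that the event $\{(g \circ \cA)(X) \in S\}$ coincides with the event $\{\cA(X) \in g^{-1}(S)\}$, where $g^{-1}(S) = \{o \in \cO : g(o) \in S\}$ is the preimage of $S$ under $g$. Hence $Pr[(g \circ \cA)(X) \in S] = Pr[\cA(X) \in g^{-1}(S)]$, and likewise for $X'$.

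Next I would apply the differential privacy guarantee of $\cA$ to the particular output set $O = g^{-1}(S) \subseteq \cO$: since $\cA$ is $\epsilon$-differentially private, we have $Pr[\cA(X) \in g^{-1}(S)] \leq \exp(\epsilon)\, Pr[\cA(X') \in g^{-1}(S)]$. Combining this with the identity from the previous step yields $Pr[(g \circ \cA)(X) \in S] \leq \exp(\epsilon)\, Pr[(g \circ \cA)(X') \in S]$, which is exactly the $\epsilon$-differential privacy condition for $g \circ \cA$. Since $X, X'$ and $S$ were arbitrary, the claim follows.

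The only subtlety — and the step I would be most careful about — is measurability: one must ensure that $g^{-1}(S)$ is a legitimate set to which the privacy definition of $\cA$ applies. If $g$ is assumed measurable (or if the output space of $\cA$ is discrete, which is the typical setting here and covers all mechanisms used in the paper), this is immediate; otherwise one restricts attention to measurable $g$, which is the standard convention. Everything else is a one-line manipulation, so there is no real obstacle; the result is essentially a restatement of the fact that randomness introduced downstream of $\cA$ cannot create new distinguishing power beyond what $\cA$'s output already leaks. Note also that no independent randomness in $g$ needs to be handled separately, since a randomized $g$ can be viewed as a deterministic map applied to $\cA$'s output augmented with fresh independent coins, and conditioning on those coins reduces to the deterministic case above.
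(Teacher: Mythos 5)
Your proof is correct: the preimage argument (applying the privacy guarantee of $\cA$ to the set $O = g^{-1}(S)$) is the standard derivation, and your handling of measurability and of randomized post-processing via fresh independent coins is exactly how the literature treats these points. The paper itself does not prove this theorem but simply cites it from prior work on differential privacy, and your argument coincides with the standard proof given there, so there is nothing to reconcile.
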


\noindent
We now introduce two useful differentially private algorithms. 

\subsubsection{The Laplace Mechanism}

A numeric query $Q$, mapping a dataset to $\RR^d$, can be made
differentially private by injecting random noise to its output. The
amount of noise to inject depends on the \emph{sensitivity} of the query, denoted by $\Delta_Q$ and defined as, 
\[
\Delta_Q = \max_{D \sim D'} \left\| Q(D) - Q(D')\right\|_1.
\]
In other words, the sensitivity of a query is the maximum
$L_1$-distance between the query outputs from any two neighboring
dataset $D$ and $D'$. For instance, $\Delta_Q = 1$ for a query $Q$
that counts the number of users in a dataset.

The Laplace distribution with 0 mean and scale $b$, denoted by
$\Lap(b)$, has a probability density function $\Lap(x|b) = \frac{1}{2b}
e^{-\frac{|x|}{b}}$. It can be used to obtain an
$\epsilon$-differentially private algorithm to answer numeric
queries \cite{dwork:06}. 
In the following, we use $\Lap(\lambda)^d$ to denote the i.i.d.~Laplace distribution over $d$ dimensions with parameter $\lambda$.

\begin{theorem}[Laplace Mechanism $\MLap$]
\label{th:m_lap} 
Let $Q$ be a numeric query that maps datasets to $\RR^d$. The Laplace
mechanism that outputs $Q(D) + z$, where $z \in \RR$ is drawn from the Laplace distribution $\textrm{\Lap}\left(\frac{\Delta_Q}{\epsilon}\right)^d$, achieves $\epsilon$-differential privacy.
\end{theorem}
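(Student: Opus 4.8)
The plan is to bound the ratio of output densities pointwise and then integrate over arbitrary response sets. Fix two neighboring datasets $D \sim D'$ and abbreviate $q \is Q(D)$ and $q' \is Q(D')$; assume $\Delta_Q > 0$, since otherwise $Q$ is constant on the neighboring class and the claim is immediate. Because $\MLap$ perturbs each of the $d$ coordinates with independent $\Lap(\Delta_Q/\epsilon)$ noise, the output $\cA(D)$ has density $p_D(y) = \prod_{i=1}^d \frac{\epsilon}{2\Delta_Q}\exp\!\left(-\frac{\epsilon\,|y_i - q_i|}{\Delta_Q}\right)$ on $\RR^d$, and $p_{D'}$ has the same form with $q'$ in place of $q$. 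First I would write down this factorization explicitly, since it is the only place the i.i.d.\ structure of the injected noise is used.

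Next I would compute, for an arbitrary $y \in \RR^d$,
\[
\frac{p_D(y)}{p_{D'}(y)} = \exp\!\left(\frac{\epsilon}{\Delta_Q}\sum_{i=1}^d \bigl(|y_i - q'_i| - |y_i - q_i|\bigr)\right),
\]
apply the reverse triangle inequality $|y_i - q'_i| - |y_i - q_i| \le |q_i - q'_i|$ to each summand, and recognize $\sum_{i=1}^d |q_i - q'_i| = \|Q(D) - Q(D')\|_1 \le \Delta_Q$ by the definition of the sensitivity. This yields $p_D(y) \le \exp(\epsilon)\, p_{D'}(y)$ for every $y$. Integrating this inequality over any measurable $O \subseteq \cO$ gives $\Pr[\cA(D) \in O] \le \exp(\epsilon)\,\Pr[\cA(D') \in O]$, which is precisely the defining inequality \eqref{eq:dp_def}; since $D,D'$ were arbitrary neighbors (and the bound is symmetric in them), $\MLap$ is $\epsilon$-differentially private.

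I do not anticipate a genuine obstacle here: this is the textbook analysis of the Laplace mechanism. The only steps needing care are the density factorization, which relies on the noise being independent across coordinates, and applying the reverse triangle inequality coordinatewise \emph{before} summing, so that the accumulated error is controlled by the $L_1$ norm rather than, say, the $L_\infty$ norm --- this is exactly why the scale $\Delta_Q/\epsilon$ paired with the $L_1$ sensitivity is the correct choice. A minor point worth flagging in the write-up is the degenerate case $\Delta_Q = 0$ and the observation that verifying one direction of the ratio bound suffices, the other following by the symmetry of the argument in $D$ and $D'$.
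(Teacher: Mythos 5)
Your proof is the standard density-ratio argument for the Laplace mechanism and it is correct, including the careful points about coordinatewise application of the reverse triangle inequality and the $L_1$ sensitivity. The paper does not prove this theorem itself---it states it as a known result cited from prior work---so there is nothing further to compare; your write-up matches the textbook proof that the citation refers to.
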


\subsubsection{Sparse Vector Technique}

The Sparse Vector Technique (SVT) is an important tool of 
differential privacy \cite{dwork:13,hardt2010multiplicative} that allows to answer a sequence of queries without incurring a high privacy cost. The SVT
mechanism is given a sequence of queries $Q = q_1, q_2, \ldots$ and a
sequence of real valued thresholds $\Theta
= \theta_1, \theta_2, \ldots$ and outputs a vector indicating whether
each query answer $q_i$ is above or below the corresponding threshold $\theta_i$.  In other words, 
the output is a vector $\{\top, \bot\}^\ell$ where $\ell$ is the
number of queries answered and $\top$ (resp. $\bot$) indicates that the answer to a noisy query is (resp. is not) above a noisy
threshold.

The SVT mechanism is summarized in Algorithm \ref{alg:sparse_vector}.
It takes as input a dataset $D$, a sequence of queries $Q = q_1,
q_2,\ldots$, each with sensitivity no more than $\Delta_Q$, a sequence
of thresholds $\Theta = \theta_1, \theta_2, \ldots$, and a constant
$k$, denoting the maximum number of queries to be answered with value
$\top$. Its output consists of a sequence of answers $a_1,
a_2, \ldots$ with each $a_i \in \{\top, \bot\}$.  For each query, SVT
perturbs the corresponding threshold and checks if the perturbed
individual query answer is above the noisy threshold.

\begin{theorem}[SVT]
\label{thm:svt}
The SVT mechanism achieves $\epsilon$-differential
privacy \cite{hardt2010multiplicative}.
\end{theorem}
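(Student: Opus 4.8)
The plan is to establish the ratio bound \eqref{eq:dp_def} directly. Fix two neighbouring datasets $D \sim D'$ and an arbitrary output sequence $\mathbf{a} = (a_1,\dots,a_\ell) \in \{\top,\bot\}^\ell$; it suffices to prove $\Pr[\mathrm{SVT}(D) = \mathbf{a}] \le e^{\epsilon}\,\Pr[\mathrm{SVT}(D') = \mathbf{a}]$ and then sum over $O \subseteq \{\top,\bot\}^\ell$. Let $\epsilon = \epsilon_1 + \epsilon_2$ be the split of the budget used by Algorithm~\ref{alg:sparse_vector} between the noise $\rho$ added to the thresholds, of scale $\Delta_Q/\epsilon_1$, and the noise $\nu_i$ added to each query answer, of scale $2k\Delta_Q/\epsilon_2$ (concretely $\epsilon_1 = \epsilon_2 = \epsilon/2$). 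Since the output $\mathbf{a}$ is fixed in advance, the halting rule forces both executions --- on $D$ and on $D'$ --- to evaluate the same queries $q_1,\dots,q_\ell$, so they can be compared query by query.

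First I would condition on the threshold noise $\rho$. Given $\rho$, the per-query noises $\nu_1,\dots,\nu_\ell$ are independent and the $i$-th comparison depends on $\nu_i$ alone, so
\[
\Pr[\mathrm{SVT}(D) = \mathbf{a}] \;=\; \EE_{\rho}\!\left[\, \prod_{i:\, a_i = \bot} \Pr_{\nu_i}\!\big[\, q_i(D) + \nu_i < \theta_i + \rho \,\big]\ \prod_{j:\, a_j = \top} \Pr_{\nu_j}\!\big[\, q_j(D) + \nu_j \ge \theta_j + \rho \,\big] \,\right],
\]
and likewise for $D'$. Write $I_\top = \{\, j : a_j = \top \,\}$; by the halting rule $|I_\top| \le k$.

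The heart of the argument is a shift coupling between these two expectations: replace $\rho$ by $\rho + \Delta_Q$ globally, and for each $j \in I_\top$ substitute $\nu_j \mapsto \nu_j + 2\Delta_Q$, leaving the noises of the $\bot$-indices untouched. Using $|q_i(D) - q_i(D')| \le \Delta_Q$, one checks that raising the common threshold by $\Delta_Q$ already makes every event $\{\, q_i(D) + \nu_i < \theta_i + \rho \,\}$ with $a_i = \bot$ a subset of $\{\, q_i(D') + \nu_i < \theta_i + \rho + \Delta_Q \,\}$, so each $\bot$-factor can only decrease --- this is exactly what makes the privacy cost independent of the number of queries. Each of the $\le k$ $\top$-events, by contrast, needs the matching $+2\Delta_Q$ shift of its own noise, which is the only place the per-query budget is spent. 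As a translation preserves Lebesgue measure, bounding the Laplace density ratios incurred by the $\Delta_Q$-shift of $\rho$ (a factor $e^{\Delta_Q/(\Delta_Q/\epsilon_1)} = e^{\epsilon_1}$) and by the $2\Delta_Q$-shift of each $\nu_j$ (a factor $e^{2\Delta_Q/(2k\Delta_Q/\epsilon_2)} = e^{\epsilon_2/k}$, at most $k$ of them) gives
\[
\Pr[\mathrm{SVT}(D) = \mathbf{a}] \;\le\; e^{\epsilon_1} \cdot e^{|I_\top|\,\epsilon_2/k} \cdot \Pr[\mathrm{SVT}(D') = \mathbf{a}] \;\le\; e^{\epsilon_1 + \epsilon_2}\,\Pr[\mathrm{SVT}(D') = \mathbf{a}] \;=\; e^{\epsilon}\,\Pr[\mathrm{SVT}(D') = \mathbf{a}].
\]
(Alternatively, the mechanism can be viewed as at most $k$ sequential runs of its cutoff-$1$ version, each $\epsilon/k$-differentially private by the same coupling, whence Theorem~\ref{th:composabilty} yields $\epsilon$-differential privacy.)

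The step I expect to be the main obstacle is precisely this coupling: one must argue carefully that a single shift of the shared threshold noise simultaneously rescues all of the $\bot$-answers, no matter how many there are, so that only the at-most-$k$ $\top$-answers ever consume dedicated noise. Two points deserve care. If Algorithm~\ref{alg:sparse_vector} re-samples $\rho$ after each affirmative answer rather than once, the identical coupling applies to each of the $\le k$ resulting segments and the budget accounting is unchanged. And at the boundary --- the last $\top$, which triggers the halt --- one must confirm that both executions refer to the same prefix of queries; this holds because the output $\mathbf{a}$, hence the stopping index, has been fixed in advance.
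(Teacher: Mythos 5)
Your proposal is correct, but note that the paper does not prove Theorem~\ref{thm:svt} at all: it is imported from prior work (the citation to Hardt and Rothblum, and the standard SVT literature), so there is no in-paper argument to compare against line by line. What you have written is essentially a reconstruction of the accepted proof of this SVT variant, and it matches the concrete noise calibration of Algorithm~\ref{alg:sparse_vector}: with $\rho \sim \Lap(2\Delta_Q/\epsilon)$ and $\nu_i \sim \Lap(4k\Delta_Q/\epsilon)$ your split $\epsilon_1=\epsilon_2=\epsilon/2$ gives exactly the scales $\Delta_Q/\epsilon_1$ and $2k\Delta_Q/\epsilon_2$ you assume. The key structural points are all present and are the right ones: fixing the output sequence so both runs evaluate the same queries, conditioning on the shared threshold noise $\rho$ so the per-query events factor, letting a single $+\Delta_Q$ shift of $\rho$ absorb \emph{all} $\bot$-answers at once (which is why the cost is independent of the stream length), and charging only the at most $k$ $\top$-answers via a $+2\Delta_Q$ shift of their own noise at cost $e^{\epsilon_2/k}$ each. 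The one step you state without verification is the containment for the $\top$-events under the simultaneous shifts; it does hold, since $q_j(D') + \nu_j + 2\Delta_Q \geq q_j(D) + \nu_j + \Delta_Q \geq \theta_j + \rho + \Delta_Q$ whenever the original event occurs, so the argument goes through. Your parenthetical fallback (view the mechanism as at most $k$ cutoff-$1$ runs and invoke Theorem~\ref{th:composabilty}) is also sound, though it is the weaker packaging: the direct coupling is what justifies the $4k\Delta_Q/\epsilon$ scale as stated. One caution worth keeping in mind is that the privacy of SVT is notoriously sensitive to the exact variant (resampling or not resampling $\rho$, noisy versus exact threshold, how the count cutoff interacts with the noise scales); your proof is valid precisely for the variant in Algorithm~\ref{alg:sparse_vector}, which is one of the calibrations known to be correctly $\epsilon$-differentially private, so the claim and your argument are in agreement.
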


\noindent
The SVT mechanism is useful especially in situations when one expects
that most answers fall below the threshold, since the noise depends
on $k$. The SVT mechanism will be used in this paper to find good
sampled points in a stream. 

\begin{algorithm}[!t]
    \SetKwInOut{Input}{input}
    \SetKwInOut{Output}{output}
    \Input{$D$: the dataset\\
           $Q = q_1, q_2, \ldots$: a sequence of queries\\
           $\Theta = \theta_1, \theta_2, \ldots$: a sequence of thresholds\\
           $k$: the maximum number of queries to be answered positively\\
           $\epsilon$: the privacy budget}

    $\rho = Lap(2\Delta_Q / \epsilon)$\;
    count = 0\;
    \ForEach{query $q_i \in Q$} {
        $\nu_i = Lap(4k\Delta_Q / \epsilon)$\;
        \eIf{ $q_i(D) + \nu_i \geq \theta_i + \rho$} {
            output $a_i = \top$\;
            count = count + 1\;
            \textbf{Break} if count $ \geq k$\;
        } {
        output $a_i = \bot$\;
        }
    }
    \caption{\textsc{SVT} - Sparse Vector Technique}
    \label{alg:sparse_vector}
\end{algorithm}

\subsection{The $w$-Event Privacy Model for Data Streams}
\label{sec:stream-model}

This section presents the privacy model for streams adopted in this
paper. A \emph{data stream} $D$ is an infinite sequence of elements in
the \emph{data universe} $\sU = \cI \times \cT$, where $\cI$ denotes
the set of user identifiers and $\cT$ is a possibly unbounded set of
time steps. In other words, each tuple $(i, t)$ describes an event
reported by user $i$ that occurred at time $t$. Time is represented
through discrete steps $\cT = \{1, 2, \ldots\}$ and user events are
recorded periodically (e.g., every 30 minutes).  In a data stream $D$,
tuples are ordered by arrival time. If tuple $(i, t)$ arrives after
tuple $(i', t')$, then $t \geq t'$. Additionally, In the following, $D[t]$ denotes a \emph{stream prefix}, i.e., the sequence $D_1, \ldots, D_t$ of all
tuples observed on or before time $t$. Additionally, $\cD$ denotes the set of all datasets describing collections of tuples in $\sU$.

\begin{example} 
\label{ex:1}
Consider a data stream system that collects location data from WiFi
access points (APs) in a collection of buildings. Users correspond to
MAC addresses of individual devices that connect to an AP. An item $(i, t)$ in the data stream represents that the fact that user $i$ has 
made at least one connection to an access point at time step $t$. In 
the following table, user \texttt{02:18:98:09:1a:a4} reports a 
connection at times $t=1$ and $t=2$. The example represents a data 
stream prefix $D[3]$.
\begin{center}
    \resizebox{0.5\linewidth}{!} 
    {
    \begin{tabular}{c | c | c | c}
    \toprule
    $D_1$ & $D_2$ & $D_3$ & \ldots\\
    \midrule
    02:18:98:09:1a:a4 & 02:18:98:09:1a:a4 & 05:12:11:0a:30:03 & \\
    05:12:11:0a:30:03 & 05:12:11:0a:30:03 & 11:28:18:a9:1b:a3 & \\
    11:28:18:a9:1b:a3 & 11:28:18:a9:1b:a3 & 07:22:1a:12:31:33 & \\
                      & 07:22:1a:12:31:33 & 01:11:e2:14:43:b2 & \\
                      & 01:11:e2:14:43:b2 &                   &  \\
    \bottomrule
    \end{tabular}
    }
\end{center}
\end{example}

\noindent
The data curator receives information from an unbounded data stream
$D$ in discrete time steps. At time $t$, the curator collects a
dataset $D_t$ of tuples $(i, t)$ where every row corresponds to
a \emph{unique} user. The curator reports the result of a count query
$Q : \cD \to \RR$ that reports how many users are in the dataset at
time $t$, i.e., $Q(D_t) = | \{i : (i,t) \in D_t\}|$.

\begin{example}
\label{ex:2}
Consider the data stream prefix of Example \ref{ex:1}. The associated
result of the count queries executed onto each dataset of the data
stream is given in the following table.

\begin{center}
    \resizebox{0.3\linewidth}{!} 
    {
    \begin{tabular}{ c | c | c | c}
    \toprule
    $Q(D_1)$ & $Q(D_2)$ & $Q(D_3)$ & \ldots\\
    \midrule
    3 & 5 & 4 & \\
    \bottomrule
    \end{tabular}
    }
\end{center}
\end{example}

\noindent
In our target application, the data curator is interested in
publishing every element $Q(D_t)$ for a recurring period of $w$ time
steps. A \emph{$w$-period} is a set of $w$ contiguous time steps
$t\!-\!w\!+\!1, \!\ldots,\!t$ ($w \!\geq\!  1$). Thus, the answers to
each query $Q(D_t)$ are generated in real time for windows of $w$ time
steps. As a result, this paper adopts the $w$-event privacy framework
by \citeA{kellaris:14}.

\subsection{Differential Privacy on Streams} 
\label{sec:dp_on_streams}

The $w$-event privacy framework \cite{kellaris:14} extends the
definition of differential privacy to protect data streams and has
become a standard privacy notion for data streams (see, for instance,
\cite{rastogi:10,dwork:13,andres2013geo,bolot:13,chan:11}).
The framework operates on stream prefixes and two data streams
prefixes $D[t]$ and $D'[t]$ are \emph{w-neighbors}, denoted by
$D[t] \sim_w D'[t]$, if
\begin{enumerate}
\item [{\it i}.] for each $D_i, D_i'$ with $i \in [t]$, $D_i \sim
D_i'$, and

\item [{\it ii}.] for each $D_i, D_i', D_j, D_j'$ such that $i < j \in
    [t]$ and $D_i \neq D_i', D_j \neq D_j'$, $j-i+1 \leq w$ holds.
\end{enumerate}

\noindent In other words, two stream prefixes are w-neighbors if their
elements are \emph{pairwise} neighbors and all the differing elements
are within a time window of up to $w$ time steps. As a result, when
ensuring the privacy guarantees, the $w$-event framework does not
consider data streams where the differences are beyond a time window
of size $w$: It only needs to consider windows of $w$ elements.

\begin{definition}[$w$-privacy]
\label{def:wdp}
Let $\cA$ be a randomized algorithm that takes as input a stream
prefix $D[t]$ of arbitrary size and outputs an element from a set of
possible output sequences $\cO$. Algorithm $\cA$ satisfies $w$-event
$\epsilon$-differential privacy ($w$-privacy for short) if, for all
$t$, all sets $O \subseteq \cO$, and all $w$-neighboring stream
prefixes $D[t]$ and $D'[t]$:
\begin{equation}
\label{eq:wdp_def}
\frac{Pr[\cA(D[t]) \in O]}{Pr[\cA(D'[t]) \in O]} \leq \exp(\epsilon).
\end{equation}
\end{definition}
An algorithm satisfying $w$-privacy protects the sensitive information
that could be disclosed from a sequence of finite length $w$.  When
$w\!=\!1$, $w$-privacy reduces to event-level privacy
\cite{dwork:10} that protects the disclosure of events in a single
time step.

All the properties of differential privacy discussed above carry over
to $w$-privacy. The Laplace mechanism which adds Laplace noise to
each element of the stream with parameter $w \Delta_Q/\epsilon$
achieves $w$-privacy~\cite{kellaris:14}. 

\begin{table}[!t]
    \centering
    {
    \begin{tabular}{rlcrl}
    \toprule
        Symbol & Definition & ~~~ & Symbol & Definition \\
    \midrule
    $\mathcal{U}$   & Data universe                         && $\sim_w$        & neighboring relation \\
    $D$             & Data stream                           && $x_t$           & Result of query $Q(D_t)$\\
    $D_t$           & Dataset at time $t$                   && $\hat{x}_t$     & Private estimate of $x_t$\\
    $D[t]$          & Stream prefixes                       && $\bx$           & Data stream        \\
    $Q$             & A query on a dataset                  && $\hat{\bx}$     & Private data stream\\
    $\Delta_Q$      & Sensitivity of query $Q$              && $r_u^t$         & User's $u$ value at time $t$ \\
    \bottomrule
    \end{tabular}
    }
    \caption{Commonly used symbols and notations.
    \label{tab:symbols}}
\end{table}

\medskip 
The algorithms studied in this paper satisfy Definition \ref{def:wdp}. When clear from the context, the paper uses
$\epsilon$-differential privacy to denote $w$-event privacy.  A list
of common symbols is summarized in Table \ref{tab:symbols}.


\section{OptStream For Stream Release}
\label{sec:algorithm}

This section describes \algname, a novel algorithm for private data
stream release. \algname{} consists of four steps: (1) data sampling,
(2) perturbation, (3) reconstruction of the non-sampled data points,
and (4) optimization-based post-processing.  The algorithm takes as
input the data stream, denoted by $\bD^I = D^I_1, D^I_2,\ldots$, the
period size $w$ whose privacy is to be protected, the privacy budget
$\epsilon$, and some hyper-parameters used by its procedures, which
will be described later.  Its output is a data stream \emph{summary}
$\hat{\bx} = (\hat{x}_1, \hat{x}_2, \ldots)$ where each $\hat{x}_t$
represents a private version of the aggregated real data in
${D}^I_t$. Here, an aggregation operation is one that transforms the
dataset into a numerical representation (e.g., a count). The four
steps can be summarized as follows:

\begin{enumerate}

\item \textsc{Sample}: selects a small set of points for each
  $w$-period. Its goal is to perform a dimensionality reduction over
  the data stream whose sample points can be used to generate private
  answers to each query with low error.

\item \textsc{Perturb}: adds noise to the sampled data points to guarantee privacy.

\item \textsc{Reconstruct}: reconstructs the non-sampling data points
from the perturbed sampled points. Its goal is to map the
di\-men\-sion-reduced data stream back to the original space,
generating thus $w$ data points.

\item \textsc{Post-process}: uses the private output of the above
modules, as well as private answers to additional queries on the
data stream to ensure consistency of salient features of the data.
\end{enumerate}

\noindent
\algname{} balances two types of errors: a \emph{perturbation error},
introduced by the application of additive noise at the sampling
points, and a \emph{reconstruction error}, introduced by the
reconstruction procedure at the non-sampled points.  The higher the
number of samples in a $w$-period, the more perturbation error is
introduced while the reconstruction error may be reduced, and
vice-versa. Section \ref{sec:error_analysis} describes the error
generated by these two components and analyzes the number of samples
that minimizes the error.

\newcommand{\tpmod}[1]{{\@displayfalse\pmod{#1}}}
\let\oldnl\nl
\newcommand{\nonl}{\renewcommand{\nl}{\let\nl\oldnl}}
\def\funname{\textsc{Release}}

\begin{algorithm}[!t]
    \caption{\textsc{\algname} data stream release}
    \label{alg:stream_release}
    \SetKwInOut{Input}{input}
    \SetKwInOut{Output}{output}
    \setcounter{AlgoLine}{0}

    \Input{${\bf D}^I = (D^I_1, D^I_2, \ldots)$: the data stream \\
              $w$: the size of the period\\
              $\epsilon$: the privacy budget \\
              $k, \theta, \sF$: hyperparameters}
    \Output{${\bf \hat{x}} = (\hat{x}_1, \hat{x}_2, \ldots)$: a private data stream summary}

    \DontPrintSemicolon
    \nonl Let $t$ be the current time step\\
    
    \If{$t\pmod{w} \equiv 0$ \label{a1:if}}  {
        $\bD \gets D^I_{t-w+1}, \ldots D^I_{t}$         
            \label{a1:Ddecl}\;
        
        \funname$\,$($\bD$, $k$, $w$,  $\epsilon$, $\theta$, $\sF$)
            \label{a1:funcall}\;
    }

    \SetKwFunction{FMain}{{\textrm\funname}}
    \SetKwProg{Fn}{Function}{:}{}
    \label{a1:fundecl}
    \nonl \Fn{\FMain{$\bD$, $w$, $\epsilon$, $k$, $\theta$, $\sF$}}
        {
        
        $\bx \gets \textit{aggr}\,(D_{1}), \ldots \textit{aggr}\,(D_{w})$
            \label{a1:xdecl}\; 
        $\epsilon_s, \epsilon_p, \epsilon_o  \gets \,$ split budget $\epsilon$ 
        \label{a1:budget}\;

        $S = \textsc{Sample}\,(\bx, \epsilon_s, k, \theta)$
            \label{a1:sample}\;
        
        $\tilde{\bx}_{S} = \textsc{Perturb}\,(\bx[S], \epsilon_p)$
            \label{a1:perturb}\;
        
        $\tilde{\bx} = \textsc{Reconstruct}\,(\tilde{\bx}_{S}, w)$
            \label{a1:reconstruct}\;
        
        $\hat{\bx} = \textsc{PostProcess}\,(\tilde{\bx}, \epsilon_o, \{Q_{\bF}(\bD)\}_{\bF \in \sF})$
            \label{a1:postprocess}\;
        

        release $\hat{\bx}$
            \label{a1:end}\;
    }
\end{algorithm}


\algname{} processes the data stream in consecutive and disjoint
$w$-periods. To simplify notation, throughout this section, the paper
uses $\bD = D_1, \ldots, D_w$ and $\bx = x_1, \ldots, x_w$ to denote,
respectively, the current w-period being processed, and its univariate
discrete series representation, where each $x_i$ denotes the result of
an aggregation function $\textit{aggr}(D_i)$ that transforms the
dataset into a numerical representation. Similarly, we use $\bDp$ and
$\bxp$ to denote their neighboring counterparts.  Additionally, given
a set of time steps indexes $S$, $\bx[S]$ denotes the collection of
points $\{x_i | i \in S\}$. For simplicity, we will assume that the
sensitivity of the aggregation operations is one. The results directly
generalize to arbitrary sensitivities.

\algname{} is depicted in Algorithm \ref{alg:stream_release}.  When a
new $w$-period is observed (line \ref{a1:if}), the algorithm extracts
the relevant portion of the stream (line \ref{a1:Ddecl}) and calls
function \funname{}, which releases a private version of the data
stream in the current w-period (line \ref{a1:funcall}).

In addition to the portion $\bD$ of the data stream to release, the size
of the w-period $w$, and the privacy budget $\epsilon$, function
\funname{} takes, as inputs, three hyperparamers: $k, \theta,$ and
$\sF$. Parameters $k$ and $\theta$ are used by procedure \textsc{Sample}
and represent the maximum number of data points to extract
in the $w$-period and a threshold value respectively. Parameter $\sF$
is a set of \emph{features queries} used by procedure
\textsc{PostProcess}; these queries and their uses will be covered in
detail later. The four steps of Function \funname{} operate on a
discrete series representation $\bx$ of the data stream (line
\ref{a1:xdecl}).  The privacy budget to be used in each step is
computed in line \eqref{a1:budget}. Procedure \textsc{Sample} 
takes as input the sequence of values $\bx$, the maximum number $k$ of
data points to sample, and the portion $\epsilon_s \geq 0$ of the
privacy budget $\epsilon$ used in the sampling
process\footnote{$\epsilon_s$ can be 0 if the sampling processes does
  not access the real data to make a decision on which points to
  sample.} (line \ref{a1:sample}). It outputs the set $S$ of indexes
associated with the values of $\bx$ whose privacy must be protected.
Procedure \textsc{Perturb} takes, as input, the vector $\bx[S]$ of
sampled data points from $\bx$ and outputs a noisy version
$\tilde{\bx}_{S}$ of $\bx$, using a portion $\epsilon_p > 0$ of the
overall privacy budget (line \ref{a1:perturb}).  Procedure
\textsc{Reconstruct} takes, as inputs, the vector $\tilde{\bx}_{S}$
from the perturbation step and the size $w$ of the period and outputs
a vector $\tilde{\bx}$ of size $w$ whose values are private estimates
of the data stream in the $w$-period (line
\ref{a1:reconstruct}). Next, procedure \textsc{PostProcess} takes as
input the vector of points $\tilde{\bx}$, additional feature queries
$Q_{\bF}({\bD})$ for each feature $\bF$ in the set of data features
$\sF$ (which are defined in detail in Section
\ref{sec:post_process_procedure}), and uses a portion $\epsilon_o > 0$
of the overall privacy budget $\epsilon$ to compute a final estimate
$\hat{\bx}$ of $\bx$ (line \ref{a1:postprocess}).  Finally, the private estimates $\hat{\bx}$ are released (line \ref{a1:end}).

\begin{lemma}
    \label{thm:privacy_algorithm}
    Let $\epsilon_s + \epsilon_p + \epsilon_o = \epsilon$.  When the \textsc{Sample},  \textsc{Perturb}, and \textsc{Post-Process} procedures satisfy $\epsilon_s$-, $\epsilon_p$-, and $\epsilon_o$-differential privacy, respectively, Algorithm \ref{alg:stream_release} satisfies
    $\epsilon$-differential privacy.
\end{lemma}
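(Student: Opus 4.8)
The plan is to show that Algorithm~\ref{alg:stream_release}, viewed as an algorithm over a single $w$-period, is the composition of the four sub-procedures, and then to combine sequential composition (Theorem~\ref{th:composabilty}) with post-processing immunity (Theorem~\ref{th:postprocessing}) to obtain the $\epsilon$-bound; finally I would lift this from a single $w$-period to arbitrary stream prefixes using the fact that \algname{} processes disjoint $w$-periods, invoking parallel composition (Theorem~\ref{th:composabilty2}).

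First I would fix the current $w$-period $\bD = D_1,\dots,D_w$ and analyze function \funname{}. The vector $\bx$ on line~\ref{a1:xdecl} is a deterministic function of $\bD$, and by assumption the sensitivity of each aggregation is one. The three random sub-procedures that touch the data are \textsc{Sample} (using $\epsilon_s$), \textsc{Perturb} (using $\epsilon_p$), and \textsc{PostProcess} (using $\epsilon_o$); \textsc{Reconstruct} is data-independent given $\tilde{\bx}_S$. However, these procedures are run sequentially with each one consuming the output of the previous, so I cannot directly invoke Theorem~\ref{th:composabilty} as stated (that theorem is for procedures each applied to $D$ independently). The clean way around this is to argue inductively along the pipeline: treat the triple $\big(\textsc{Sample}(\bx),\textsc{Perturb}(\bx[S]),\textsc{PostProcess}(\tilde{\bx},\dots)\big)$ as an \emph{adaptive} composition, where each stage is $\epsilon_s$-, $\epsilon_p$-, $\epsilon_o$-DP \emph{conditioned on} the previous outputs, and note that the standard sequential composition theorem extends to this adaptive setting (this is the usual statement in \citeA{dwork:13}). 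Then the pair (Sample output, Perturb output) is $(\epsilon_s+\epsilon_p)$-DP, and composing with PostProcess gives $(\epsilon_s+\epsilon_p+\epsilon_o)$-DP $=\epsilon$-DP. The intermediate \textsc{Reconstruct} step and the final aggregation/release are post-processing, hence harmless by Theorem~\ref{th:postprocessing}. One subtlety to flag: \textsc{PostProcess} also consumes the feature-query answers $\{Q_{\bF}(\bD)\}$, which are functions of the raw data; these must be understood as part of what \textsc{PostProcess} privatizes within its $\epsilon_o$ budget, so the lemma's hypothesis "\textsc{PostProcess} satisfies $\epsilon_o$-DP" is taken to mean $\epsilon_o$-DP as a randomized map of $\bD$ given $\tilde{\bx}$.

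Next I would extend to the full stream. A stream prefix $D[t]$ is partitioned by \algname{} into disjoint $w$-periods $\bD^{(1)},\bD^{(2)},\dots$, and the released summary is the concatenation of the per-period outputs, each computed independently on its own period with a fresh budget $\epsilon$. By the definition of $w$-neighbors (Definition~\ref{def:wdp}), any two $w$-neighboring prefixes $D[t]\sim_w D'[t]$ differ only within a window of at most $w$ consecutive time steps, which can overlap at most \emph{two} consecutive $w$-periods in the worst case --- but the standard $w$-event argument (as in \citeA{kellaris:14}) shows that across the periods the differing elements are confined so that the per-period privacy losses in the affected periods sum to at most $\epsilon$. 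Here, since \algname{} uses disjoint periods and each period gets the \emph{full} budget $\epsilon$, I would argue via parallel composition (Theorem~\ref{th:composabilty2}): the periods are disjoint subsets of the stream, each processed $\epsilon$-DP, so the overall release is $\epsilon$-DP in the $w$-event sense. (If the window straddles two periods, one invokes the more careful $w$-event accounting; I would cite the fact, already stated in the preliminaries, that adding Laplace noise per element with parameter $w\Delta_Q/\epsilon$ achieves $w$-privacy, and observe our per-period budgeting is the analogous construction.)

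The main obstacle is the gap between the \emph{non-adaptive} sequential composition as literally stated in Theorem~\ref{th:composabilty} and the \emph{adaptive, pipelined} structure of \funname{}, where \textsc{Perturb} sees the sampled index set $S$ and \textsc{PostProcess} sees $\tilde{\bx}$. Making that rigorous requires either citing the adaptive version of sequential composition or, equivalently, unrolling the argument by conditioning on each intermediate output and applying the two-line definition~\eqref{eq:dp_def} stage by stage; I expect the paper's proof to do the latter in a few lines. A secondary (minor) obstacle is being precise that \textsc{Reconstruct} and the $\textit{aggr}$/release steps are genuinely data-independent post-processing of the already-private quantities, so that Theorem~\ref{th:postprocessing} applies verbatim.
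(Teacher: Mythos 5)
Your per-period argument is exactly the justification the paper relies on: the lemma is stated without an explicit proof and is intended to follow immediately from sequential composition (Theorem~\ref{th:composabilty}) applied to \textsc{Sample}, \textsc{Perturb}, and \textsc{PostProcess}, plus post-processing immunity (Theorem~\ref{th:postprocessing}) for \textsc{Reconstruct}, the aggregation, and the release. Your point that the composition is adaptive (each stage consumes the previous stage's output) and therefore needs the adaptive form of sequential composition, or a stage-by-stage conditioning on intermediate outputs, is a legitimate refinement of what the paper leaves implicit, not a different route.

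The step that does not hold up is the lift from one $w$-period to arbitrary $w$-neighboring prefixes via parallel composition. Theorem~\ref{th:composabilty2} is about a single changed tuple, which necessarily lands in exactly one of the disjoint subsets; under $\sim_w$ the differing timesteps may straddle two consecutive $w$-periods, both of which are then processed with a fresh full budget $\epsilon$, and the disjointness argument only bounds that pair of outputs by $2\epsilon$, not $\epsilon$. Your fallback appeal to the ``standard $w$-event accounting'' of \citeA{kellaris:14} does not close this either: that accounting requires the budget spent on sub-mechanisms whose inputs intersect \emph{any} sliding window of $w$ steps to sum to at most $\epsilon$, which holds for the per-element Laplace scheme (budget $\epsilon/w$ per step) but not for block-level queries such as the SVT sampling and the feature queries, whose entire $\epsilon_s$ and $\epsilon_o$ are sensitive to a change anywhere in the block. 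The paper never attempts this lift: the hypothesis of the lemma is about the three sub-procedures acting on the current period, and the guarantee it establishes (and later builds on, e.g., for the hierarchical extension) is the composition bound for the release of a single $w$-period, with the protected windows being the algorithm's aligned periods. So either stop where the paper's argument stops, or acknowledge that extending to arbitrarily placed windows needs a genuinely different argument (or a factor-of-two loss) rather than Theorem~\ref{th:composabilty2}.
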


\noindent 
Any sampling, perturbation, and reconstruction algorithms can be used
within Algorithm \ref{alg:stream_release}, provided that they achieve
the intended purpose and satisfy the required privacy guarantees. The
next section describes two variants of the sampling and reconstruction
algorithms that may reduce the error (Section
\ref{sec:error_analysis}) and are shown to perform well experimentally
(Section \ref{sec:experiments}). The perturbation procedure is a
standard application of the Laplace mechanism (Theorem \ref{th:m_lap})
on the set of sampled data points and parallel composition (Theorem
\ref{th:composabilty2}). The post-processing step is described in
Section \ref{sec:post_process_procedure}.

\subsection{The Sampling Procedures}
\label{sec:sample_procedure}

The goal of the sampling procedure is to select $k$ points  of the given $w$-period that summarize the entire data stream period well.  This section considers two strategies.


\paragraph{Equally-Spaced Sampling}
\label{sec:equally_spaced_sampling}

A first strategy is to sample $k$ equally-spaced data points in the
$w$-period. Since this approach does not inspect the values of the
data stream to make its decisions, it does not consume any privacy
budget, i.e., $\epsilon_s=0$.

\paragraph{$L_1$-Based Sampling}
\label{sec:auc_based_sampling}

The second strategy also selects $k$ out of $w$ points but tries to
minimize the error between the original data points and the points
generated by a linear interpolation of the $k$ selected points.  Let
$\xi^{[i,j]}$ be a function capturing the line segment between two
points $x_i$ and $x_j$, i.e.,
\[
\xi^{[i,j]}(t) = (t - i) \frac{x_j - x_i}{j - i} + x_i 
\]
for $t \in[i, j]$. Consider an ordered sequence $S = (\iota_1, \ldots,
\iota_k)$ of $k$ indexes in $[w]$ and define $\xi^S$ as a piecewise
linear function whose pieces are line segments between every two
adjacent points in $S$,
\begin{equation}
\xi^S(t) = 
\begin{cases}
   \xi^{[\iota_1, \iota_2]}(t) & \text{if } t \in [\iota_1, \iota_2] \\
   \xi^{[\iota_2, \iota_3]}(t) & \text{if } t \in [\iota_2, \iota_3] \\
   \quad\vdots & \notag \\
   \xi^{[\iota_{k-1}, \iota_{k}]}(t) & \text{if } t \in [\iota_{k-1}, \iota_{k}].
\end{cases}
\end{equation} 
Define the $L_1^{[a,b]}$-scoring function for $1 \leq a < b \leq w$ as,
\begin{equation}
    \label{eq:f1score}
    L_1^{[a, b]}(\bD) =  \sum_{i=a}^{b} \left| \xi^{[a, b]}(i) - x_i \right|.
\end{equation}

\noindent
The \emph{$L_1$-based sampling} procedure aims at selecting a sequence
$S$ of $k$ indexes $\iota_1 < \iota_2 < \ldots < \iota_k$ in a
$w$-period that minimizes the \emph{$\textrm{L}_1^S$-scoring function}
defined as, 
\begin{equation}
\label{eq:auc_score}
\min_{S} \textrm{L}_1^S (\bD) = \min_{\iota_1 < \iota_2 < \ldots < \iota_k} \sum_{j=1}^{k-1} L_1^{[\iota_j, \iota_{j+1}]}(\bD).
\end{equation}

\begin{figure}[!t]
\centering
\includegraphics[width=0.6\linewidth]{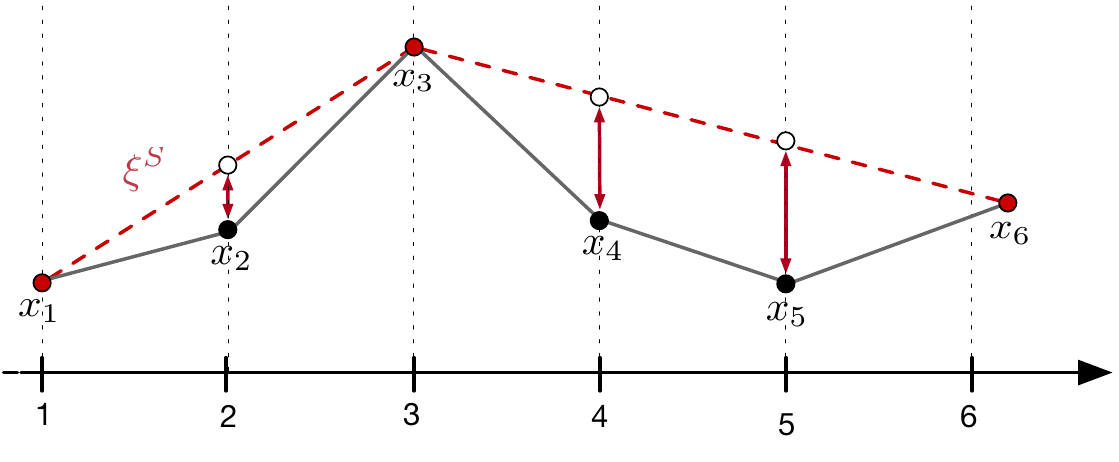}
\caption{Illustration example of the $L_1^S$  function, with $S = {1, 3, 6}$. The solid black curve connects the data point $\bx$, the dashed red curve denotes the function $\xi^S$, and the red arrows denote the distance between the points in $\bx$ and the function $\xi^S$. The $L_1^S(\bD)$ value is the sum of these distances.
\label{fig:l1error}}
\end{figure}


\noindent Figure \ref{fig:l1error} provides an example with a
graphical illustration of the $L_1^S$ function. The black solid curve
delineate $\bx$. The set  $S=(1,3,6)$ and the sampled points $x_1, x_3$, and
$x_6$ are colored red. $\xi^S$ is the piecewise linear function
passing through the sampled points and is represented with dashed red
lines. The red arrows denote the distance between the points in $\bx$
and their corresponding values in $\xi^S$ and the $L_1^S$-score for
the set $S$ is the sum of these distances.  The procedure assumes that
the first and the last points of $\bx$ are in $S$.

Intuitively, the set of $k$ points that minimizes
$\eqref{eq:auc_score}$ produce the minimal \emph{reconstruction error}
when linear interpolation is used as a reconstruction
procedure. However, finding the set of $k$ points minimizing the
$L_1$-scoring function may be computationally expensive. Hence this
paper presents a differentially-private greedy algorithm that
approximates \eqref{eq:auc_score}. The sampling procedure is depicted
in Algorithm \ref{alg:sampler} and is an instantiation of the SVT
mechanism where the queries compute the $L_1$-scores of the potential
interpolation steps.
\begin{algorithm}[!t]
    \SetKwInOut{Input}{input}
    \SetKwInOut{Output}{output}
    \Input{$\bx$: the data stream  seen in the current $w$-period\\  
           $k$: the number of samples\\
           $\epsilon_s$: the privacy budget\\
           $\theta$: a threshold } 
    \Output{$S$: A sample set of $k$ points in $[w]$}
    \setcounter{AlgoLine}{0}

    $S = \{1\}$; \ $t_p = 1$ \\
    $\rho = \text{Lap}(2\Delta_L/\epsilon_s)$\\
    \For{$i = 2 \ldots, w$} {
        $\mu_i = \text{Lap}(4k \Delta_L / \epsilon_s)$\\
        \If{$L_1^{[t_p, i]}(\bx) + \mu_i \geq \theta + \rho$} {
            $S = S \cup \{ i \}$\\
            $t_p = i$
        }
	\If{$w - i \leq k - |S|$} {
            $S = S \cup \{ j \ \mid \ i < i \leq w \}$ \\
	}		
        \textbf{Break} if $|S| = k$ 
    }
    \Return{$S$}
    \caption{\textsc{Sample} - Adaptive AUC-based sampler}
    \label{alg:sampler}
\end{algorithm}

The mechanism takes, as inputs, the data stream $\bx$ processed for
the current $w$-period, the number of points $k$ to sample for
measurements, along with the privacy budget $\epsilon_s$ and a user
defined threshold $\theta \geq 0$ which influences the acceptable
$L_1$-score for choosing the next point to sample. The values choice
for $\theta$ in our application of interest are detailed in Section
\ref{sec:experiments}. Line (1) initializes the set of sample points
$S$ with the first element of the $w$-period (this choice is necessary
for executing the interpolation in the next steps), and it tracks the
last point $t_p$ selected for sampling.  The algorithm first
generates the noise $\rho$ (line 2) for the threshold $\theta$ (line
5). For all but the first time step, the algorithm adds Laplace noise
with parameter $4k\Delta_L/\epsilon$ to the $L_1^{[t_p, i]}$-query
(line 4), where $\Delta_L$ is the largest sensitivity associated with
any of the $L_1$ scoring functions invoked by the algorithm (see
Theorem \ref{thm:auc_sensitivity}).  If the result is above the
threshold, then point $i$ is added to $S$ (line 6) and the
last selected point $t_p$ is updated (line 7). The mechanism keeps
track of the number of index points already stored. It stops when the
size of $S$ matches $k$ (line 10). The mechanism also tests if there
are enough points to reach $k$ (line 8) and adds the remaining points
if needed (line 9).

To run the mechanism, it is necessary to determine the sensitivity
$\Delta_{L_1}^{[a,b]}$ of the $L_1$-score defined in
\eqref{eq:f1score}, i.e.,
\[
\Delta_{L_1}^{[a,b]} = \displaystyle \max_{\bD \sim_w \bD'} \left| L_1^{[a,b]}(\bD') - L_1^{[a,b]}(\bD) \right|
\]
where $\bD$ and $\bD'$ are two w-neighboring data streams in a
$w$-period.

The following theorem shows that $\Delta_{L_1}^{[a,b]}$ can be
bounded, yielding a procedure to privately sample $k$ points in the
$w$-period using an efficient, suboptimal, version of
Equation~\eqref{eq:auc_score}.

\begin{theorem}
\label{thm:auc_sensitivity}
For an arbitrary $w$-period and fixed indexes $a, b \in [w]$ with $a<b$, the sensitivity $\Delta_{L_1}^{[a,b]}$ of the $L_1^{[a,b]}$ score is bounded by $2(b-a)$.
\end{theorem}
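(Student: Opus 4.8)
The plan is to reduce the claim to an elementary, term-by-term estimate on the discrete series. First I would fix a $w$-period together with two $w$-neighboring streams $\bD \sim_w \bDp$ restricted to that period, and pass to their series representations $\bx = (x_1, \ldots, x_w)$ and $\bxp = (x'_1, \ldots, x'_w)$, where $x_i = \textit{aggr}(D_i)$. Since $D_i \sim D'_i$ for every $i$ and the aggregation operation has sensitivity one, $|x_i - x'_i| \leq 1$ holds for all $i \in [w]$; observe that because a $w$-period consists of exactly $w$ time steps, the window condition in the definition of $\sim_w$ is automatically satisfied, so every coordinate is free to move.

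Next, writing $\xi = \xi^{[a,b]}$ for the interpolant determined by $\bx$ and $\xi'$ for the one determined by $\bxp$, I would expand $L_1^{[a,b]}(\bxp) - L_1^{[a,b]}(\bx) = \sum_{i=a}^{b}\bigl(|\xi'(i) - x'_i| - |\xi(i) - x_i|\bigr)$, apply the triangle inequality over the sum, and then the reverse triangle inequality to each summand, obtaining
\[
\bigl|L_1^{[a,b]}(\bxp) - L_1^{[a,b]}(\bx)\bigr| \;\leq\; \sum_{i=a}^{b}\Bigl(|\xi'(i) - \xi(i)| + |x'_i - x_i|\Bigr).
\]
The key observation is that $\xi(i) = \tfrac{b-i}{b-a}\,x_a + \tfrac{i-a}{b-a}\,x_b$ is a convex combination of $x_a$ and $x_b$, and likewise for $\xi'(i)$, so $|\xi'(i) - \xi(i)| \leq \tfrac{b-i}{b-a}|x'_a - x_a| + \tfrac{i-a}{b-a}|x'_b - x_b| \leq 1$. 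Combined with $|x'_i - x_i| \leq 1$, each summand is at most $2$.

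Finally I would tighten the count: the boundary terms $i = a$ and $i = b$ contribute exactly zero, since $\xi(a) = x_a$, $\xi(b) = x_b$ and the same holds for $\xi'$, so only the $b - a - 1$ interior indices survive. Hence $\bigl|L_1^{[a,b]}(\bxp) - L_1^{[a,b]}(\bx)\bigr| \leq 2(b - a - 1) \leq 2(b-a)$, and taking the maximum over all $\bD \sim_w \bDp$ gives $\Delta_{L_1}^{[a,b]} \leq 2(b-a)$.

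I do not expect a deep obstacle. The main subtlety — the closest thing to one — is the coupled dependence of $\xi^{[a,b]}$ on the endpoint values $x_a$ and $x_b$: one must not treat the interpolating line as fixed when passing from $\bx$ to $\bxp$, and the convexity bound above is exactly what controls the resulting drift of that line. The other mildly delicate step is the bookkeeping that discards the two boundary terms — without it the naive count yields only the weaker bound $2(b - a + 1)$ — which is precisely what keeps the stated constant at $2(b-a)$.
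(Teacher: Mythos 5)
Your proof is correct and shares the paper's overall skeleton: bound the change in $L_1^{[a,b]}$ term by term, observe that the endpoint terms $i=a$ and $i=b$ vanish because the interpolant passes through them exactly, and bound each interior term by $2$ using $|x'_i-x_i|\le 1$ and $|\xi'(i)-\xi(i)|\le 1$. Where you differ is in how the per-term bound is obtained: the paper runs a four-way case analysis on the signs of $\xi'(i)-x'_i$ and $\xi(i)-x_i$, whereas you get it in one line from the reverse triangle inequality, $\bigl||\xi'(i)-x'_i|-|\xi(i)-x_i|\bigr|\le\bigl|(\xi'(i)-\xi(i))-(x'_i-x_i)\bigr|\le 2$, which subsumes all of the paper's cases. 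You also make the step $|\xi'(i)-\xi(i)|\le 1$ fully explicit by writing $\xi(i)$ as the convex combination $\frac{b-i}{b-a}x_a+\frac{i-a}{b-a}x_b$ of the endpoint values, a fact the paper merely asserts from the endpoint perturbations, and your bookkeeping of the two vanishing boundary terms even yields the slightly sharper constant $2(b-a-1)$, which of course implies the stated bound $2(b-a)$.
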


\def \auc{L_1^{[a,b]}(\bD)}
\def \aucp{L_1^{[a,b]}(\bD')}

\begin{proof} Consider two data stream $w$-periods $\bD$ and $\bD'$
such that $\bD \sim_w \bD'$ and focus, without loss of generality, on
their associated stream counts $\bx$ and $\bx'$. Let $\xi$ and $\xi'$
be shorthands for $\xi^{[a,b]}$ and $\xi'^{[a,b]}$. The goal is to
bound 
\[ 
    \left|\aucp - \auc\right| = \sum_{i = 1}^b 
        \left| |\xi'(i) - x'_i| - |\xi(i) - x_i| \right|,
\] 
and each term of the summation can be bounded independently. If $i = a\, \lor\, i = b$, then $\xi'(i) = x'_i$ and
$\xi(i) = x_i$, since $x'_i$ and $x_i$ are interpolated exactly and
$\displaystyle\left| |\xi'(i) - x'_i| - |\xi(i) - x_i| \right| = 0.$ Otherwise,
note that $|x'_i - x_i| \leq 1$ since $D_i \sim D'_i = 1$ for all $i
\in [w]$ by definition of $w$-event privacy. Moreover, since the pairs
($\xi(a), \xi'(a))$ and $(\xi(b), \xi'(b))$ differ by at most 1, it
follows that $|\xi'(i) - \xi(i)| \leq 1$ for every point $i \in
[a,b]$. There are four cases to consider.

\def\ai{\xi(i)}
\def\aj{\xi'(i)}
\def\ci{x_i}
\def\cj{x'_i}
\noindent
({\it 1}) If $\xi'(i) \geq x'_i$ and $\xi(i) \geq x_i$, then 
\begin{align*}
\left| |\xi'(i) - x'_i| - |\xi(i) - x_i| \right|
      &=  \left| (\xi'(i) - x'_i) - (\xi(i) - x_i) \right| \\ 
      &=  \left| (\xi'(i) - \xi(i)) - (x'_i - x_i) \right| \\
      &\leq 2.
\end{align*}
({\it 2}) The case $\xi'(i) < x'_i$ and $\xi(i) < x_i$ is symmetric. ({\it 3}) The case
$\xi'(i) \geq x'_i$ and $\xi(i) < x_i$ requires a further case analysis.
\begin{itemize}

\item[(\textit{i})] If $\aj \leq \ai$, we have $\cj \leq \aj \leq \ai < \ci$. Since
$|\ci-\cj| \leq 1$, it follows that $|\aj - \cj| \leq 1$ and $|\ai -
\ci| \leq 1$. Therefore
\[
\displaystyle \left| |\xi'(i) - x'_i| - |\xi(i) - x_i| \right| \leq 1.
\]
\item[(ii)] If $\aj > \ai$, then $\ai+1 \geq \aj > \ai$, since $|\ai - \aj| \leq 1$. It follows that
\[
                \cj \leq \aj \leq \ai + 1 < \ci + 1.
\]
Since $|\ci -\cj| \leq 1$ and  $|\ai - \aj| \leq 1$, $|\aj - \cj| \leq 1$ and $|\ai - \ci| \leq 1$ and therefore
\[
\displaystyle \left| |\xi'(i) - x'_i| - |\xi(i) - x_i| \right| \leq 1.
\]
\end{itemize}

\noindent
({\it 4}) Finally, the last case, $\xi'(i) < x'_i$ and $\xi(i) \geq x_i$, is symmetric to previous one. Therefore
\begin{align*}
\Delta_{L_1}^{[a,b]} = \max_{\bx \sim_w \bx'} |\aucp - \auc| \leq 2(b-a).
\end{align*}
\end{proof}

Let $S = (\iota_1, \ldots, \iota_k)$ be a sequence of $k$ indexes in
$[w]$. Because of the stopping criteria, the largest
contiguous interval $[\iota_i, \iota_j]$ for $i, j \in [k]$ has size
$w-k$. The sensitivity of the $L_1$-score on such an interval is
bounded by $2(w-k)$. Algorithm \ref{alg:sampler} thus uses
\[
\Delta_L = 2(w-k).
\]

\begin{theorem}
\label{thm:privacy_sampling}
Algorithm~\ref{alg:sampler} is $\epsilon_s$-differential private.
\end{theorem}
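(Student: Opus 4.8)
The plan is to recognize Algorithm~\ref{alg:sampler} as a concrete instantiation of the Sparse Vector Technique of Algorithm~\ref{alg:sparse_vector} and then to invoke Theorem~\ref{thm:svt} together with post-processing immunity (Theorem~\ref{th:postprocessing}). Concretely, I would identify the SVT ingredients as follows: the dataset is $\bx$ (equivalently $\bD$); the $i$-th query is $q_i(\bx) = L_1^{[t_p, i]}(\bx)$, where $t_p$ is the last accepted index; every threshold $\theta_i$ equals the single value $\theta$; the cap on positive answers is $k$; and the budget is $\epsilon_s$. The noise drawn in Algorithm~\ref{alg:sampler} --- namely $\rho \sim \Lap(2\Delta_L/\epsilon_s)$ on the threshold and $\mu_i \sim \Lap(4k\Delta_L/\epsilon_s)$ on each query answer --- is exactly the noise prescribed by SVT with per-query sensitivity $\Delta_Q = \Delta_L$. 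I would also note that the queries here are chosen adaptively (query $i$ depends on $t_p$, which is a function of the answers already produced), and that this is permitted: the SVT privacy proof goes through unchanged for an adaptively chosen query stream.

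The step that requires care is verifying that $\Delta_L = 2(w-k)$ is a valid uniform upper bound on the sensitivity of \emph{every} query the algorithm can actually invoke. By Theorem~\ref{thm:auc_sensitivity}, the sensitivity of $L_1^{[t_p,i]}$ is at most $2(i - t_p)$, so it suffices to show $i - t_p \leq w - k$ whenever $L_1^{[t_p,i]}$ is evaluated. This follows from the bookkeeping in lines 8--9: if $|S| = m$ when iteration $i$ evaluates its query, then the force-add rule $w - i \leq k - |S|$ has not yet fired, i.e.\ $i \leq w - k + m$; and since $S$ consists of $m$ distinct indices in $[1,w]$ with smallest element $1$, its largest element $t_p$ satisfies $t_p \geq m$. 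Combining gives $i - t_p \leq (w - k + m) - m = w - k$, hence $\Delta_{L_1}^{[t_p,i]} \leq 2(w-k) = \Delta_L$, so $\Delta_L$ is a correct worst-case sensitivity for all invoked queries.

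With the SVT ingredients matched, Theorem~\ref{thm:svt} yields that the sequence of accept/reject decisions $(a_2, a_3, \ldots)$ produced by the comparisons in Algorithm~\ref{alg:sampler} is $\epsilon_s$-differentially private. It then remains to argue that the returned set $S$ is a function of this sequence alone together with the data-independent quantities $w$ and $k$: the initialization $S = \{1\}$, the update $S \gets S \cup \{i\}$ on a $\top$, the force-add of the remaining indices in line 9 (triggered by a condition on $|S|$, $i$, and $w$ only), and the termination test $|S| = k$ all depend on the private data only through the SVT output. Hence $S$ is a post-processing of an $\epsilon_s$-differentially private output, and Theorem~\ref{th:postprocessing} concludes that Algorithm~\ref{alg:sampler} is $\epsilon_s$-differentially private. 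The main obstacle is the uniform sensitivity bound verified in the second paragraph; the rest is a direct translation to the SVT template.
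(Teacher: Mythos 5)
Your proposal is correct and follows essentially the same route as the paper: the paper's proof likewise reduces Algorithm~\ref{alg:sampler} to the SVT mechanism (Theorem~\ref{thm:svt}) combined with the sensitivity bound of Theorem~\ref{thm:auc_sensitivity} and the choice $\Delta_L = 2(w-k)$, whose justification via the stopping criterion is exactly the interval-length argument you spell out. Your write-up is simply a more detailed version of that argument, adding the explicit SVT ingredient matching, the adaptive-query remark, and the post-processing step for assembling $S$.
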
 

\begin{proof} The proof is a direct consequence of the correctness of
the SVT mechanism \cite{hardt2010multiplicative}, Theorem
\ref{thm:auc_sensitivity}, and the definition of $\Delta_L$.
\end{proof}

\subsection{The Perturbation Procedure}
\label{sec:perturb_procedure}

Given the set $S$ of $k$ sampling indexes for a $w$-period, the
perturbation process takes as input the $k$-ary vector of the data
stream measurements $\bx[S]$ and outputs a noisy version
$\tilde{\bx}_S$ of such vector satisfying $\epsilon_p$-differential
privacy. The process simply applies the Laplace mechanism
with parameter $k \Delta_A/\epsilon_p$, where $\Delta_A$ is the
sensitivity of the aggregation query.

\begin{theorem}
\label{thm:privacy_perturb}
   \textsc{Perturb} satisfy $\epsilon_p$-differential privacy.
\end{theorem}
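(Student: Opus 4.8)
The plan is to recognize \textsc{Perturb} as a single instance of the Laplace mechanism (Theorem~\ref{th:m_lap}) applied to the vector-valued query that reads off the sampled aggregates, so that the only substantive step is bounding that query's $L_1$-sensitivity under the $w$-event neighboring relation. Fix a $w$-period and a sample set $S=(\iota_1,\dots,\iota_k)$, and let $Q_S(\bD)=\big(\textit{aggr}(D_{\iota_1}),\dots,\textit{aggr}(D_{\iota_k})\big)=\bx[S]\in\RR^k$. Since \textsc{Perturb} outputs $\bx[S]+z$ with $z\sim\Lap(k\Delta_A/\epsilon_p)^k$, it is exactly $\MLap$ run on $Q_S$, and by Theorem~\ref{th:m_lap} it suffices to show $\Delta_{Q_S}\le k\Delta_A$.

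First I would unpack the $w$-neighbor relation restricted to a single $w$-period. If $\bD\sim_w\bDp$ and both are sequences of exactly $w$ datasets, then condition~(ii) of $w$-neighboring is vacuous (any two differing indices already lie within a window of $w$ steps), so the relation reduces to $D_i\sim D'_i$ for every $i\in[w]$. In the worst case all $w$ datasets, hence all $k$ sampled ones, differ; for each sampled index $\iota_j$ the datasets $D_{\iota_j}$ and $D'_{\iota_j}$ are neighbors, so $|\textit{aggr}(D_{\iota_j})-\textit{aggr}(D'_{\iota_j})|\le\Delta_A$ by definition of the aggregation sensitivity. Summing the $k$ coordinates gives $\|Q_S(\bD)-Q_S(\bDp)\|_1\le k\Delta_A$, i.e.\ $\Delta_{Q_S}\le k\Delta_A$. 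This is exactly why the noise must be scaled by $k$ rather than by $1$ as it would be under plain differential privacy: under the $w$-event relation all $k$ sampled coordinates may move simultaneously.

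With the sensitivity bound in hand the conclusion is immediate: the scale $k\Delta_A/\epsilon_p$ is at least $\Delta_{Q_S}/\epsilon_p$, so Theorem~\ref{th:m_lap} yields $\epsilon_p$-differential privacy for \textsc{Perturb} (using a noise scale no smaller than the one prescribed only strengthens the guarantee). Equivalently, one may view the $k$ independent draws as $k$ Laplace mechanisms acting on the disjoint datasets $D_{\iota_1},\dots,D_{\iota_k}$, each $(\epsilon_p/k)$-private for its own dataset; composing their likelihood ratios over a $w$-neighboring pair multiplies at most $k$ factors of $\exp(\epsilon_p/k)$, again giving $\exp(\epsilon_p)$.

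The single delicate point is the sensitivity step, so that is where I would be careful. A secondary subtlety, which the statement itself lets us sidestep, is that $S$ is data-dependent for the $L_1$-based sampler: here $S$ is treated as a fixed input to \textsc{Perturb}, and its data-dependence is accounted for separately through the $\epsilon_s$ term in Lemma~\ref{thm:privacy_algorithm}, so nothing further is required for this theorem.
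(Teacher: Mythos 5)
Your argument is correct, and at the mechanism level it coincides with the paper's: \textsc{Perturb} is a single application of the Laplace mechanism with scale $k\Delta_A/\epsilon_p$ to the sampled coordinates. The difference is in how the factor $k$ is justified. The paper dismisses the theorem in one sentence, citing Theorem~\ref{th:m_lap} ``on a set of $k$ points'' together with \emph{parallel} composition (Theorem~\ref{th:composabilty2}); you instead bound the $L_1$-sensitivity of the vector query $Q_S(\bD)=\bx[S]$ under $\sim_w$ by $k\Delta_A$ (equivalently, sequential composition of $k$ Laplace mechanisms at budget $\epsilon_p/k$ each over the disjoint datasets $D_{\iota_1},\dots,D_{\iota_k}$) and then invoke Theorem~\ref{th:m_lap} once. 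Your accounting is the more transparent one for the $w$-event relation: since $\sim_w$ restricted to a single $w$-period only requires $D_i\sim D'_i$ pairwise (condition~(ii) is indeed vacuous inside one window), all $k$ sampled coordinates may change simultaneously, so the per-coordinate losses add---which is sequential, not parallel, composition, and is precisely what forces the scale $k\Delta_A/\epsilon_p$. A literal application of parallel composition to the disjoint per-time-step datasets would suggest that scale $\Delta_A/\epsilon_p$ already suffices, which is not sound under $\sim_w$; your write-up makes explicit why the algorithm's larger scale is the right one. Your final remark that the data-dependence of $S$ is charged to $\epsilon_s$ via Lemma~\ref{thm:privacy_algorithm}, so that $S$ may be treated as fixed here, is also the correct way to scope the claim.
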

The above result follows by straightforward application of the Laplace mechanism (Theorem \ref{th:m_lap}) on a set of $k$ points and parallel composition (Theorem \ref{th:composabilty2}).

\subsection{The Reconstruction Procedure}
\label{sec:reconstruct_procedure}

The {\sc Reconstruct} procedure takes as input the noisy measurements
$\tilde{\bx}_{S} \in \RR^k$ at the sample points $S$ in $\bx$ and
outputs a vector $\tilde{\bx} \in \RR^w$ of private estimates for the
sub-stream $\bx$.  Each value $\tilde{x}_i$ of $\tilde{\bx}$ is
obtained evaluating the function $\xi^S$ at $i$. Section
\ref{sec:error_analysis} analyzes how well the polynomial approximates
the data stream at any point $x_i$.  The reconstruction
procedure is not required to query the real data stream and uses
exclusively private information to compute its output. Hence, the
output $\tilde{\bx}$ remains $\epsilon_s+\epsilon_p$-differential
private by post-processing immunity of differential privacy (Theorem
\ref{th:postprocessing}).

\subsection{The Optimization-based Post-Processing}
\label{sec:post_process_procedure}

\begin{algorithm}[!t]
    \caption{\textsc{PostProcess} -- Optimization-based post-processing}
    \label{alg:postproess}

    \SetKwInOut{Input}{input}
    \SetKwInOut{Output}{output}
    \Input{
        $\tilde{\bx}$: the private data stream seen in the current $w$-period\\ 
       $Q_{\bF}(\bD)$; the set of feature queries \\
       $\epsilon_o$: the privacy budget}
    \Output{$\hat{\bx}$: a private post-processed data stream}
    \setcounter{AlgoLine}{0}

    ${\bf c} = \MLap(\bx; Q_{\bF_i}, \epsilon/(p-1))$\\
    ${\bf x}^* = {\bf argmin}_{\bf \dot{x}} \| {\bf \dot{x}} - \tilde{\bc} \|_{2,\lambda}^2 = $
    {
    \begin{align}
    &
    \sum_{i=1}^{p} \frac{1}{m_i} \sum_{j = 1}^{m_i} (\dot{x}_{ij} -\tilde{c}_{ij})^2 \; \tag{O1} \\
    \textbf{subject~to}&: \notag \\
    & \forall i', i: \bF_{i'} \prec \bF_{i}, 
        ~~~ j \in [m_i]:  \dot{x}_{ij} = \sum_{l: \setf{d}_{i'l} \subseteq \setf{d}_{ij}} \dot{x}_{i'l} \label{cstr-feature} \tag{O2} \\
    \hspace{-30pt} &  \forall i, j: \dot{x}_{ij} \geq 0. \tag{O3}
    \end{align}
    }
    \Return{$\hat{\bx} = \dot{x}_{11}, \ldots, \dot{x}_{1w}$}\\
\end{algorithm}

The noise introduced in the previous steps may substantially alter the
values of the elements in the w-period, so that some global properties
of interest, such as the total sum of elements in the w-period may
differ from its original value.  The goal of the post-processing step
is to redistribute the noise introduced by the previous steps using
noisy information on aggregated values of the w-period. It does so by
casting the noise redistribution as an optimization problem whose
solution guarantees the consistency of different estimates of
identical quantities.

The {\sc Post-process} step computes the final estimates $\hat{\bx}$ of
the $w$-period data $\bx$ using the private data $\tilde{\bx}$ and
additional queries over $\bx$. The procedure is summarized in
Algorithm \ref{alg:postproess}.  It uses the concept of {\em features}
to capture semantic properties of the application of interest and
queries these features in addition to using the noisy input
$\tilde{\bx}$ of the original data stream.  For example, two important
features in the analysis of WiFi connections profiles are those periods
when peaks typically occur, as well as the total amount of connections
occurring in an $w$-period.

Formally, a \emph{feature} is a partition of the $w$-period and the
size of the feature is the number of elements in the partition.  We
say that a feature $\bF'$ is a \emph{sub-feature} of $\bF$, denoted by
$\bF' \prec \bF$, if $\bF'$ is obtained by sub-partitioning $\bF$.
The \emph{feature query} $Q_{\bF}(\bD)$ on data stream $\bD$
associated with feature $\bF = \{\sd_1, \ldots, \sd_m\}$ returns an
$m$-dimensional vector $(c_1, \ldots, c_m)$ where each $c_i$ is the
sum of the values $x_j$ of $\bx$ for $j \in \sd_i$.

\begin{example} Consider a w-period $\bx \!=\! (10, 15, 20, 23, 41, 72,
55, 50, 88, 72, 40, 18)$ of size 12 reporting the number of WiFi
connections to an access point in a day-period. Consider a feature
$\bF_1 = \{1, \ldots, 12\}$ that includes the indexes for all the
elements of the w-period. The associated feature query $Q_{\bF_1}(\bD)
= (10, 15, 20, 23, 41, 72, 55, 50, 88, 72, 40, 18)$ returns each
frequency value observed during the day.  Next, we consider a
sub-feature $\bF_2$ of $\bF_1$, defined as $\bF_2 = \{ \{1, 2, 3, 4\},
\{5, 6, 7, 8, 9\}, \{10, 11, 12\} \}$. Its associated feature query
$Q_{\bF_2}(\bD) = (68, 306, 130)$ describes the sums of all connection
frequencies associated with time steps in $\{1,2,3,4\}$, $\{5, 6, 7,
8, 9\}$, and $\{10, 11, 12\}$, representing morning, afternoon, and
evening hours respectively.
\end{example}

The optimization-based post-processing takes as input the noisy data
stream $\hat{\bx}$ from the reconstruction procedure and a collection
of features queries $Q_{\bF}(\bD)$ for each $\bF$ in the set of
features $\sF = \{\bF_1, \ldots, \bF_p \}$.  For notational
simplicity, we assume that the first feature always partitions the
data stream $w$-period into singletons, i.e., $\bF_1 = \{\{i\}: i \in
[w]\}$. The noisy answer to this query is the output $\tilde{\bx}$ of
the perturbation procedure (line \eqref{a1:reconstruct} of Algorithm
\ref{alg:stream_release}).  When viewed as queries, the inputs to the
mechanism can be represented as a set of values $Q_{\bF_i}(\bD) = {\bf
c}_i = (c_{i1}, \ldots, c_{im_{i}})$ $(1 \leq i \leq p)$ or, more
concisely, as $\bc = (c_{11},\ldots,c_{pm_p})$.  Finally, we assume
that the partial ordering $\prec$ of features is given.

This first step of Algorithm \ref{alg:postproess} (line 1) applies the
Laplace mechanism with privacy parameter $\frac{\epsilon}{p-1}$ to
each feature query (excluding the first one whose answer
$\tilde{\bx}$ is already private), i.e.,
\[ 
    \MLap(\bx; Q_{\bF_i}, \epsilon/{p-1}) = \tilde{c}_i =
    (\tilde{c}_{i1}, \ldots, \tilde{c}_{im_{i}}) \;\;\; (2 \leq i \leq p).  
\] 
The resulting values $\tilde{\bc} =
(\tilde{c}_{11},\ldots,\tilde{c}_{pm_p})$ are then post-processed by
the optimization algorithm depicted in line (2) to obtain the values
$\bx^* = (x^*_{11},\ldots,x^*_{p{m_p}})$. Finally, the mechanism
outputs a data stream $\hat{\bx} = (\dot{x}^*_{11}, \ldots,
\dot{x}^*_{1w})$.

The essence of Algorithm \ref{alg:postproess} is the optimization
model depicted in line (2).  Its decision variables are the
post-processed values $\dot{\bx} = (\dot{x}_{11}, \ldots,
\dot{x}_{p{m_p}})$, and ${\bf \lambda} = (\lambda_1, \ldots,
\lambda_p) \in (0,1]^p$ is a vector of reals representing weights for
  the terms of the objective function.
The objective minimizes the squared weighted L$_2$-Norm of $\dot{\bx} - \tilde{\bc}$, 
where the weight $\lambda_{i}$ of element $x_{ij} - \tilde{c}_{ij}$ is $\frac{1}{m_i}$. 

The optimization is subject to a set of {\em consistency constraints}
among comparable features and non-negativity constraints on the
variables. For each pair of features $(\bF_{i'},\bF_{i})$ with
$\bF_{i'} \prec \bF_{i}$, constraint \ref{cstr-feature} selects an
element $\setf{d}_{ij} \in \bF_{i}$ and all its subsets $\setf{d}_{i'l} \in
\bF_{i'}$ and imposes the constraint 
\[ 
    \dot{x}_{ij} = \sum_{l:
    \setf{d}_{i'l} \subseteq \setf{d}_{ij}} \dot{x}_{i'l},
\] 
which ensures that
the post-processed value $\dot{x}_{ij}$ is consistent with the sum of the
post-processed values of its partition in $\bF_{i'}$. By definition of
sub-features, there exists a set of elements in $\bF_{i'}$ whose union
is equal to $\setf{d}_{ij}$. 

\begin{theorem}
	\label{thm:postproc}
    The optimization-based post-process achieves $\epsilon_o$-differential privacy.
\end{theorem}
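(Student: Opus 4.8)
The plan is to separate \textsc{PostProcess} (Algorithm~\ref{alg:postproess}) into a single randomized stage that reads the current $w$-period and a purely deterministic stage, to bound the privacy loss of the first by composition, and to let the second pass through unchanged by post-processing immunity. The raw period is touched only in line~1: there the Laplace mechanism is applied to each feature query $Q_{\bF_i}$, $i=2,\ldots,p$, whereas the answer to $Q_{\bF_1}$ is taken to be the already-private input $\tilde{\bx}$ and is \emph{not} re-queried. Line~2 and the final extraction $\hat{\bx}=(\dot{x}_{11},\ldots,\dot{x}_{1w})$ do not look at the data at all: they are a deterministic function of the noisy vector $\tilde{\bc}=(\tilde{\bx},\tilde{\bc}_2,\ldots,\tilde{\bc}_p)$.

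First I would handle line~1. Each $Q_{\bF_i}$ maps the period to $\RR^{m_i}$ by summing the series values $x_j$ over the blocks $\sd_{i1},\ldots,\sd_{im_i}$ of the partition $\bF_i$. Since these blocks cover $[w]$ and each $x_j$ moves by at most the aggregation sensitivity $1$ between $w$-neighbors, the $L_1$-sensitivity of $Q_{\bF_i}$ is at most $\sum_j|\sd_{ij}|=w$, in particular finite, so Theorem~\ref{th:m_lap} applies. Splitting the post-processing budget evenly gives each of the $p-1$ queries a share $\epsilon_o/(p-1)$, hence each release $\tilde{\bc}_i=\MLap(\bx;Q_{\bF_i},\epsilon_o/(p-1))$ is $\tfrac{\epsilon_o}{p-1}$-differentially private. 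Because every feature query inspects the whole period --- the blocks of distinct features overlap --- the applicable rule is \emph{sequential} composition (Theorem~\ref{th:composabilty}), not parallel composition; it yields that the joint release $(\tilde{\bc}_2,\ldots,\tilde{\bc}_p)$ is $(p-1)\cdot\tfrac{\epsilon_o}{p-1}=\epsilon_o$-differentially private.

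Next I would dispatch line~2. The objective (O1) is a strictly convex quadratic (all coefficients $1/m_i$ are positive) over the polyhedron cut out by the linear consistency constraints (O2) and the non-negativity constraints (O3); this region is nonempty (the all-zero assignment is feasible), so the program has a unique optimizer and the map $\tilde{\bc}\mapsto\bx^*\mapsto\hat{\bx}$ is a well-defined deterministic function. For any fixed value of the side input $\tilde{\bx}$ --- whose own privacy cost is accounted to \textsc{Sample} and \textsc{Perturb} in Lemma~\ref{thm:privacy_algorithm}, not here --- this map is precisely a post-processing of $(\tilde{\bc}_2,\ldots,\tilde{\bc}_p)$, so Theorem~\ref{th:postprocessing} gives that $\hat{\bx}$, and hence the output of \textsc{PostProcess}, is $\epsilon_o$-differentially private.

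The step I expect to need the most care is the accounting around $\tilde{\bx}$: one must state explicitly that inside \textsc{PostProcess} the vector $\tilde{\bx}$ is an input constant (it enters only as a coefficient in the objective of line~2), so it consumes none of $\epsilon_o$ and does not disturb the post-processing argument, while its privacy is guaranteed separately by the preceding modules. A lesser but still worth-stating point is the distinction between sequential and parallel composition for the $p-1$ feature queries: parallel composition (Theorem~\ref{th:composabilty2}) would require the queries to read disjoint sub-datasets, which is false here, and this is exactly why the per-query budget must be $\epsilon_o/(p-1)$ rather than $\epsilon_o$.
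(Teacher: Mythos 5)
Your proof is correct and follows essentially the same route as the paper's: Laplace noise on the $p-1$ feature queries with budget $\epsilon_o/(p-1)$ each, sequential composition to charge $\epsilon_o$ for the joint release, and post-processing immunity for the optimization in line (2) and the extraction of $\hat{\bx}$. Your two refinements are in your favor rather than a different approach: you explicitly treat $\tilde{\bx}$ as a fixed side input whose cost is charged to \textsc{Sample}/\textsc{Perturb} (the paper folds it into the composition more loosely), and you bound each feature query's $L_1$-sensitivity under $w$-neighbors by $w$, whereas the paper asserts ``sensitivity 1,'' which is only the per-tuple, single-time-step sensitivity---under the $w$-event adjacency all $w$ time steps may change, consistent with your bound and with the paper's own analysis in Theorem~\ref{thm:auc_sensitivity}.
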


\begin{proof}
    Since each feature partitions the $w$-period over the data stream, each feature query is a count query with sensitivity 1. Thus, each $\tilde{c}_{ij}$ $(i>1)$ obtained from the Laplace mechanism is $\epsilon_o$-differential-private by Theorem~\ref{th:m_lap} and the values $\tilde{\bx} = \tilde{c}_{11}, \ldots, \tilde{c}_{1w}$ are differential private (Theorems \ref{thm:privacy_sampling} and \ref{thm:privacy_perturb}).  
    Additionally, $(\tilde{c}_{11}, \ldots, \tilde{c}_{pm_p})$ is $\epsilon_o$-differential-private by Theorem~\ref{th:composabilty}. Finally, the result follows from post-processing immunity (Theorem~\ref{th:postprocessing}).
\end{proof}

Observe that the mechanisms considered in this paper all operate over the universe of the data stream in the $w$-period. This is the case for instance of the Laplace mechanisms which runs in polynomial time in the size of the $w$-period. 
The next theoretical result characterizes the complexity of the optimization model depicted in line (2) and hence the complexity of Algorithm \ref{alg:postproess}.  Recall that a $\delta$-solution to an optimization problem is a solution whose objective value is within distance $\delta$ of the optimum.

\begin{theorem}
A $\delta$-solution to the optimization to the optimization model in 
line (2) (Algorithm \ref{alg:postproess}) can be obtained in time
polynomial in $w$, the number of features, and
$\frac{1}{\delta}$.
\end{theorem}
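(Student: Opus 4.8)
The plan is to recognize the model in line (2) of Algorithm~\ref{alg:postproess}, namely (O1)--(O3), as a convex quadratic program of size polynomial in $w$ and in the number of features, and then to run an off-the-shelf polynomial-time convex solver on it. First I would note that the objective (O1) is the separable quadratic $\sum_{i=1}^{p}\frac{1}{m_i}\sum_{j=1}^{m_i}(\dot x_{ij}-\tilde c_{ij})^2$ with strictly positive coefficients $\frac{1}{m_i}\in(0,1]$, hence strictly convex with an affine (in particular $2$-Lipschitz) gradient, while the consistency constraints (O2) are linear equalities and (O3) are non-negativity bounds; thus the feasible set $\mathcal P$ is a polyhedron and the problem is a convex QP with a unique minimizer $\dot{\bx}^\star$ (the objective is coercive and $\mathcal P$ is closed and non-empty, the latter because the vector obtained by evaluating each feature query on the true, non-negative, count data satisfies (O2) by construction and (O3) trivially).

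Next I would bound the problem size. The decision variables are $\dot{\bx}=(\dot x_{11},\dots,\dot x_{pm_p})$, and since each feature is a partition of the $w$-period it has at most $w$ blocks, so there are $N:=\sum_{i=1}^{p}m_i\le p\,w$ variables. Each equality in (O2) is indexed by a $\prec$-comparable ordered pair of features ($\le p^2$ of them) together with a block of the coarser feature ($\le w$), giving $O(p^2 w)$ equality constraints, plus $N\le pw$ non-negativity bounds; hence the QP has $\mathrm{poly}(w,p)$ variables and constraints. I would also record a crude a priori bound on the minimizer: $\|\dot{\bx}^\star-\tilde{\bc}\|_2^2\le(\max_i m_i)\cdot\mathrm{obj}(\dot{\bx}^\star)\le(\max_i m_i)\cdot\mathrm{obj}(\text{any feasible point})$, so $\mathcal P$ may be intersected with a sufficiently large, redundant box to make it bounded without changing the optimum.

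Finally, I would invoke a polynomial-time algorithm for convex QP over a bounded polyhedron --- e.g., an interior-point method, or a projection-free first-order scheme such as Frank--Wolfe whose per-iteration linear-minimization oracle is a linear program over $\mathcal P$. Such a method returns a feasible point whose objective value is within $\delta$ of the optimum in a number of arithmetic operations polynomial in $N$, in the number of constraints, and in either $\log\frac1\delta$ (interior-point) or $\frac1\delta$ (Frank--Wolfe). Since $\log\frac1\delta\le\frac1\delta$, combining this with the size bound of the previous paragraph yields a running time polynomial in $w$, in the number $p$ of features, and in $\frac1\delta$.

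The main obstacle is not convexity, which is immediate, but making the size bound honestly polynomial: this hinges on features being partitions of the $w$-period (capping both the variables per feature and the consistency constraints per feature pair at $w$) and on only the $O(p^2)$ explicitly given $\prec$-pairs generating constraints. A secondary subtlety is reconciling the additive-error notion of a ``$\delta$-solution'' with textbook complexity statements --- either one bounds $\|\dot{\bx}^\star\|$ as above so that an interior-point guarantee applies on the a priori unbounded feasible set, or one appeals to a first-order method with an explicit $O(1/\delta)$ iteration count over $\mathcal P$.
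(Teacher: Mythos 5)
Your proposal is correct and follows essentially the same route as the paper's proof: bound the number of variables and constraints by a polynomial in $w$ and the number of features $p$ (using the fact that each feature is a partition of the $w$-period), then invoke a standard polynomial-time method for convex optimization to obtain a $\delta$-solution with at most polynomial dependence on $\frac{1}{\delta}$. You additionally spell out details the paper leaves implicit (feasibility of the true counts, boundedness of the minimizer), which strengthens rather than changes the argument.
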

\begin{proof}
First observe that the number of variables and constraints in the
optimization model are bounded by a polynomial in size of the period 
$w$ and the number of features $p$. Indeed, since the features are
partitions, every set $\setf{d}_{i'l}$ in Constraint
(\ref{cstr-feature}) is a subset of exactly one $\setf{d}_{ij}$. The
result then follows from the fact that the optimization model is
convex, which implies that a $\delta$-solution can be found in time
polynomial in the size of the universe, the number of features, and
$\frac{1}{\delta}$ \cite{Nemirovski2004}.
\end{proof}


\section{Error analysis}
\label{sec:error_analysis}

This section analyzes how the sampling and reconstruction procedure
can improve the accuracy of the output.  It characterizes the error
$\EE \| \hat{\bx} - \bx \|_2^2$ of the results of Algorithm
\ref{alg:stream_release} over a $w$-period stream release, when the
equally-spaced approach is selected as a sampling procedure and no
post-processing is applied (i.e., when the algorithm releases
$\tilde{\bx}$ (line \ref{a1:reconstruct}, Algorithm
\ref{alg:stream_release})).

\subsection{Sample and Reconstruct}


Consider a $w$-period of the data stream $\bx$ using the same
notational assumptions introduced in the previous period, and thus we
focus in a $w$-period in $[1, w]$. Let $k =|S|$ be the number of
samples selected for \emph{measurements}, by the equally-spaced
sampling in addition to the initial point.  This determines the length
of each \emph{segment} $m = w / k$ during which the
\textsc{Reconstruct} procedure interpolates values without extra
measurements. We assume $m$ to be an even integer. Let $L$ be the
Lipschitz constant defined as \(L := \sup_{t \in [w]} x_t - x_{t -
  1}\).

\begin{theorem}
The error introduced by Algorithm \ref{alg:stream_release} (ignoring
post-processing) with the equally-spaced sampling procedure with
parameter $k$ is bounded by $O\left(m^2 L^2 w + 2\frac{w^2
  L}{\epsilon} + \frac{w^3}{m^{2}\epsilon^{2}}\right)$.
\end{theorem}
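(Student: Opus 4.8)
The plan is to decompose the per-coordinate error $\tilde{x}_i - x_i$ (note $\hat{\bx}=\tilde{\bx}$ here, since post-processing is ignored) into a \emph{deterministic} reconstruction component and a \emph{mean-zero} perturbation component, bound each, and sum over the $w$ coordinates. Under the equally-spaced strategy $\epsilon_s=0$, so the whole budget $\epsilon$ is available to \textsc{Perturb}, and each of the $k$ sampled values $x_\iota$ ($\iota\in S$) is released as $\tilde{x}_\iota = x_\iota + z_\iota$ with the $z_\iota$ i.i.d.\ $\Lap(k/\epsilon)$ (unit sensitivity, $k$ points). Fix $i$ in the segment $[\iota_j,\iota_{j+1}]$ of length $m$, write $i=\iota_j+s$ and $\alpha=s/m$; since \textsc{Reconstruct} sets $\tilde{x}_i=\xi^S(i)=(1-\alpha)\tilde{x}_{\iota_j}+\alpha\tilde{x}_{\iota_{j+1}}$,
\[
\tilde{x}_i - x_i \;=\; \underbrace{\big[(1-\alpha)x_{\iota_j}+\alpha x_{\iota_{j+1}}-x_i\big]}_{r_i} \;+\; \underbrace{\big[(1-\alpha)z_{\iota_j}+\alpha z_{\iota_{j+1}}\big]}_{n_i},
\]
where $r_i$ depends only on $\bx$ and $\EE[n_i]=0$ (for $i\in S$ this degenerates to $r_i=0$, $n_i=z_i$).

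The next step is to bound the two pieces. For the reconstruction term, the Lipschitz hypothesis gives $|x_{\iota_j}-x_i|\le sL$ and $|x_{\iota_{j+1}}-x_i|\le (m-s)L$, hence $|r_i|\le(1-\alpha)sL+\alpha(m-s)L\le mL$ (the worst point, the midpoint, in fact gives $mL/2$). For the perturbation term, the standard Laplace moments $\EE[|z_\iota|]=k/\epsilon$ and $\EE[z_\iota^2]=2k^2/\epsilon^2$, together with the convex weights $(1-\alpha)+\alpha=1$ and the independence of $z_{\iota_j},z_{\iota_{j+1}}$, yield $\EE[|n_i|]\le k/\epsilon$ and $\EE[n_i^2]\le 2k^2/\epsilon^2$.

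Then I would combine via the triangle inequality \emph{before} squaring, $(\tilde{x}_i-x_i)^2\le r_i^2+2|r_i|\,|n_i|+n_i^2$, so $\EE[(\tilde{x}_i-x_i)^2]\le m^2L^2+2mL\cdot\tfrac{k}{\epsilon}+2\tfrac{k^2}{\epsilon^2}$. Summing over $i=1,\dots,w$ by linearity of expectation and substituting $mk=w$ gives
\[
\EE\|\tilde{\bx}-\bx\|_2^2 \;\le\; w\,m^2L^2 \;+\; 2\frac{w^2 L}{\epsilon} \;+\; 2\frac{w^3}{m^2\epsilon^2} \;=\; O\!\Big(m^2L^2 w + 2\tfrac{w^2 L}{\epsilon} + \tfrac{w^3}{m^2\epsilon^2}\Big),
\]
which is exactly the claimed bound.

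The calculation is essentially bookkeeping; the two places that need care are (i) pinning down the noise scale $k/\epsilon$ and the interpolation weights in the decomposition, and (ii) the estimate $|r_i|=O(mL)$, the only step where the $L$-Lipschitz hypothesis is used (the parity of $m$ is immaterial for this bound). I would also flag that the middle term $2\tfrac{w^2 L}{\epsilon}$ is an artifact of bounding $|\tilde{x}_i-x_i|\le|r_i|+|n_i|$ before taking expectations: since $r_i$ is deterministic and $\EE[n_i]=0$, the cross term $\EE[r_i n_i]$ vanishes, so in fact $\EE\|\tilde{\bx}-\bx\|_2^2\le w\,m^2L^2+2w^3/(m^2\epsilon^2)=O(m^2L^2 w + w^3/(m^2\epsilon^2))$; the looser stated form is presumably retained for a uniform coordinate-wise argument that never separates bias from variance.
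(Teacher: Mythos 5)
Your proof is correct and follows essentially the same route as the paper: split each coordinate's error into a reconstruction term bounded by $mL$ via the Lipschitz constant and a Laplace perturbation term of scale $k/\epsilon$, expand the square, and sum over the $w$ points using $k=w/m$. Your treatment is in fact slightly more careful than the paper's (you track the interpolated noise $(1-\alpha)z_{\iota_j}+\alpha z_{\iota_{j+1}}$ explicitly and note the cross term vanishes in expectation), but the decomposition and bookkeeping are the same.
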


\begin{proof}
For notational simplicity, consider the first interval $I =
\{1,\ldots, m\}$ of the $w$-period, where the sample points $1$ and
$m$ are selected. Thus, $x_1$ and $x_m$ are measured privately. The
reconstruct procedure uses linear interpolation between $x_1$ and
$x_m$ to recover the values of the non-sampling points $x_2, \ldots,
x_{m-1}$.
    
For each point $i \in [m]$, there are two sources of error: the
\emph{perturbation error} $e_p$ and the \emph{reconstruction error}
$e_r$.  The worst-case reconstruction error is bounded by $e_r = mL$.
The perturbation error $e_p$ is the additive Laplace noise.  There are
$k$ measurements taken, and hence the privacy budget $\epsilon$ must
be divided by $k$, which results in $e_p = | \tilde{x}_i - x_i | =
\mathrm{Lap}(k/\epsilon)$.  This error adds to the perturbation error
at every point in the $w$-period.  Therefore, for all $i \in [m]$, the
expected error is:
\begin{align*}
\EE \| \tilde{x}_i - x_i \|_2^2  
  &\leq \EE \left[(e_r + e_p)^2 \right]\\
  &\leq    \EE_{Z \sim \mathrm{Lap}(k/\epsilon)} 
        \left[(mL  + Z)^2\right] \\
  &=    m^2L^2 + 2mL\frac{k}{\epsilon} + (\frac{k}{\epsilon})^2.
\end{align*}
\noindent
Multiplying this quantity by the number $m$ of points in the interval gives
\begin{align*}
\EE \| \tilde{\bx}[I] - \bx[I] \|_2^2  
  &\leq m \EE \| \tilde{x}_i - x_i \|_2^2 \leq m^3L^2 + 2m^2L\frac{k}{\epsilon} + m(\frac{k}{\epsilon})^2.
\end{align*}
  As a result, the error $\| \tilde{\bx}[I] - \bx[I] \|_2^2$ is bounded by   $O(m^3 L^2 + 2m^2L\frac{k}{\epsilon} + m (\frac{k}{\epsilon})^2)$. Multiplying the above by the number of intervals $\frac{w}{m}$ in the $w$-period gives the final error which is bounded by
  \[
      O\left(m^2 L^2 w +  2\frac{w^2L}{\epsilon} + \frac{w^3}{m^{2}\epsilon^{2}}\right).
  \]
\end{proof}

\noindent
For $m = \sqrt{\frac{w}{\epsilon L}}$, the above expression generates  an error of $O(w^2 L / \epsilon)$.\footnote{Here, to simplify notation, we consider $m$ as a real value.}  
In comparison, applying the Laplace mechanism to produce a private sub-stream in the $w$-period produces an error of 
$ 
w\, \EE_{\mathrm{Lap}(w/\epsilon)}[Z^2] = \frac{w^3}{\epsilon^{2}}.
$

\smallskip
The result above shows that choosing a sampling parameter $k$ to
sample uniformly every $m$ time steps may allow Algorithm
\ref{alg:stream_release} to produce outputs with a substantially lower
error than those obtained by the Laplace mechanism. Although this
result applies to the equally-spaced sampling procedure, Section
\ref{sec:experiments} demonstrates experimentally that the
$L1$-sampling procedure outperforms its equally-spaced counterpart.

\subsection{Optimization-based Post-Processing}

The following result is from \cite{AAMAS2018}. It bounds the error of
the optimization-based post-processing. It proves that the post-processing step can accommodate any side-constraints
without degrading the accuracy of the mechanism significantly. 

\begin{theorem}
\label{th:opt}
The optimal solution to the optimization model in line \eqref{a1:postprocess} of Algorithm \ref{alg:stream_release} satisfies
\[
\| \bx^* - \bx \|_{2,\lambda} \leq 2 \| \tilde{\bx} - \bx \|_{2,\lambda}.
\]
\end{theorem}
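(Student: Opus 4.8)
The plan is to exploit the variational characterization of $\bx^*$ as the minimizer of the convex objective \eqref{cstr-feature}'s enclosing program. Write $f(\dot\bx) = \| \dot\bx - \tilde\bc \|_{2,\lambda}^2$ for the objective and let $\mathcal{C}$ be the feasible set defined by the consistency constraints \ref{cstr-feature} and the non-negativity constraints \ref{cstr-feature}'s companions. The first observation is that the \emph{true} data vector $\bx$, extended consistently to all feature levels (i.e.\ the vector whose $ij$-th component is $\sum_{l \in \setf{d}_{ij}} x_l$), is feasible: it satisfies every consistency constraint by construction and is non-negative because the underlying stream counts are non-negative. Call this extension $\bx$ as well, abusing notation. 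Since $\bx^*$ is the minimizer over $\mathcal{C}$ and $\bx \in \mathcal{C}$, we have $f(\bx^*) \le f(\bx)$, i.e.\ $\|\bx^* - \tilde\bc\|_{2,\lambda}^2 \le \|\bx - \tilde\bc\|_{2,\lambda}^2$.

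Next I would apply the parallelogram-type inequality in the weighted $L_2$ norm: for any vectors $u, v, p$,
\[
\|u - p\|_{2,\lambda}^2 = \|u - v\|_{2,\lambda}^2 + 2\langle u - v, v - p\rangle_\lambda + \|v - p\|_{2,\lambda}^2.
\]
Setting $u = \bx^*$, $v = \bx$, $p = \tilde\bc$ and combining with $f(\bx^*) \le f(\bx)$ yields
\[
\|\bx^* - \bx\|_{2,\lambda}^2 \le -2\langle \bx^* - \bx,\, \bx - \tilde\bc\rangle_\lambda = 2\langle \bx^* - \bx,\, \tilde\bc - \bx\rangle_\lambda.
\]
By Cauchy--Schwarz in the $\lambda$-weighted inner product, the right-hand side is at most $2\|\bx^* - \bx\|_{2,\lambda}\,\|\tilde\bc - \bx\|_{2,\lambda}$. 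Dividing through by $\|\bx^* - \bx\|_{2,\lambda}$ (the case where this is zero is trivial) gives $\|\bx^* - \bx\|_{2,\lambda} \le 2\|\tilde\bc - \bx\|_{2,\lambda}$. Finally, since the $\bx^*$ reported is just the first-feature block $\dot x^*_{11}, \ldots, \dot x^*_{1w}$ and $\tilde\bc$ restricted to the first block is exactly $\tilde\bx$, and since the weighted norm restricted to that block is dominated by the full weighted norm, one recovers $\|\bx^* - \bx\|_{2,\lambda} \le 2\|\tilde\bx - \bx\|_{2,\lambda}$ as stated.

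The main obstacle I anticipate is the bookkeeping around the two different norms: the $2\lambda$ subscript in the statement versus the per-block $\frac{1}{m_i}$ weights in the objective, and the fact that $\bx^*$ in the theorem statement refers only to the singleton-feature block while $\tilde\bc$ carries components at all feature levels. One must be careful that $\|\bx - \tilde\bc\|_{2,\lambda}$ on the richer space is what appears after Cauchy--Schwarz, and then argue that restricting to the first block only decreases the norm on the left while the right-hand side $\|\tilde\bx - \bx\|_{2,\lambda}$ is the first-block contribution — this requires that the non-negativity and consistency constraints do not inflate the first-block error, which follows because $\bx$ itself is feasible. A secondary subtlety is justifying that the extended true vector is genuinely feasible: this is immediate from the partition structure (each $\setf{d}_{i'l}$ sits inside exactly one $\setf{d}_{ij}$, as noted in the complexity proof) but should be stated explicitly. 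Everything else is the standard projection-onto-a-convex-set contraction argument, and since this result is quoted from \cite{AAMAS2018}, I would keep the exposition brief and refer there for the full details.
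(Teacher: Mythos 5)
Your core argument is essentially the paper's: both proofs hinge on the fact that the consistently extended true vector $\bx$ is feasible for constraints (O2)--(O3), so optimality gives $\| \bx^* - \tilde{\bc} \|_{2,\lambda} \leq \| \bx - \tilde{\bc} \|_{2,\lambda}$. The paper then concludes in one line via the triangle inequality, $\| \bx^* - \bx \|_{2,\lambda} \leq \| \bx^* - \tilde{\bc} \|_{2,\lambda} + \| \tilde{\bc} - \bx \|_{2,\lambda} \leq 2\| \tilde{\bc} - \bx \|_{2,\lambda}$, whereas you expand the squares and use Cauchy--Schwarz; both routes are valid and give the same factor of $2$ (yours is the standard projection argument, which could even be sharpened to a factor $1$ via nonexpansiveness of projection onto a convex set containing $\bx$, though that is not needed here).

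The genuine gap is your final block-restriction step. What you have proved lives on the full concatenated space: $\| \bx^* - \bx \|_{2,\lambda} \leq 2\| \tilde{\bc} - \bx \|_{2,\lambda}$. Restricting to the first-feature block only helps on the left-hand side; on the right-hand side it goes the wrong way, since the first-block quantity $\| \tilde{\bx} - \bx \|_{2,\lambda}$ is \emph{smaller} than $\| \tilde{\bc} - \bx \|_{2,\lambda}$, so you cannot substitute one for the other and keep the inequality. In fact the ``first-block-only'' version of the bound is false: take $w=1$, $\sF = \{\bF_1, \bF_2\}$ with $\bF_2$ the total sum, $m_1 = m_2 = 1$, true value $x = 0$, noiseless first-block answer $\tilde{x} = 0$, and noisy sum $\tilde{c}_{21} = 10$. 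The consistency constraint forces $\dot{x}_{21} = \dot{x}_{11}$, so the objective $(\dot{x}_{11})^2 + (\dot{x}_{11}-10)^2$ is minimized at $\dot{x}_{11} = 5$, giving first-block error $5$ while $2\| \tilde{\bx} - \bx \|_{2,\lambda} = 0$. The correct reading of the theorem (and the one the paper's proof implicitly uses, writing $\tilde{\bx}$ where it means the full noisy vector $\tilde{\bc}$) is that $\bx^*$, $\bx$, and $\tilde{\bx}$ all denote the full vectors over every feature level; with that reading your argument is complete once you simply delete the restriction step.
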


\begin{proof}
   It follows that:
   \begin{align*}
   \| \bx^* - \bx \|_{2,\lambda} 
        &\leq \| \bx^* - \tilde{\bx} \|_{2,\lambda} + \| \tilde{\bx} - \bx \|_{2,\lambda} \label{eq:p2}\\
         &\leq 2 \| \bx - \tilde{\bx}\|_{2,\lambda}.
   \end{align*}
   where the first inequality follows from the triangle inequality on weighted L$_2$-norms and the second inequality follows from
   \[
       \| \bx^* - \tilde{\bx} \|_{2,\lambda} \leq \| \bx - \tilde{\bx}\|_{2,\lambda}
   \]
   by optimality of $\bx^*$ and the fact that $\bx$ is a feasible solution to constraints (O2) and (O3).
   \end{proof}




\section{Privacy Model Extensions}
\label{sec:extensions}

This section first presents an extension of the privacy model that
protects disclosure of arbitrary quantities within the $w$-event
privacy model.  It then generalizes the theoretical results of
\algname{} to this extended privacy model.  Finally, it describes an
extension of \algname{} that supports hierarchical data streams.

\subsection{$\alpha$-Indistinguishability for Data Streams}
\label{sec:indistig}

The application studied in this paper requires the private release of
energy load consumption streams. Unlike the previous setting, where
the participation of users is revealed at each time step, the new
privacy setting requires the \emph{load consumption} of each customer
to be revealed for each time steps.  To capture this requirement, the
data stream setting presented in Section \ref{sec:stream-model} is
extended to an infinite sequence of elements in the \emph{data
  universe} $\sU = \cI \times \cR \times \cT$, where $\cI$ and $\cT$
are defined as in Section \ref{sec:stream-model}, and $\cR$ is the set
of possible \emph{events} (i.e., the data items generated by
users). In other words, each tuple $(i, r, t)$ describes an event that
occurred at time $t$ in which user $i$ reported value $r$.

Each user transmits an aggregate summary of the stream periodically
(e.g., every 30 minutes) and the discrete time step $k$ represents an
aggregate (e.g., the sum or the average) of all values describing the
event occurring in time period $k$. This setting implies that a
tuple $(i, r, t)$ describes the aggregate behavior of user $i$ during
the time period $t$.  Here, a numerical count query $Q: \cD \to
\RR^{|\cI|}$ returns an $|\cI|$-dimensional vector corresponding to
the aggregated reports of each user in $\cI$.  The following example
illustrates such behavior.

\begin{example} Consider a data stream system that collects power
consumption data from customers of an electric company distributed on
a wide geographical region. Customers may correspond to facilities
(such as homes, hospitals, industrial buildings) or electrical
substations, transmitting their power consumption at regular intervals
(e.g., every $30$ minutes). The value $r_t \in \cR$ transmitted by a
customer at time $t$ is a real number denoting the average amount of
power (in MegaWatts) the customer required during time interval $t$.
The table below represents a the scenario in which user $\text{customer}_1$ consumes $3.1$ MW at time $t=1$, $3.2$ MW at time $t=2$, and $2.8$ MW at time $t=3$. 
\begin{center}
    \resizebox{0.4\linewidth}{!} 
    {
    \begin{tabular}{c | c | c | c | c}
    \toprule
    $\cI$ & $Q(D_1)$ & $Q(D_2)$ & $Q(D_3)$ & \ldots\\
    \midrule
    $\text{customer}_1$ & 3.1 & 3.2 & 2.8 & \\
    $\text{customer}_2$ & 1.0 & 1.5 & 1.6 & \\
    $\text{customer}_2$ & 0.7 & 1.1 & 1.2 & \\
    \bottomrule
    \end{tabular}
    }
\end{center}
\end{example}

\noindent
Customer identities are assumed to be a public information, as every
facility consumes power. However, their load fluctuations are
considered to be highly sensitive. Indeed. changes in power
consumption may indirectly reveal production levels and hence
strategic investments, decreases in sales, and other similar
information. These changes should not be revealed within some
application-specific period of $w$ time steps.  Thus, within each
$w$-period, the privacy goal of the data curator is to protect
observed increases or decreases of power consumptions.  More
precisely, consider $r_u^t$ and ${r'}_u^t$ be two distinct values that
may be reported by user $u$ at time $t$ and that satisfy $|r_u^t -
{r'}_u^t| \leq \alpha$ for some positive real value $\alpha$. An
attacker should not be able to confidently determine that $u$ reported
value $r_u^t$ instead of value ${r'}_u^t$ and vice-versa.

This goal can be achieved by using a more general adjacency relation
between datasets. This relation needs to capture the distance between reported quantities whose magnitudes must be protected up to some given value
$\alpha > 0$. The value $\alpha$ is called the
\emph{indistinguishability level}.  This generalized definition of
differential privacy has been adopted in several applications and has
sound theoretical foundations (e.g.,
\cite{koufogiannis:15,andres2013geo,chatzikokolakis:13}\footnote{We
  refer the interested reader to \cite{chatzikokolakis:13} for a broad
  analysis of indistinguishability when differential privacy uses
  different notions of distance.}).  Consider a dataset $X$ to which
$n$ individuals contribute their real-valued data $r_i \in \RR$, i.e.,
$X = (r_1, \ldots, r_n) \in \RR^n$.  For $\alpha > 0$, an adjacency
relation that captures the indistinguishability of a single individual
to the aggregating scheme is:
\begin{equation} \label{eq:adj_rel} 
    X \sim_\alpha X' \iff \exists i
    \textrm{~s.t.~} | r_i - r'_i | \leq \alpha \textrm{~and~} r_j = r'_j,
    \forall j \neq i.  
\end{equation} 
Such adjacency definition, in our target domain, is useful to hide
increases or decreases of loads up to some quantity $\alpha$.  The
generalization of differential privacy that uses the above adjacency
relation is referred to as $\alpha$-indistinguishability
\cite{chatzikokolakis:13}.

The $w$-privacy definition, introduced in Definition \ref{def:wdp},
can be extended to use $\alpha$-indistinguishability by replacing the
standard neighboring relation between datasets with the adjacency
relation defined in Equation \eqref{eq:adj_rel}.  More precisely, for
a given $\alpha > 0$, two data streams prefixes $D[t], D'[t]$ are said
\emph{($w, \alpha$)-neighbors}, written $D[t] \sim_w^\alpha D'[t]$, if
     {\it (i)} for each $D_i, D_i'$ with $i \in [t]$, $D_i \sim_\alpha
     D_i'$, and {\it (ii)} for each $D_i, D_i', D_j, D_j'$, with $i <
     j \in [t]$ and $D_i \neq D_i', D_j \neq D_j'$, it must be the
     case that $j-i+1 \leq w$ holds.

\begin{definition}[($w, \alpha$)-indistinguishability]
    \label{def:alphawdp}
    Let $\cA$ be a randomized algorithm that takes as input a stream
    prefix $D[t]$ of arbitrary size and outputs an element from a set of
    possible output sequences $\cO$. $\cA$ satisfies $w$-event
    $\epsilon$-differential privacy under $\alpha$-indistinguishability ($(w, \alpha$)-indistinguishability for short) if, for all sets $O \subseteq \cO$, all $(w, \alpha)$-neighboring stream prefixes $D[t],
    D'[t]$, and all $t$:
    \begin{equation}
        \label{eq:alphawdp_def}
        \frac{Pr[\cA(D[t]) \in O]}{Pr[\cA(D'[t]) \in O]} \leq \exp(\epsilon).
    \end{equation}
\end{definition}

\noindent
This privacy definition combines $w$-event privacy \cite{kellaris:14}
and a metric-based generalization of differential privacy
\cite{chatzikokolakis:13} in order to meet the requirements of the
motivating application.  Notice that ($w,
\alpha$)-indistinguishability is a generalization of $w$-privacy, and
it reduces to $w$-privacy when $\alpha = 1$.

\algname{} can be naturally extended to satisfy this definition, by calibrating the noise introduced by the Laplace mechanism, used during the sampling (Theorem \ref{thm:auc_sensitivity}), perturbation (Theorem \ref{thm:privacy_perturb}), and post-processing (Theorem \ref{thm:postproc}) steps, to satisfy the above definition. 
More practically, it is achieved by updating the definition of query sensitivity as:
\[
    \Delta_Q = \max_{D[t] \sim_w^\alpha D'[t]} \| Q(D[t]) - Q(D'[t])\|_1.
\]
Thus, in the ($w, \alpha$)-indistinguishability model, the sensitivity
of each query $Q$ used during the \emph{perturb} (see Section
\ref{sec:perturb_procedure}) and the \emph{post-process} (see Section
\ref{sec:post_process_procedure}) procedures is $\Delta_Q = \alpha$,
and the sensitivity $\Delta_L$ of of the queries adopted in the
$L1$-based \emph{reconstruction} procedure (see Section
\ref{sec:reconstruct_procedure}) is $\Delta_L = 2\alpha(w-k)$ .
The behavior of \algname{} for different indistinguishability levels
is presented in Section \ref{sec:experiments}.


\subsection{Hierarchical Data Streams}
\label{sec:algorithm_ext}

\algname\ can also be generalized to support \emph{hierarchical data
  streams}. A data stream $D$ is called hierarchical if, for any time
$t$, there is an \emph{aggregation entry} $a = (i_a, r_a, t)$
associated with an \emph{aggregation} set $S_a \subseteq \cI$ that
reports the sum of values reported by all entries in $S_a$ at time
$t$, i.e., $r_a = \sum_{u \in S_a} r_u^t$.  More formally, a data
stream is hierarchical if, for any two aggregation entries $a_1, a_2$,
either $S_{a_1} \subset S_{a_2}$ or $S_{a_2} \subset S_{a_1}$ holds or
$S_{a_1} \cap S_{a_2} = \emptyset$ is true.
	
A set of aggregation entries $\mathcal{S}$ can be represented
hierarchically through a tree (called an \emph{aggregation tree}) in
which the root is defined by the entry that is not contained in any
other entries and the children of an aggregation entry $p$ are all the
entries in $D$ whose identifiers are contained in $S_p$. The
\emph{height} of a hierarchical data is the maximum path length from
the root to any leaf of its tree.

\begin{example} 
	In our target application, a data curator is
	interested in the aggregated load consumption stream at the level of a
	small geographical region, at a group of regions within the same
	electrical sub-station, and finally at the nation-level. These
	aggregations form a hierarchy whose root is represented by the
	nation-level aggregation entry, its children by the electrical
	sub-station entries, and the tree leaves by the region-level entries.
\end{example}

\noindent
To answer each query of the hierarchical stream, \algname\ is 
extended as follows.  For each $w$-period, the algorithm runs the
sampling, perturbation, and reconstruction procedure for each level of
the aggregation tree representing the hierarchical stream. Finally,
the post-processing optimization takes as input the answers to all the
aggregation entries queries with the set of features being equal to
the set of aggregation entries.  The post-processing step thus enforces 
consistency between the aggregation counts at a node and the
sum of counts at its children nodes in the tree, in addition to the features
described earlier in the paper.

\begin{theorem} For a hierarchy of height $h$, \algname{}, when using a
privacy budget of $\epsilon/h$ for each level of the hierarchy,
satisfies $\epsilon$-differential privacy.
\end{theorem}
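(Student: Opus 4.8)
The plan is to decompose the mechanism along the two dimensions of the aggregation tree --- \emph{within} a single level and \emph{across} levels --- and to apply parallel composition in the first case and sequential composition in the second. First I would fix an arbitrary time $t$ and a pair of $w$-neighboring hierarchical stream prefixes $D[t] \sim_w D'[t]$ and trace how the differing user data propagates through the tree. By definition of a hierarchical data stream, the value of every aggregation entry is the sum of the values of the users in its aggregation set, and the aggregation sets occurring at a fixed level of the tree are pairwise disjoint and together cover $\cI$. Hence a user whose data changes belongs to exactly one entry at each of the $h$ levels, which yields two facts: \iOne restricted to any single level $\ell$, the streams $D[t]$ and $D'[t]$ are again $w$-neighbors that differ in a single aggregation entry; and \iTwo the entries at level $\ell$ that do not contain the differing user receive identical inputs.

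Next I would establish the per-level guarantee. At level $\ell$, \algname{} runs its \textsc{Sample}, \textsc{Perturb}, and \textsc{Reconstruct} pipeline with budget $\epsilon/h$; by Theorems~\ref{thm:privacy_sampling} and~\ref{thm:privacy_perturb} together with the post-processing immunity of the reconstruction step (Theorem~\ref{th:postprocessing}), the release produced for any one aggregation entry of that level is $(\epsilon/h)$-differentially private under $\sim_w$. Because the entries at level $\ell$ partition the user population, fact \iTwo lets me invoke parallel composition (Theorem~\ref{th:composabilty2}): the joint output over all entries of level $\ell$ is $(\epsilon/h)$-differentially private.

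Then I would compose across levels. Since every user contributes to one entry at \emph{every} level, the $h$ level-wise sub-mechanisms operate on overlapping data, so parallel composition no longer applies and I must use sequential composition (Theorem~\ref{th:composabilty}): the concatenation of the $h$ level outputs, each $(\epsilon/h)$-private, is $\big(\sum_{\ell=1}^{h}\epsilon/h\big) = \epsilon$-differentially private. Finally, the global post-processing optimization reads only already-private quantities --- the per-level reconstructions and the noisy feature/aggregation-entry answers, each charged to the corresponding level's $\epsilon/h$ budget --- and merely redistributes noise subject to the tree-consistency constraints; it never queries the raw stream, so by post-processing immunity (Theorem~\ref{th:postprocessing}) it adds nothing to the privacy loss, and the released hierarchical stream satisfies Definition~\ref{def:wdp} with parameter $\epsilon$. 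The step I expect to require the most care is facts \iOne--\iTwo: arguing that a neighboring change touches precisely one entry per level (so the per-level cost stays at $\epsilon/h$ rather than scaling with the possibly large number of entries), while simultaneously recognizing that the same change touches \emph{all} $h$ levels --- which is exactly what forces the $1/h$ split and the replacement of parallel by sequential composition across levels.
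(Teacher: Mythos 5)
Your proposal is correct and follows essentially the same route as the paper's (much terser) argument: establish the per-level guarantee of \algname{} (Lemma~\ref{thm:privacy_algorithm}), use parallel composition (Theorem~\ref{th:composabilty2}) over the disjoint aggregation entries within a level, and use sequential composition (Theorem~\ref{th:composabilty}) across the $h$ levels, each with budget $\epsilon/h$, with the consistency optimization covered by post-processing immunity. Your explicit tracing of how a neighboring change touches exactly one entry per level but all $h$ levels is just a fleshed-out version of the same decomposition.
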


\noindent
The result follows from the privacy guarantee of \algname\ (Theorem
\ref{thm:privacy_algorithm}), parallel composition of differential
privacy across each level of the hierarchy (Theorem
\ref{th:composabilty2}), and sequential composition of differential
privacy, for each of the $h$ hierarchical queries (Theorem
\ref{th:composabilty}).


\section{Evaluation}
\label{sec:experiments}

This section evaluates \algname{} on real data streams for a number of
tasks. It first evaluates the accuracy of the private release of a
stream of data within the $w$-privacy model. Then, it studies the
accuracy of each of the four individual components of the
algorithm. Finally, it evaluates the algorithm within the ($w,
\alpha$)-indistinguishability model.

\paragraph*{\bf Dataset} 

The source data was obtained through a collaboration with
\emph{R\'{e}seau de Transport d'\'{E}lectricit\'{e}}, the largest
energy transmission system operator in Europe. It consists of a
one-year national-level load energy consumption data with a
granularity of 30 minutes. The data is aggregated at a regional level
and $\sR$ denotes the set of regions. Each data point in the stream
represents the total load consumption of the customers served within a
region during a 30 minute time period. Thus, for every region $R \in
\sR$, a stream of data $\bx(R) = x_1(R), x_2(R), \ldots$ is generated
where $x_t(R)$ represents the energy demand to supply in order to
serve the region $R$ at time $t$. When the streams of all regions are
aggregated, the resulting load consumption data constitutes a data
stream of the energy profile at a national level.

For evaluation purposes, the experiments often consider a
representative region (Auvergne - Rh\^{o}ne-Alpes) to analyze the data
stream release.  For the evaluation of hierarchical data streams, the
experiments consider a hierarchical aggregation tree of height $2$,
where the root node is the national level and each of the leaves
corresponds to one region. Table \ref{tab:france_regions} lists an
overview of the data streams derived from the real energy load
consumption data for each French region in 2016. Each data stream
contains 17,520 entires.

\begin{table}[!t]
    \centering
    \resizebox{\linewidth}{!} 
    {
    \begin{tabular}{rrcrr}
    \toprule
    Stream Data Regions ($\sR$) & Daily Average (MW) & ~~ & Stream Data Regions ($\sR$) & Daily Average (MW)\\
    \midrule
    Auvergne-Rh\^{o}ne-Alpes & 7717.58  && Ile de France (Paris)   & 8315.13 \\
    Bretagne                 & 2554.23  && Nouvelle Aquitaine      & 4985.68 \\
    Bourgogne - Franche-Comt\`{e} & 2498.23&& Normandie          & 3267.22 \\
    Centre - Val de Loire   & 2157.97   && Occitanie               & 4314.7 \\
    Grand Est               & 5286.24   && Pays de la Loire        & 3174.17\\
    Hauts de France         & 5832.26   && Provence - Cote d'Azur  & 4782.4\\
    \bottomrule
    \end{tabular}
    }
    \caption{Overview of the power load consumption stream data derived from the France regions in 2016. 
    \emph{Daily Average} refers to the average power (in MW) demanded daily. 
    The number of time steps available for all regions is 17,520.
    \label{tab:france_regions}}
\end{table}

\paragraph*{\bf Algorithms}

The following sections evaluate two versions of \algname{}, with
equally-spaced sampling and with the $L1$-sampling. They are compared
against the \emph{Laplace mechanism} and the \emph{Discrete Fourier
  Transform (DFT)} algorithm \cite{rastogi:10}.  All the algorithms
release private data associated with the sub-streams $\bx$ for each
$w$-period.

To ensure $\epsilon$-differential privacy for each $w$-period, the
Laplace mechanism applies Laplace noise with parameter $\epsilon/w$ to
each value in the period. To ensure $\epsilon$-differential privacy,
the DFT algorithm works as follows. For each $w$-period, and given a
value $k < w$, it first computes the \emph{Fourier coefficients}
$DFT(\bx)_j = \sum_{i=1}^w \exp(\frac{2\pi\sqrt{-1}}{w}ji x_i)$ of
each sub-stream $\bx$ and each $j \in [w]$.  It then considers the $k$
coefficients of lowest frequencies -- which represent the high-level
trends in $\bx$ -- and perturbs them using Laplace noise with
parameter $\sqrt{k}\Delta_{2Q}/\epsilon$, where $\Delta_{2Q}$ is the
$L_2$ sensitivity of the count query.  It then pads the vector of
noisy coefficients with a $(w-k)$-dimensional vector of $0$'s to
obtain a $w$-dimensional vector. Finally, it applies the \emph{Inverse
  Discrete Fourier Transform} (IDFT) to this $w$-dimensional vector to
obtain a noisy estimate of the elements in the $w-$period. In more
details, if $\hat{f}_j$ is the j-th perturbed coefficient of the
series, the noisy estimate for the value $x_j$ is obtained through the
IDFT function as: $\frac{1}{w} \sum_{i=1}^w
\exp(\frac{2\pi\sqrt{-1}}{w}ji)\hat{f}_j$.


\subsection{Stream Data-Release in the $w$-Privacy Model}
This section evaluates the accuracy of privately releasing a data
stream within the $w$-privacy model.  It first studies the prediction
error of the algorithms.  It then analyzes the accuracy of privately
releasing hierarchical streams on aggregated queries.  Finally, it
analyzes the accuracy of forecasting tasks from the released private
data streams.

\def \wlen{\linewidth/4}
\def \wlenLabel{\linewidth}
\begin{figure}[!t]
    {\small\centering
    \hspace{35pt}
    ~~Real~ \hspace{75pt}
    Laplace \hspace{70pt}
    ~~DFT~~ \hspace{60pt}
    \algname
    }\\
    \centering
    \includegraphics[width=\wlen]{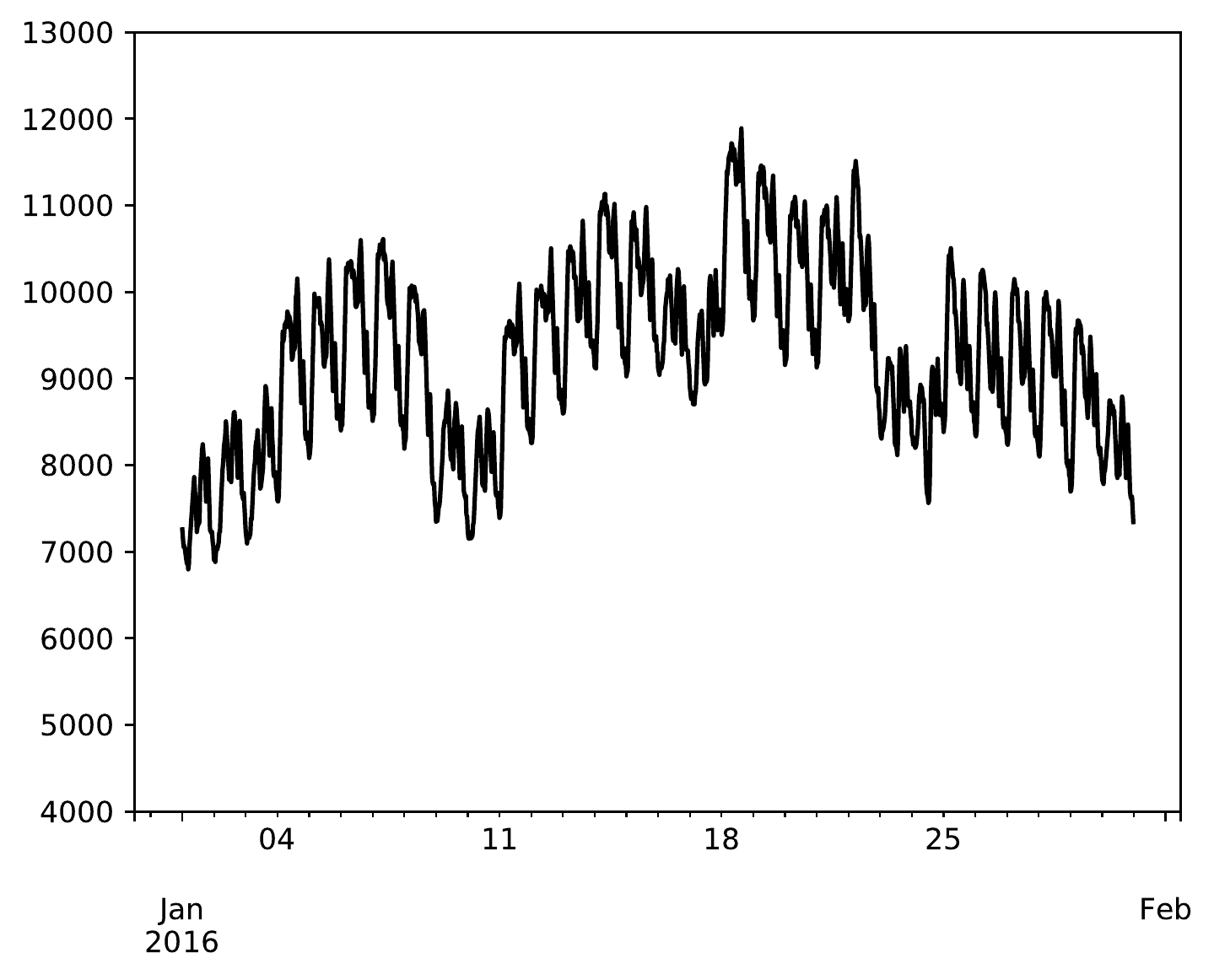} \hspace{-6pt}
    \includegraphics[width=\wlen]{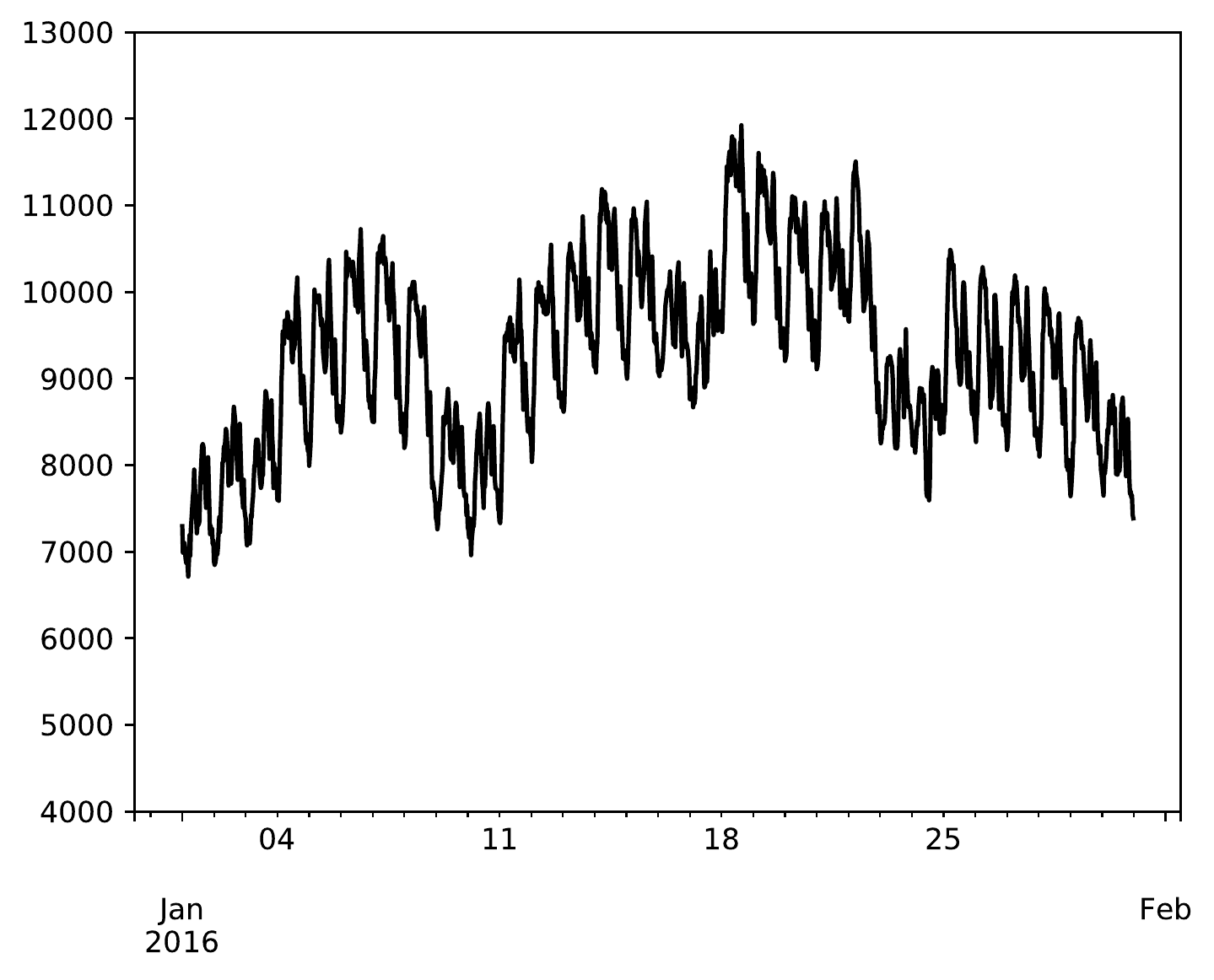} \hspace{-6pt}
    \includegraphics[width=\wlen]{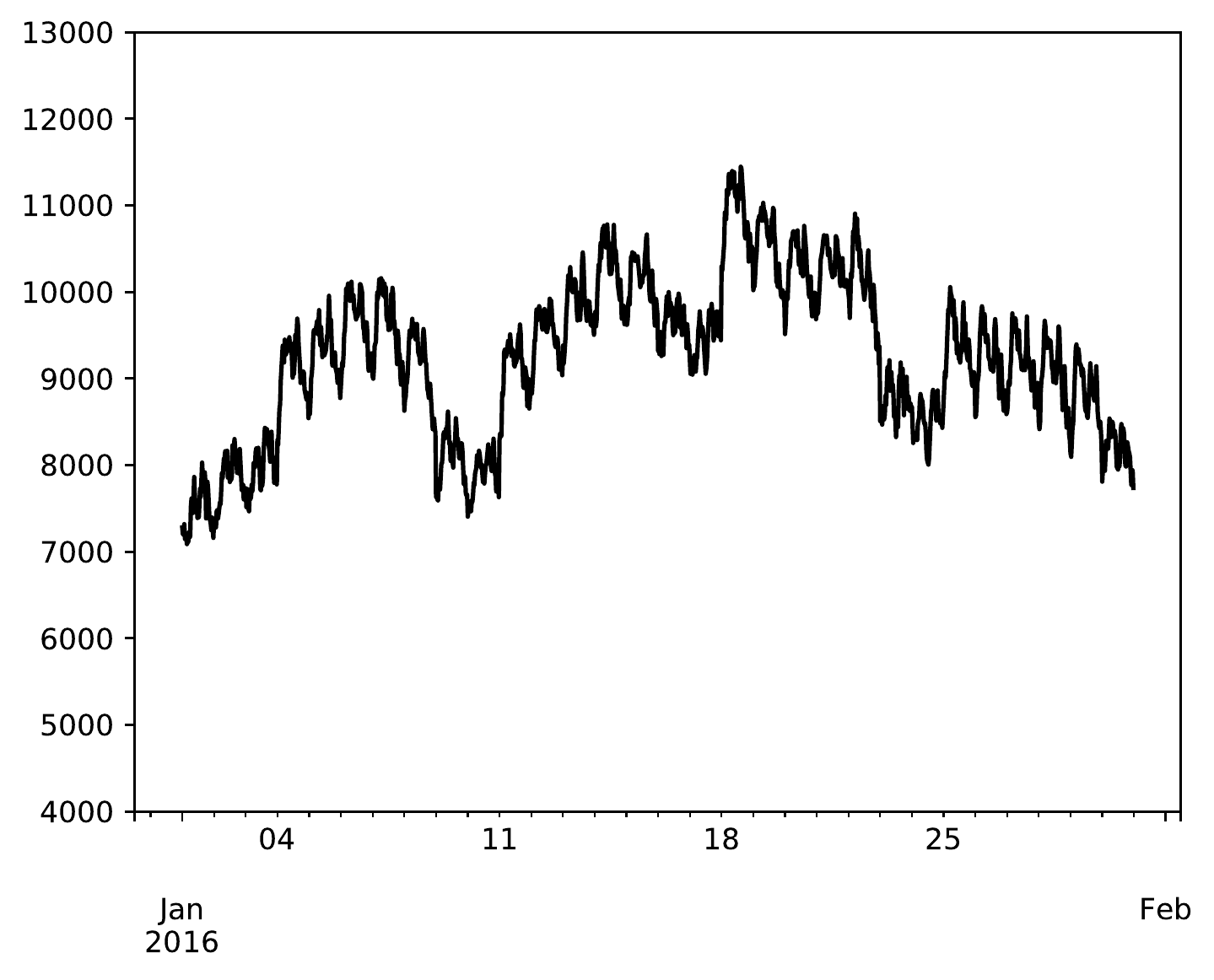}\hspace{-6pt}
    \includegraphics[width=\wlen]{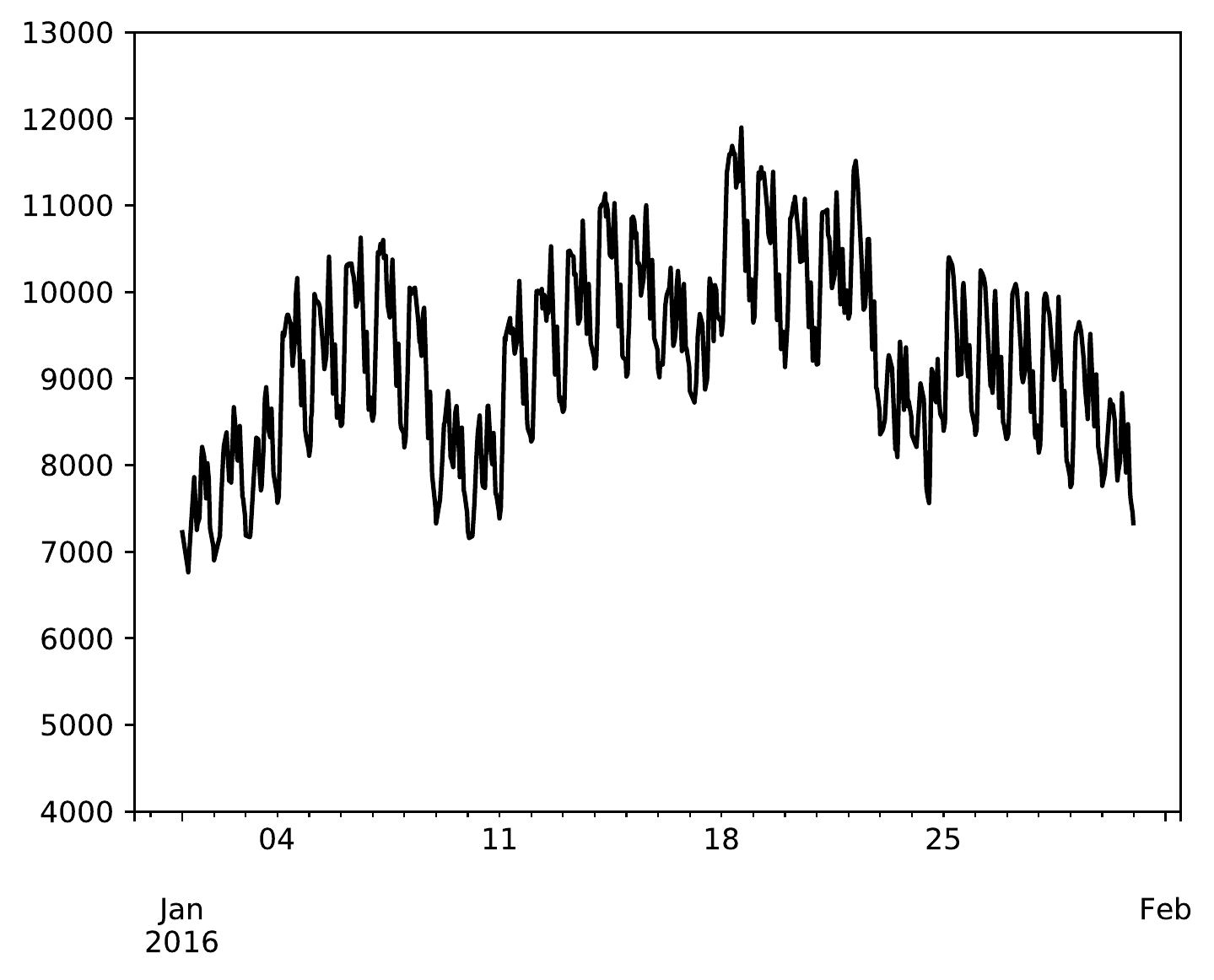}\\
    \includegraphics[width=\wlen]{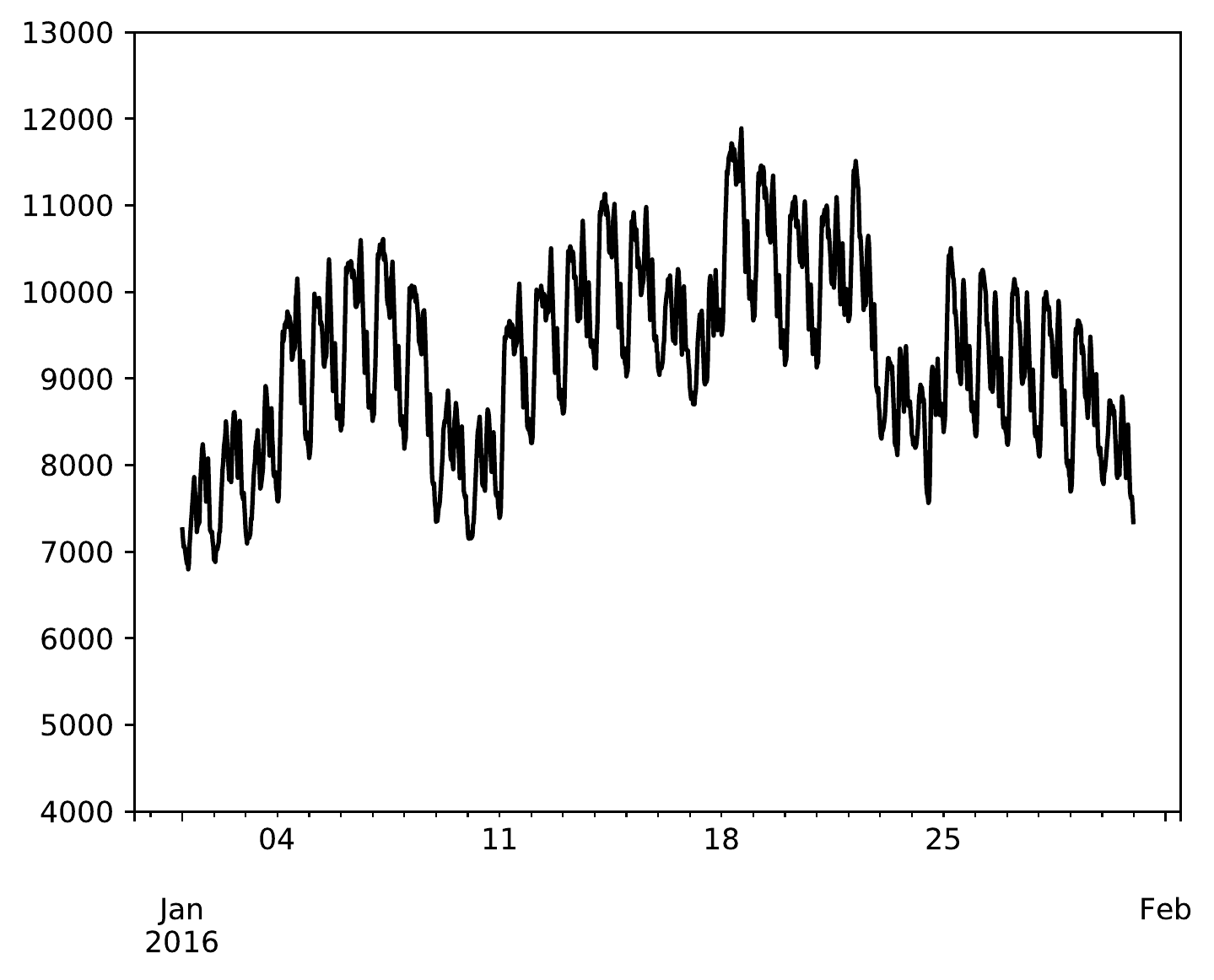} \hspace{-6pt}
    \includegraphics[width=\wlen]{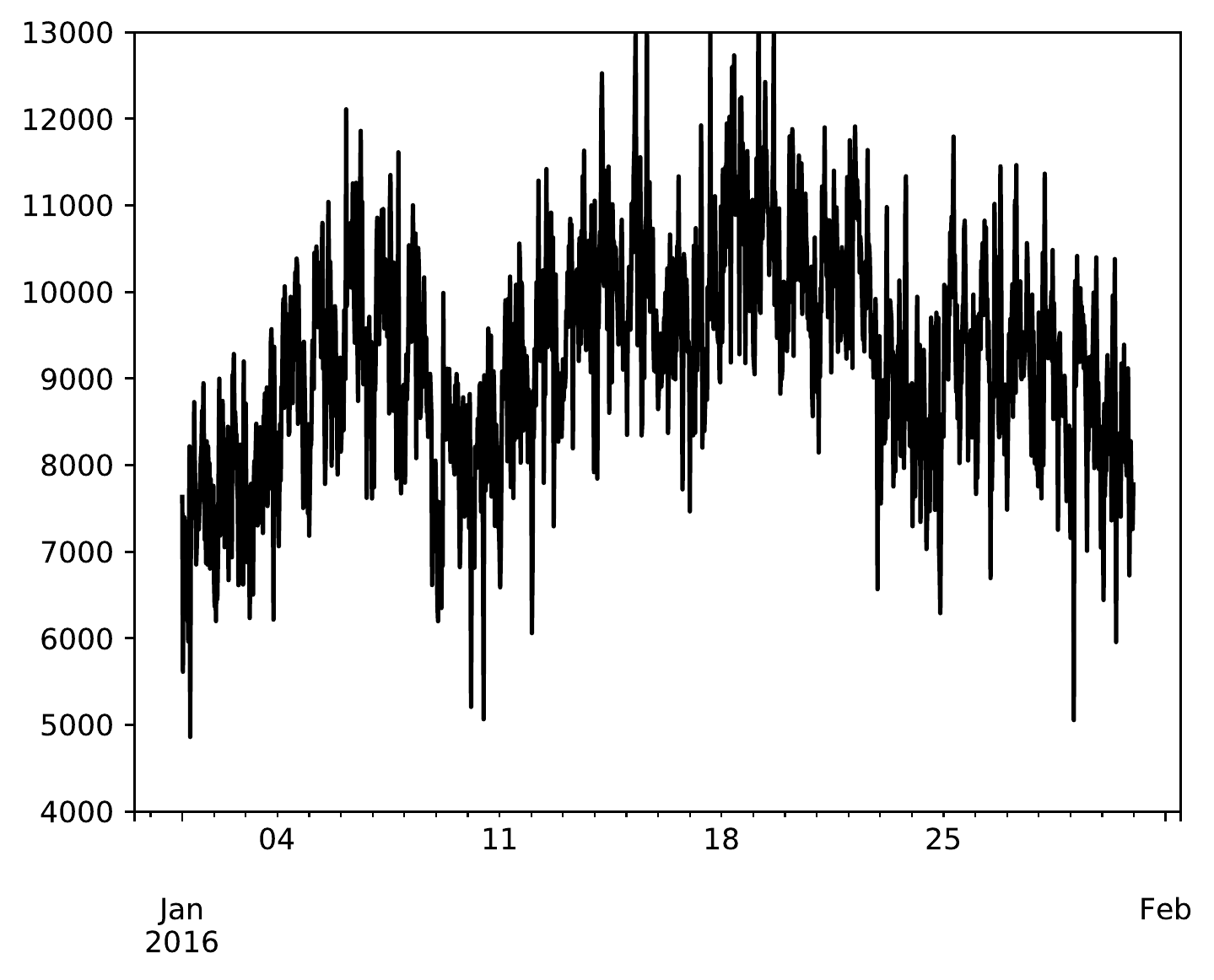} \hspace{-6pt}
    \includegraphics[width=\wlen]{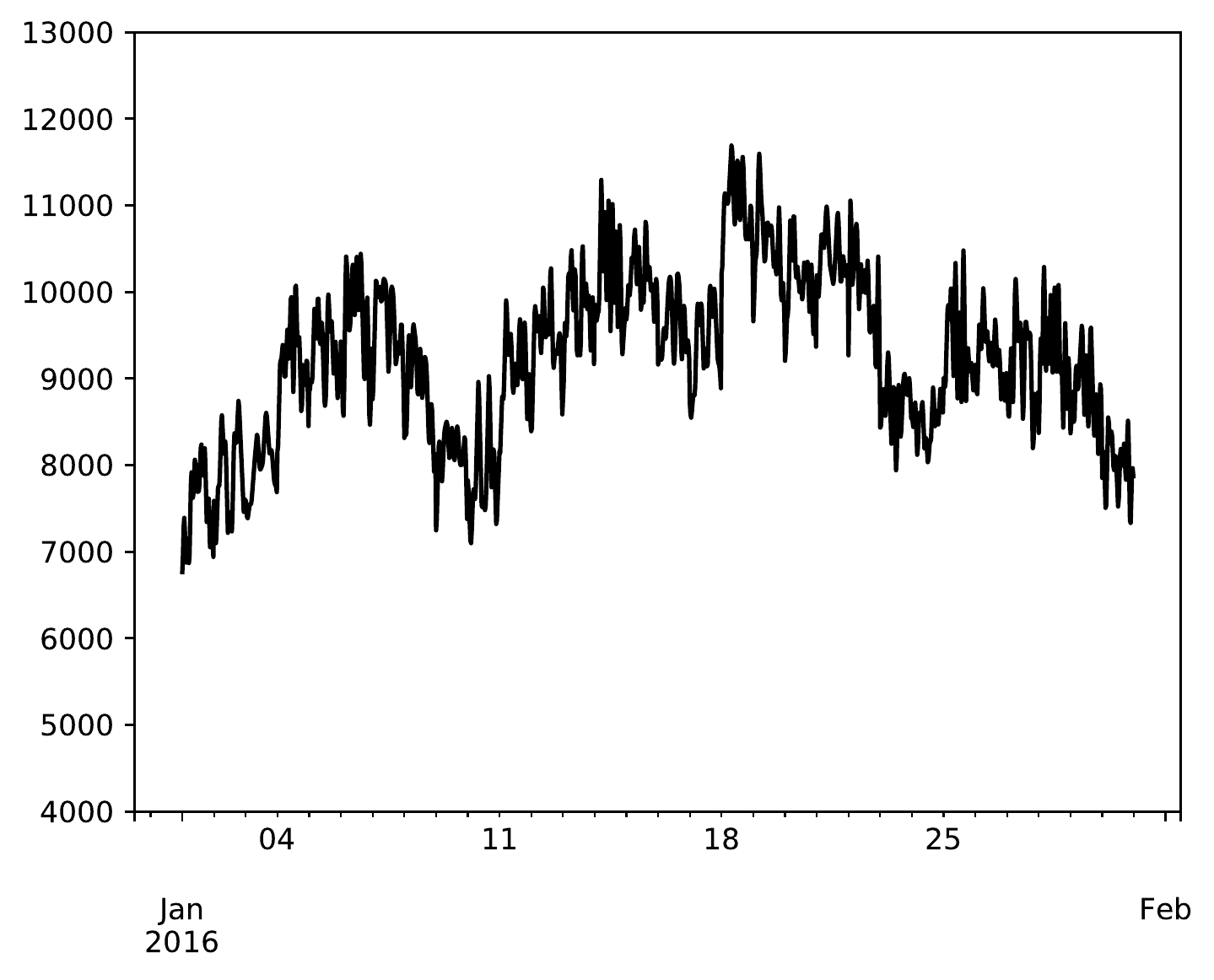}\hspace{-6pt}
    \includegraphics[width=\wlen]{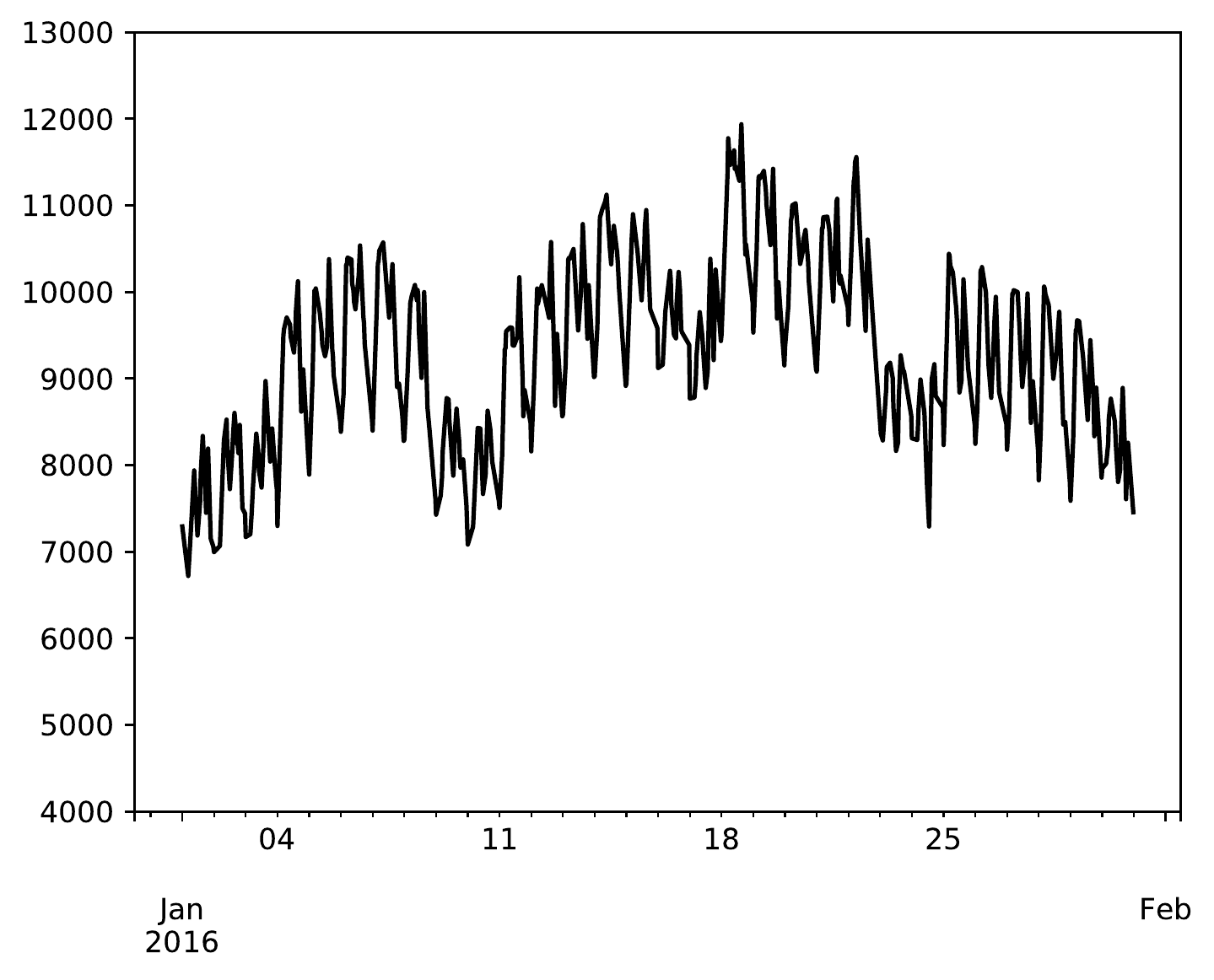}\\
    \includegraphics[width=\wlen]{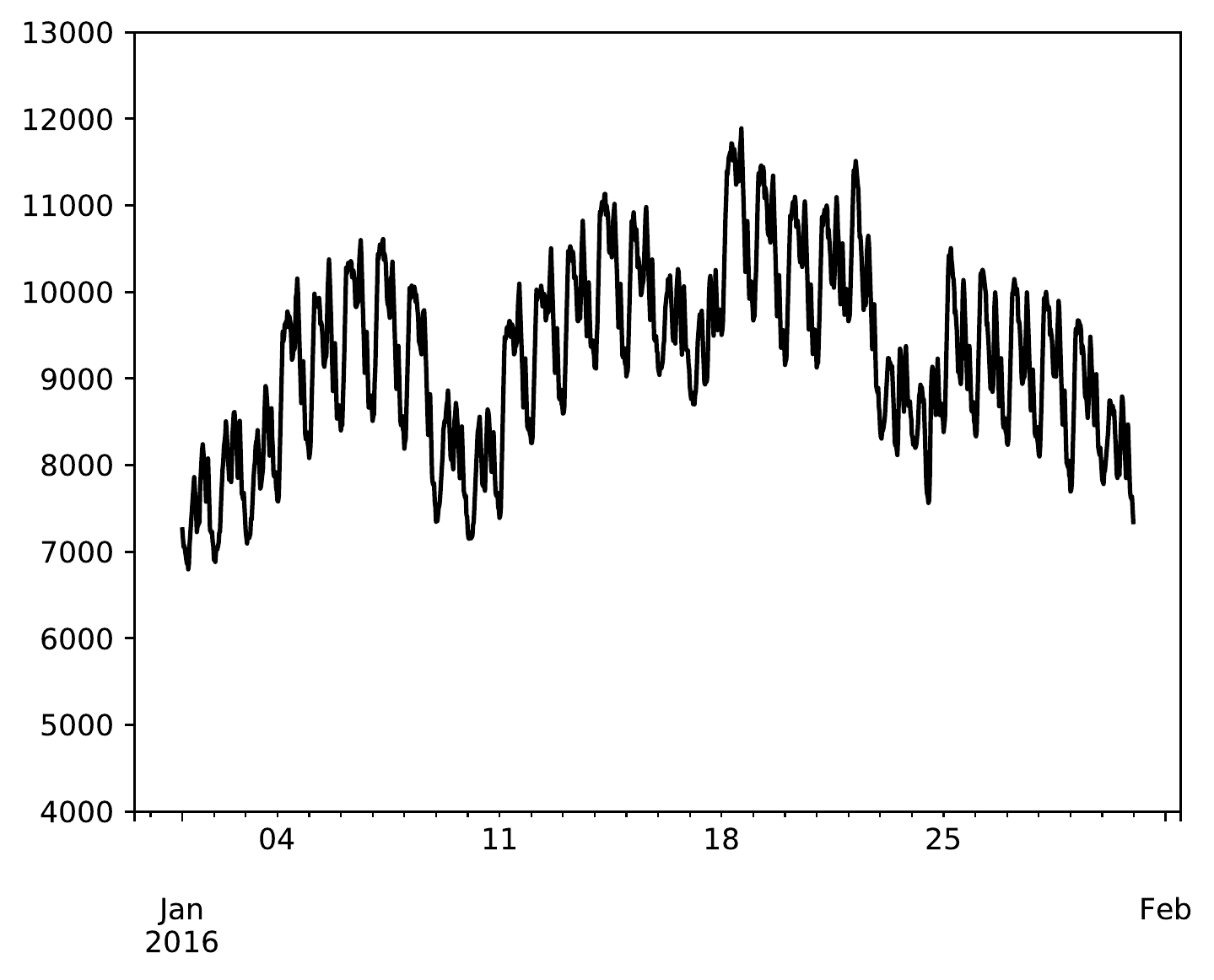} \hspace{-6pt}
    \includegraphics[width=\wlen]{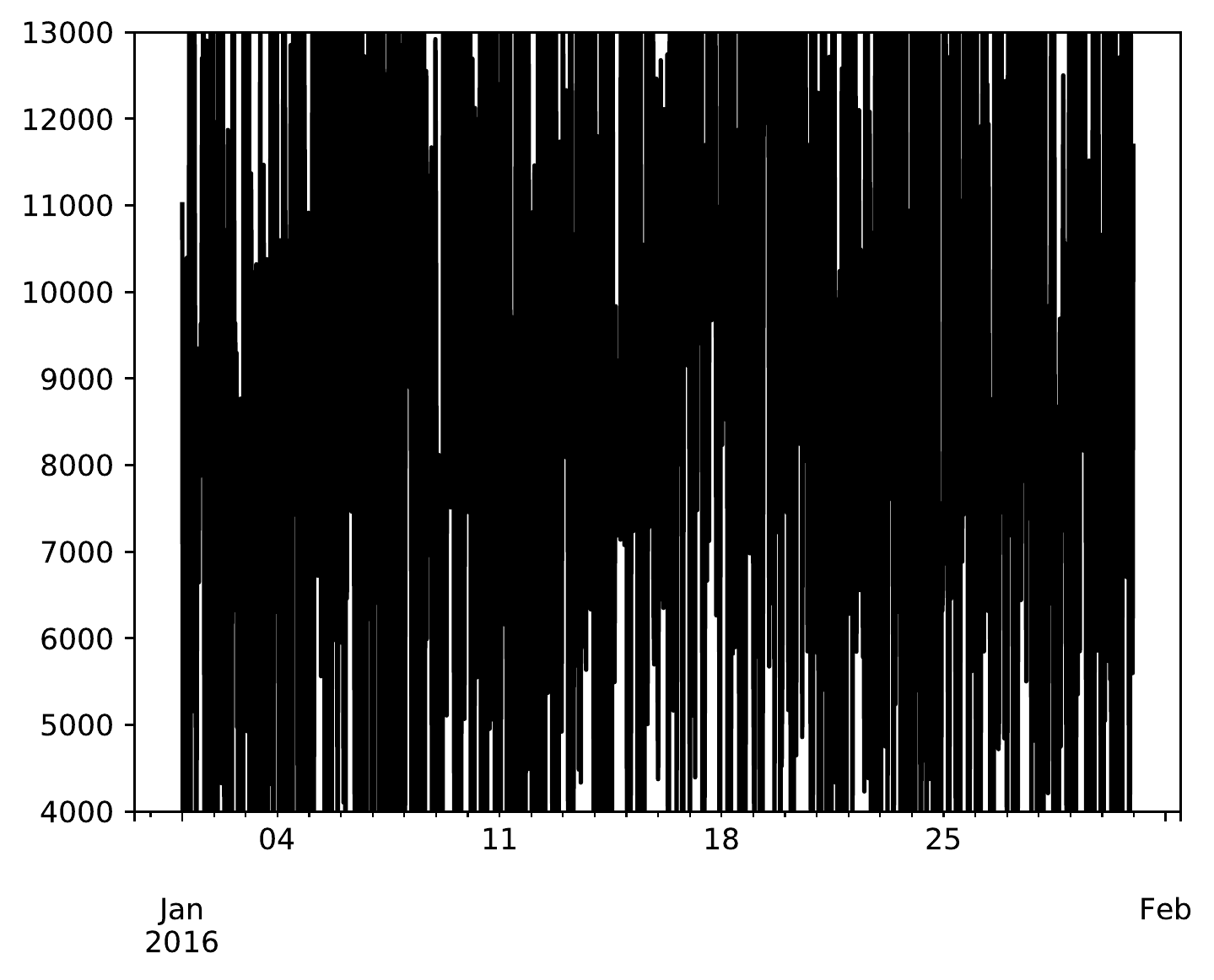} \hspace{-6pt}
    \includegraphics[width=\wlen]{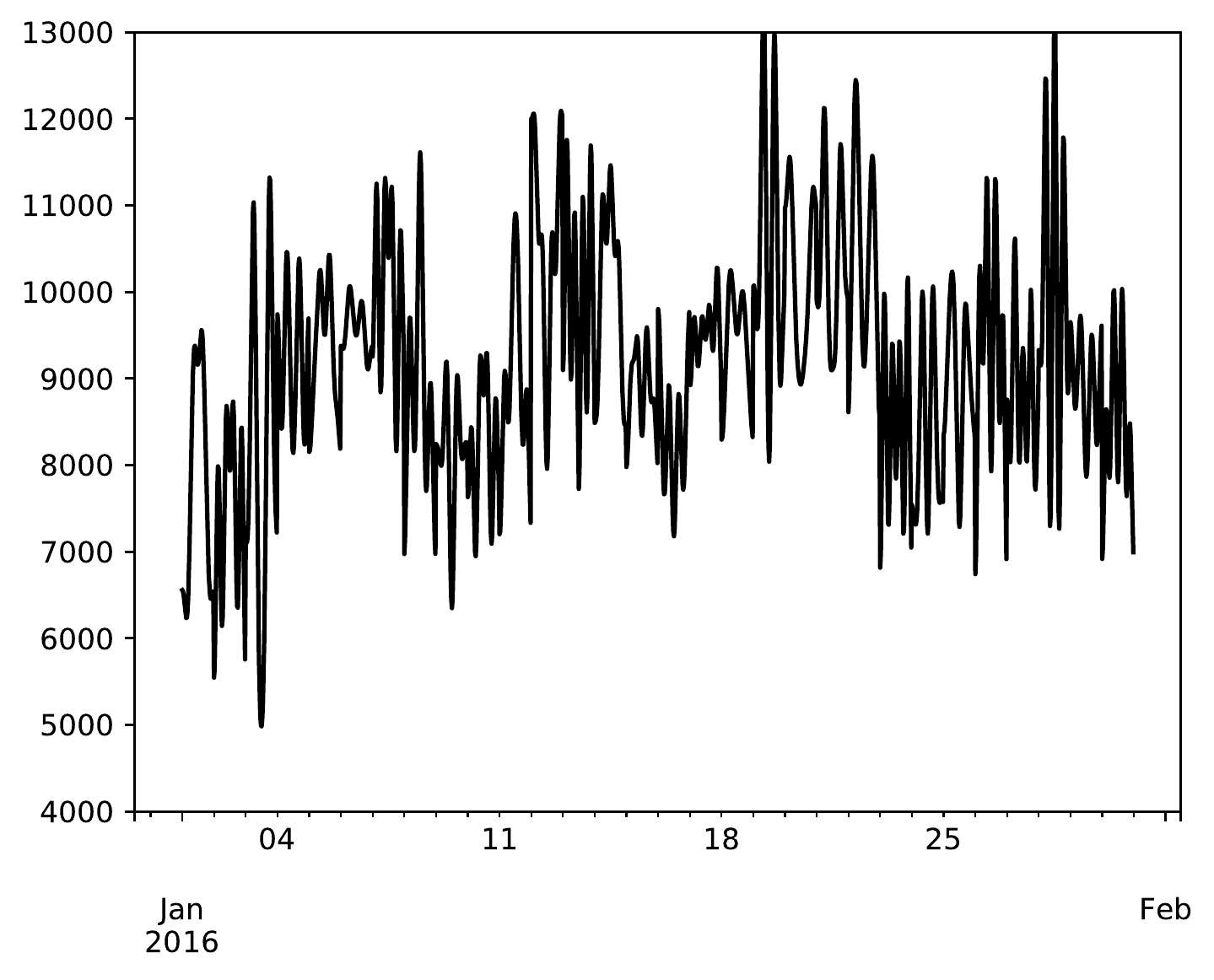}\hspace{-6pt}
    \includegraphics[width=\wlen]{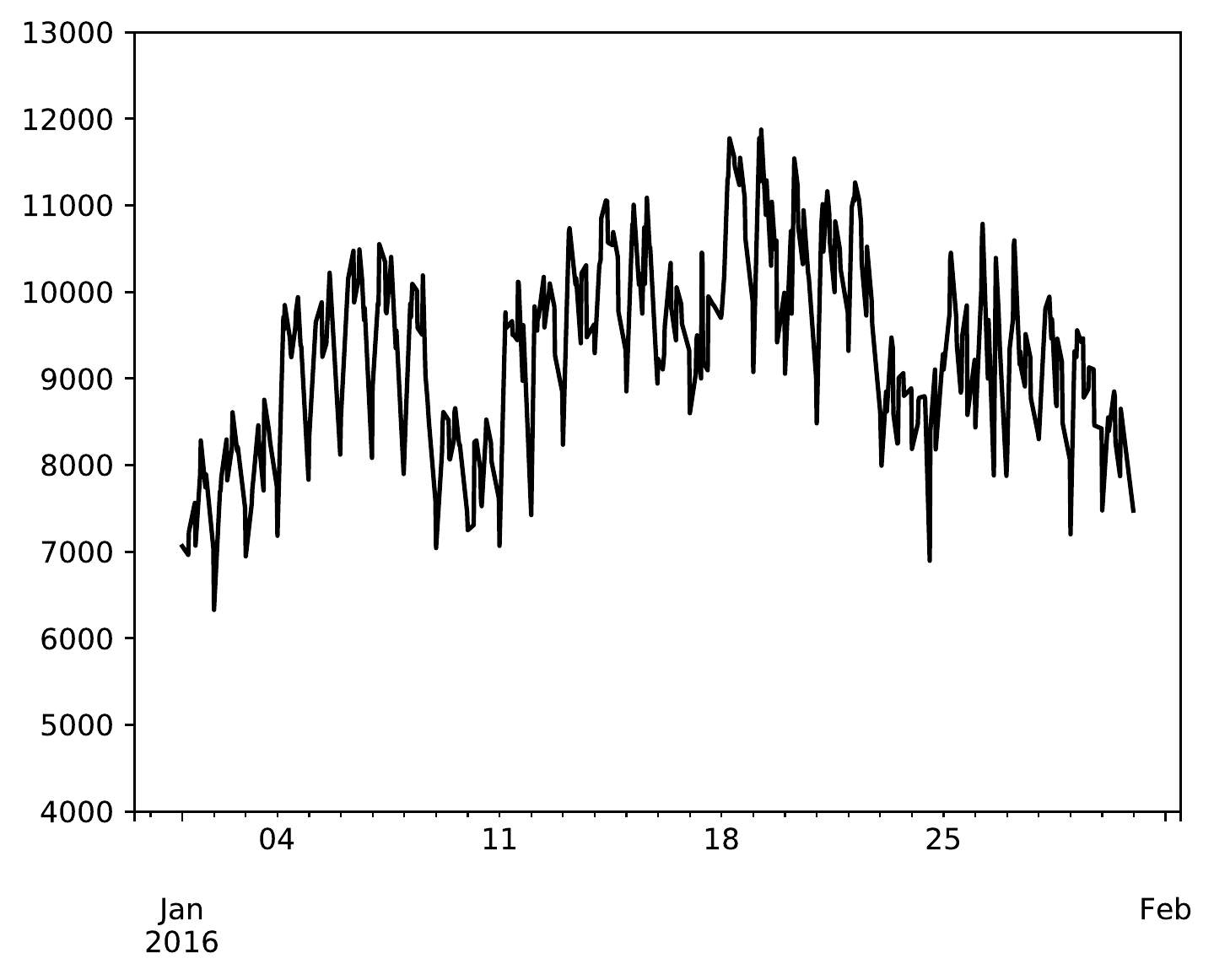}\\
    \caption{Real load consumption data for the Auvergne-Rh\^{o}ne-Alpes region in January 2016 (first column) and its private versions obtained using Laplace (second column), DFT (third column), and \algname{} (fourth column) with privacy budget $\epsilon=1$ (top),  $\epsilon=0.1$ (middle), and $\epsilon=0.01$ (bottom).
     }
    \label{fig:real_and_noisy_data_1} 
\end{figure}

Answering queries over contiguous $w$-periods corresponds to releasing
the private stream over the entire available duration.  Figure
\ref{fig:real_and_noisy_data_1} illustrates the real and private
versions of the data-stream for the Auvergne-Rh\^{o}ne-Alpes region in
January, 2016. It uses $w$-periods of size 48 for given privacy
budgets $\epsilon = 1.0$, $0.1$, and $0.01$, shown in the top, middle,
and bottom rows respectively.  Recall that the choice for the
$w$-period allows the data curator to ensure the protection of the
observed power consumptions within each
period. Thus, the released stream protects loads in each entire day.

The real loads are illustrated in the first column. The figure
compares our proposed \algname\ algorithm (fourth column) against the
\emph{Laplace mechanism} (second column), and the DFT algorithm
\cite{rastogi:10} (third column).

The experiments set the number of Fourier coefficients in the DFT and
sampling steps in \algname{} to $10$. The privacy budget allocated to
perform each measurement is split equally.  Additionally, for
\algname\, $\epsilon_s = \epsilon_p = \epsilon_o = \frac{1}{3}
\epsilon$, and the $L1$-sampling procedure uses a threshold value
$\theta$ of $1000$ (which is about one tenth of the average load
consumption in each region).  This information was publicly revealed.
Finally, \algname{} uses the following feature query set $\sF = \{\bF_1, \bF_2,
\bF_3\}$ in the optimization-based post processing step, with $\bF_1
\prec \bF_2 \prec \bF_3$.  $\bF_1$ is defined as described in Section
\ref{sec:post_process_procedure}; $\bF_2$ partitions each $w$-period
in $4$ sets, the intervals $[0, 14)$, $[14,24)$, $[24, 36)$, and
      $[36,48)$ that correspond to aggregated consumption for the
        following times of the day: [0am-7am), [7am-12pm), [12pm-6pm),
              and [6pm-0am) respectively. $\bF_3$ partitions each
                $w$-period in a single set, listing all the time steps
                within the $w$-period and thus describing the
                aggregated daily energy consumption. The query set
                represents salient moments in the day associated with
                different consumption patterns. These are proxy of
                consumer behaviors and thus energy
                consumption. Because these queries return private
                answers, the privacy is guaranteed by the
                post-processing immunity of DP (see Theorem
                \ref{th:postprocessing}).  Finally, if an algorithm
                reports negative noisy value for a stream point, we
                truncate it to zero.

Figure \ref{fig:real_and_noisy_data_1} clearly illustrates that, for a
given privacy disclosure level, \algname\ produces private streams
that are more accurate than its competitors when visualized.  The next
paragraph quantifies the error reported by the algorithms.

\def \wlen{.32\linewidth}
\def \wlenLabel{.20\linewidth}
\begin{figure}[!t]
    \centering
    \includegraphics[width=180pt]{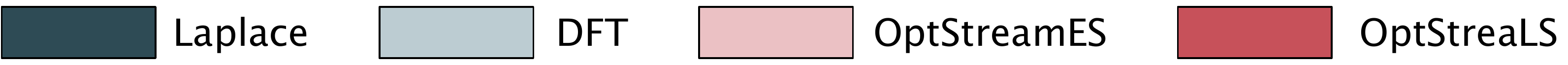}\\
    \includegraphics[width=\wlen]{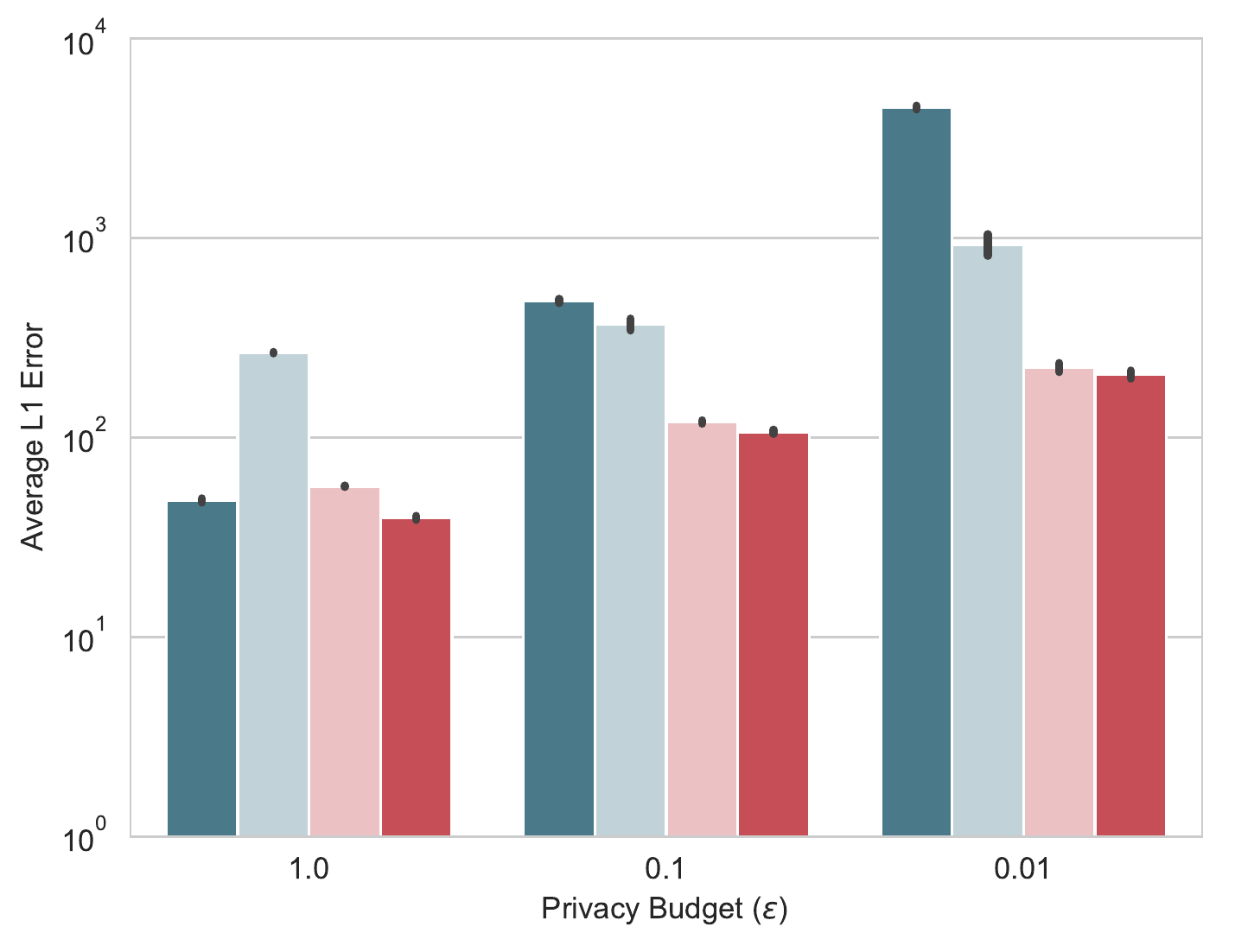}
    \includegraphics[width=\wlen]{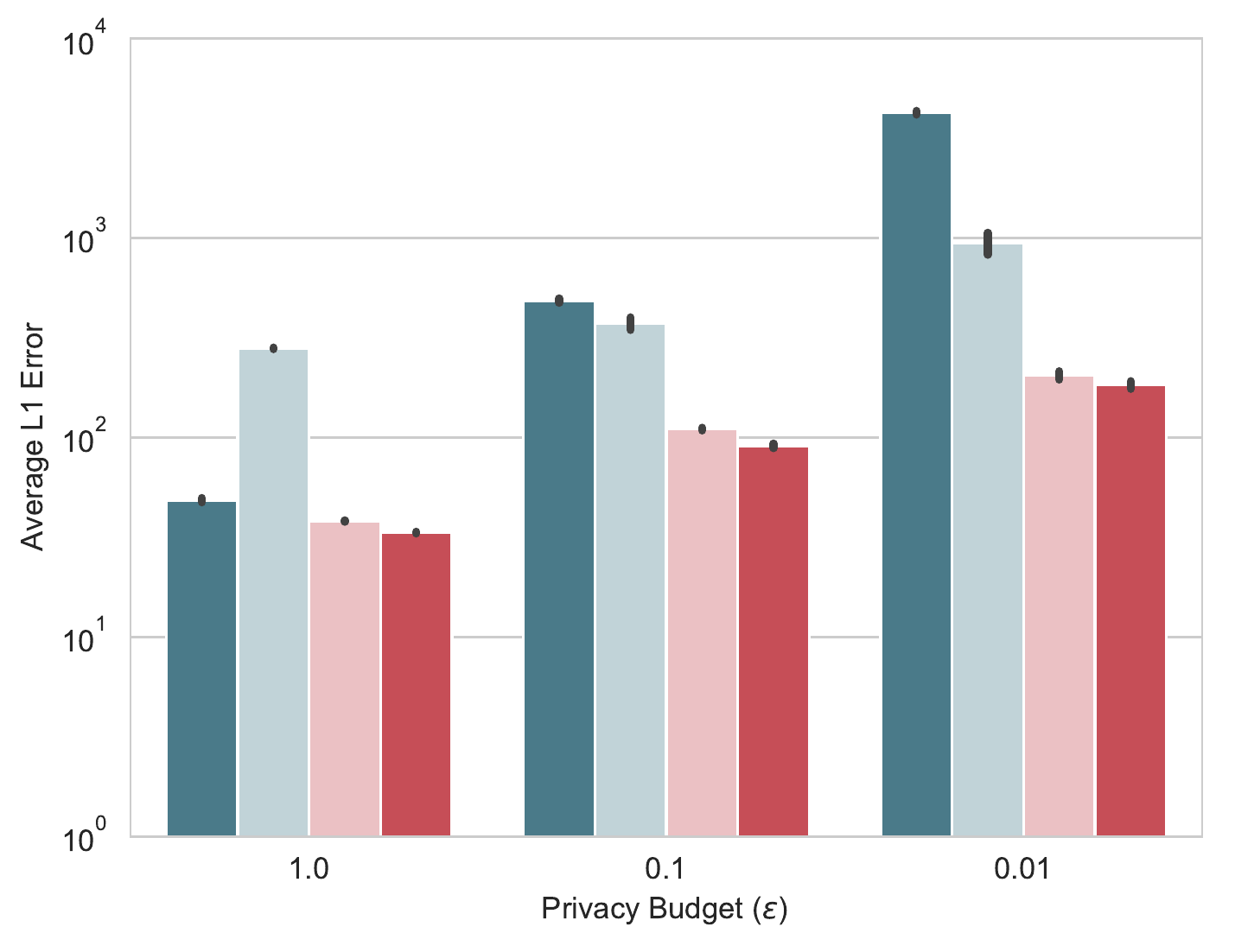}
    \includegraphics[width=\wlen]{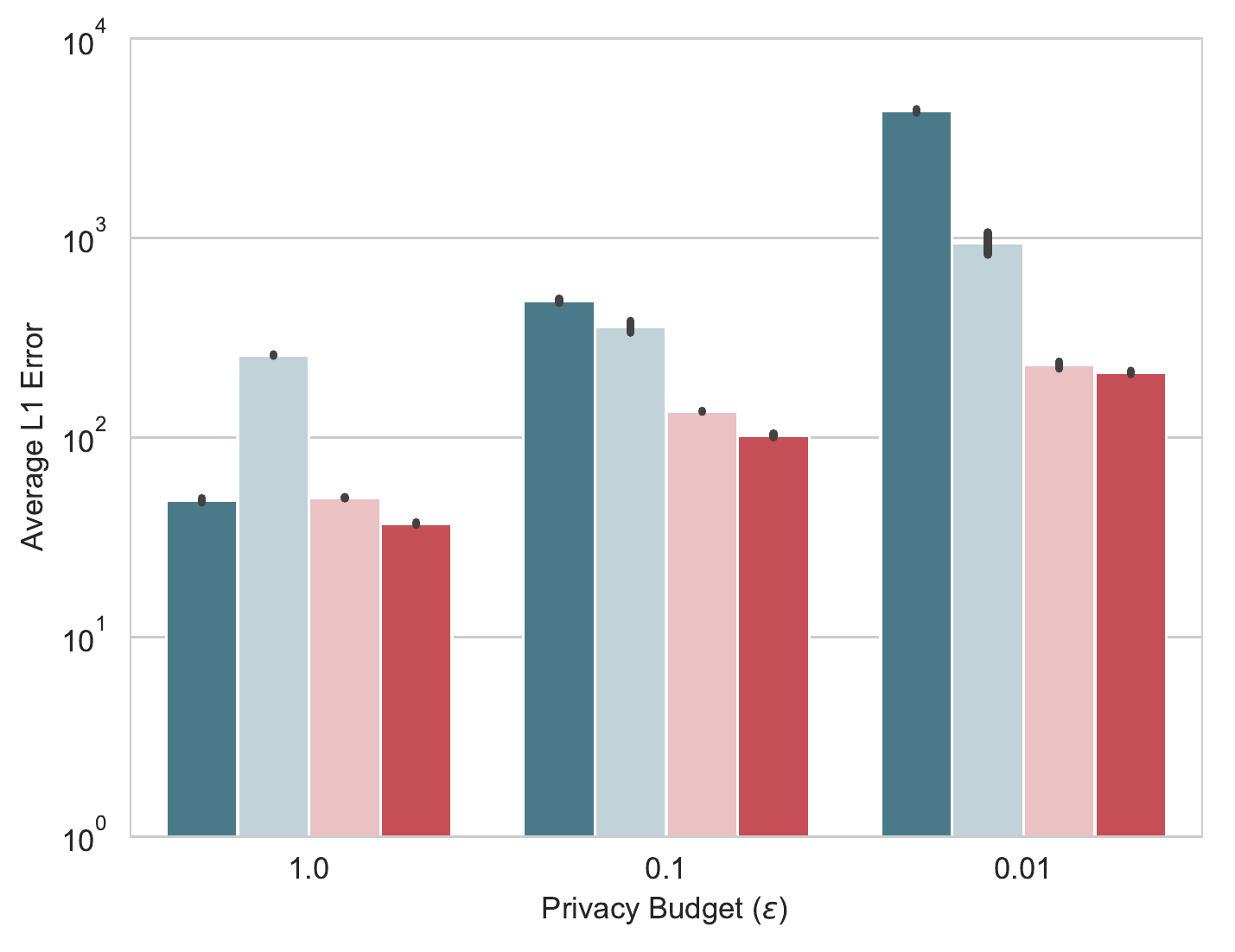}\\
    {\small
    \hspace{60pt}February \hfill ~~~~June~~~~ \hfill October\hspace{50pt}}

    \caption{$L_1$-error analysis: Load stream data for the months of February (left),
     June (middle), and October (right). The $y$-axis reports $\log_{10}$ of the average $L_1$-error for all the stream data streams $R \in \sR$.
     }
    \label{fig:L1_error_1}
\end{figure}

\subsubsection{Average $L_1$-Error Analysis} 

This section reports the average $L_1$-error for each $w$-period
produced by the algorithms. For each input data stream $\bx_R$
associated to a region $R \in \sR$ (illustrated in Table
\ref{tab:france_regions}) and each reported private stream
$\hat{\bx}_R$, we compare the average $L_1$-error defined as:
$\frac{1}{N_R} \| \hat{\bx}_R - \bx_R\|_1$, where $N_R$ is the length
of the data stream associated with region $R$.  Figure
\ref{fig:L1_error_1} reports the average errors across all streaming
regions $r \in \sR$ for the months of February (left), June (middle),
and October (right). The consumptions in these three months capture
different customers load profile behaviors due to different weather
patterns and durations of the day light. Each histogram reports the
log$_{10}$ value of the average error of 30 random trials. Two version
of \algname{} ({\algname ES} and {\algname LS}) are presented and
correspond to the qqually-spaced sampling, and the $L1$-sampling
procedures, respectively.  While all algorithms induce a notable
$L_1$-error which increases as the privacy budget decreases, the
figure highlights that \algname{} consistently outperforms competitor
algorithms. Additionally, \algname{} with $L1$-sampling is found to
outperform its equally-spaced sampling counterpart, especially for
large privacy budgets. For small privacy budgets, the two versions of
the algorithm tend to perform similarly. This is due to the fact that
the $L1$-score becomes less accurate as the amount of noise increases.

\subsubsection{Hierarchical Private Data-Stream Release}

This section evaluates the extensions proposed in Section
\ref{sec:algorithm_ext} for releasing aggregated queries over
hierarchical data streams.  We answer the following queries over
contiguous $w$-periods for the whole duration of the stream: count
queries over the data stream $\bx(R) = x_1(R), x_2(R), \ldots$ for
each region $R \in \sR$ listed in Table \ref{tab:france_regions}, as
well as count queries $\bx = x_1, x_2, \ldots $, where each $x_t =
\sum_{R \in \sR} x_t(R)$ represents the aggregated load consumption at
national level.  Thus, we create a hierarchy of two levels and answer
simultaneously all queries. We allocate a privacy budget of
$\frac{\epsilon}{2}$ to each level of the hierarchy. \algname{} is
compared against the Laplace mechanism applied to each stream data,
using a uniform allocation of the privacy budget
($\frac{\epsilon}{2}$) for each query in a different level of the
stream hierarchy and the DFT algorithm which answers queries over each
data stream by uniformly allocating a portion of the privacy budget at
each level of the hierarchy.

\def \wlen{.32\linewidth}
\def \wlenLabel{.28\linewidth}
\begin{figure}[!t]
    \centering
    \includegraphics[width=180pt]{legend_1.pdf}\\
    \includegraphics[width=\wlen]{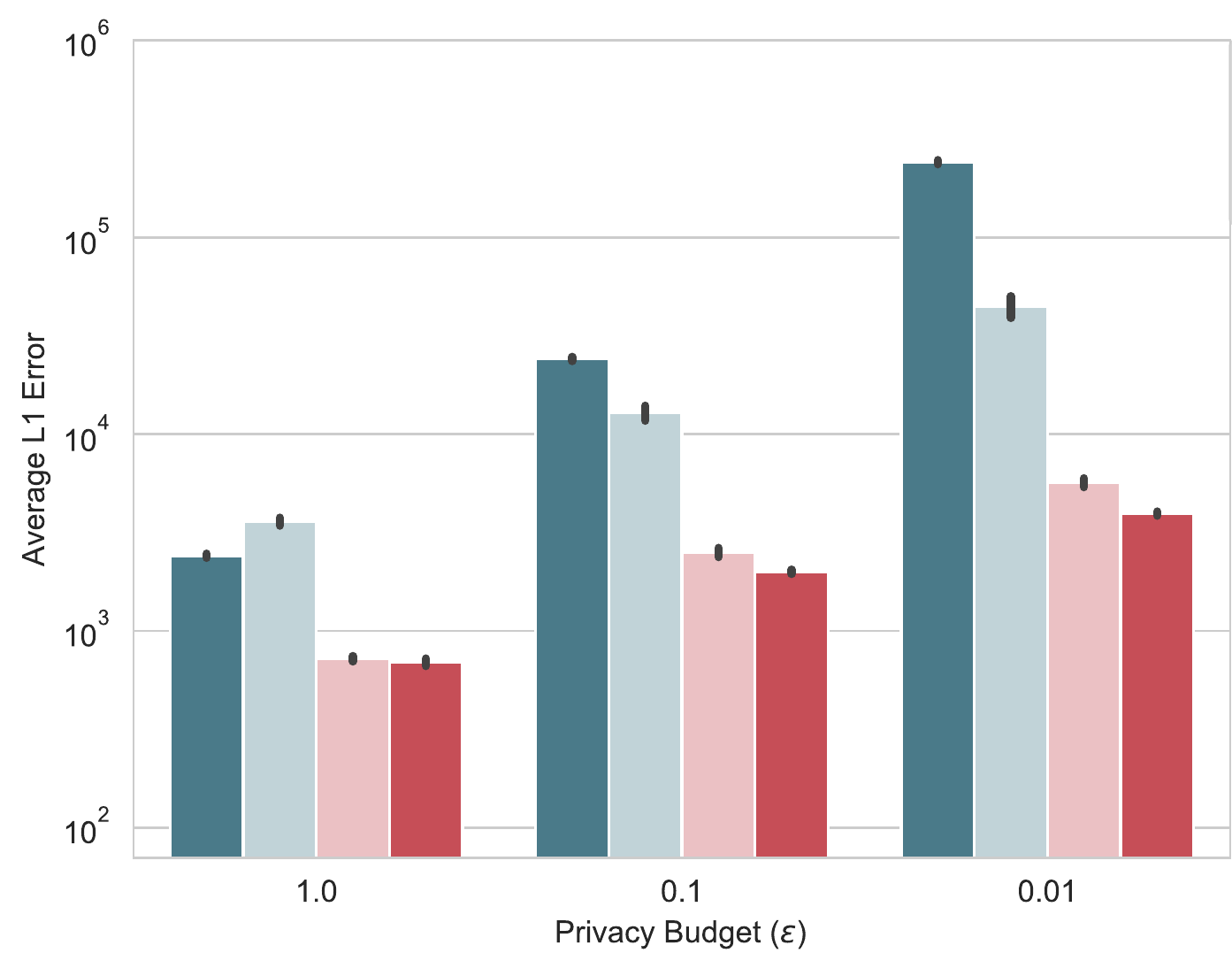}
    \includegraphics[width=\wlen]{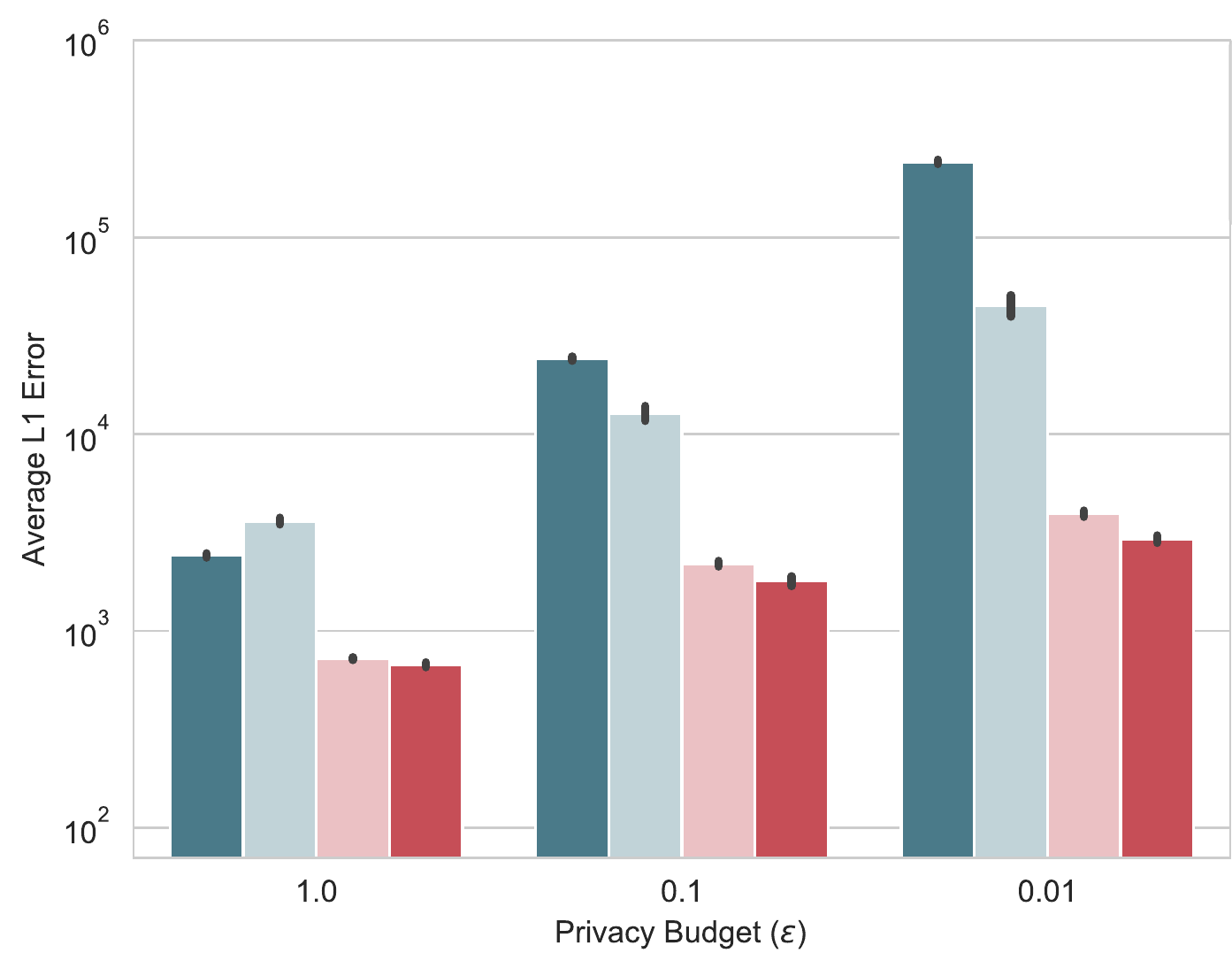}
    \includegraphics[width=\wlen]{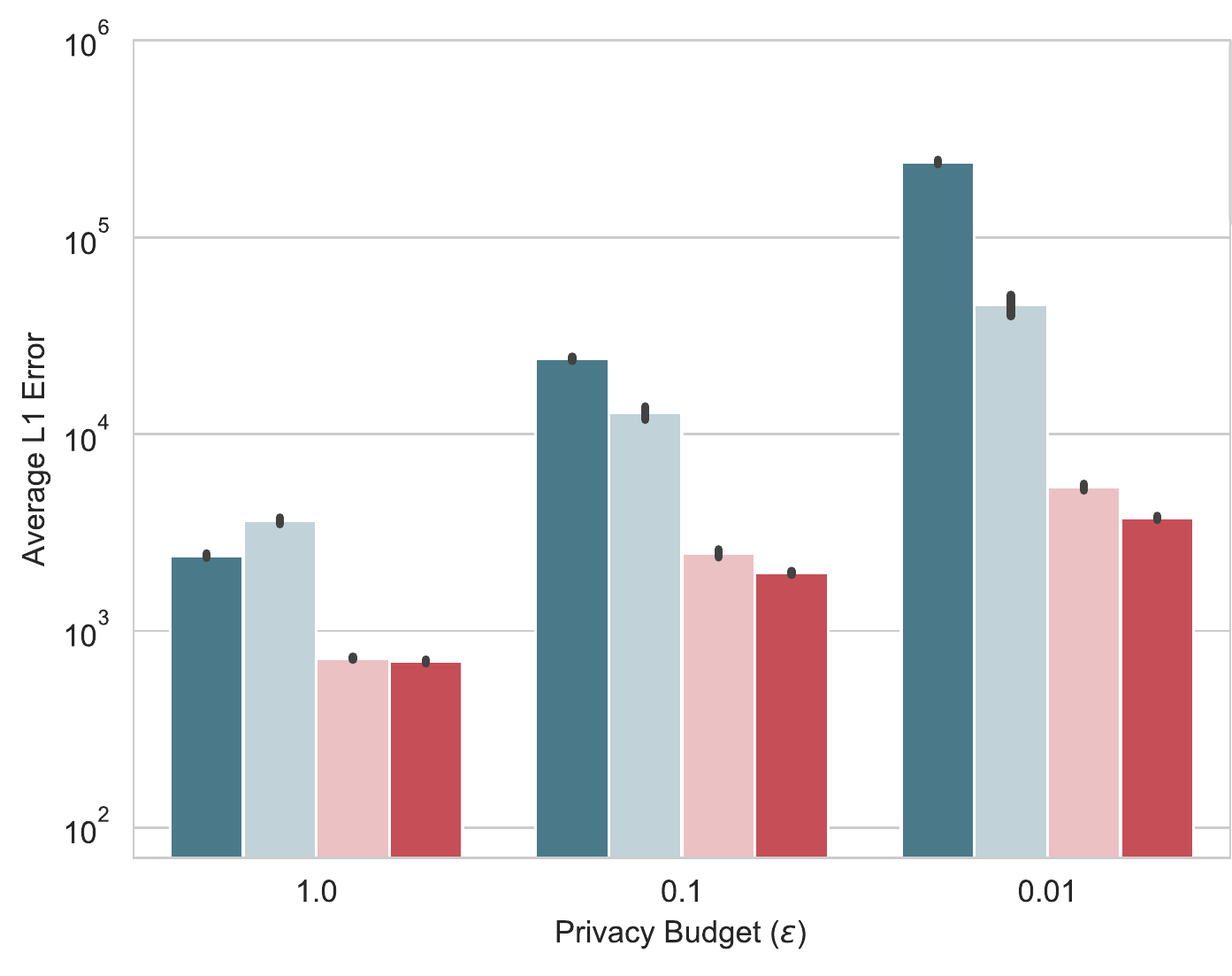}\\
    \includegraphics[width=\wlen]{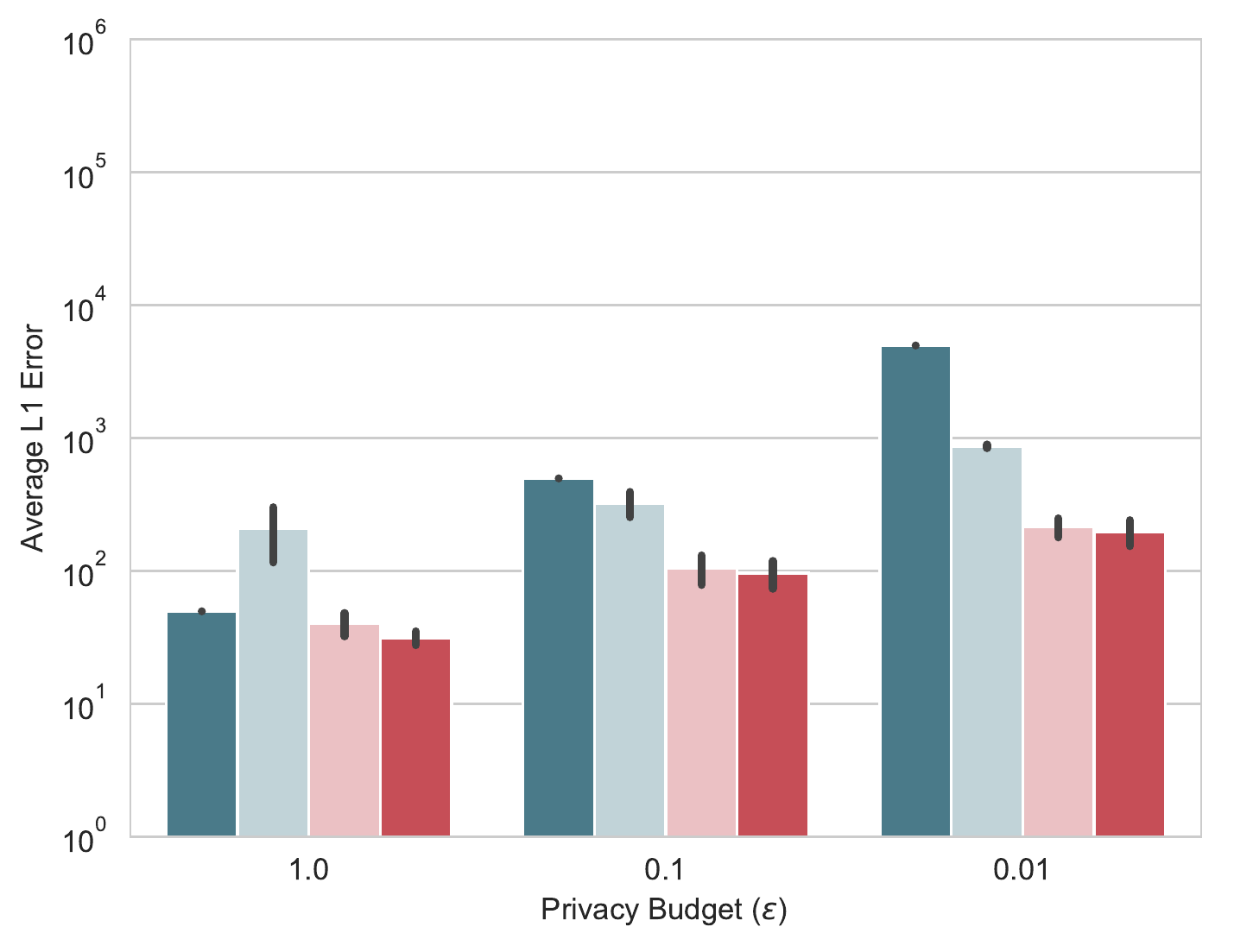}
    \includegraphics[width=\wlen]{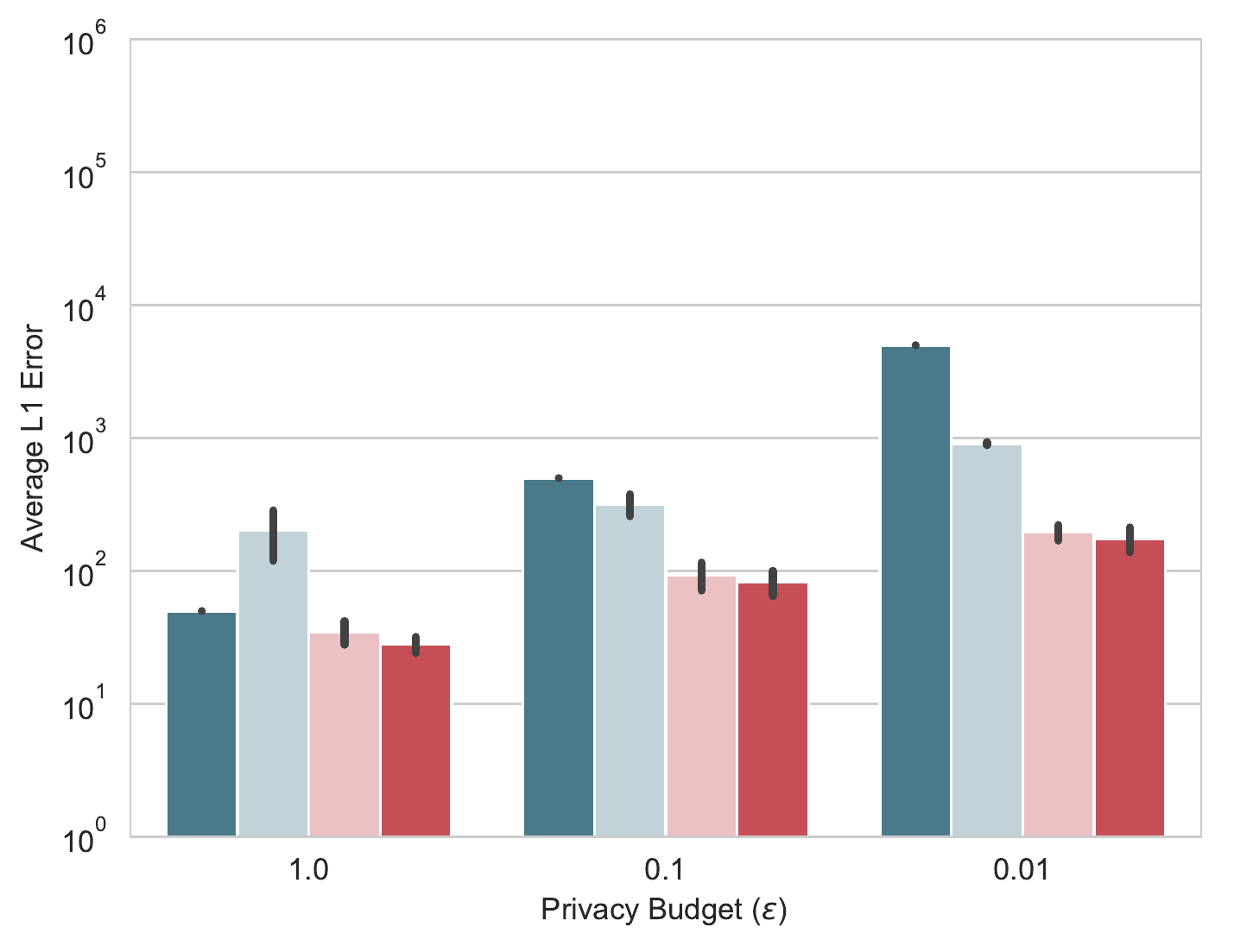}
    \includegraphics[width=\wlen]{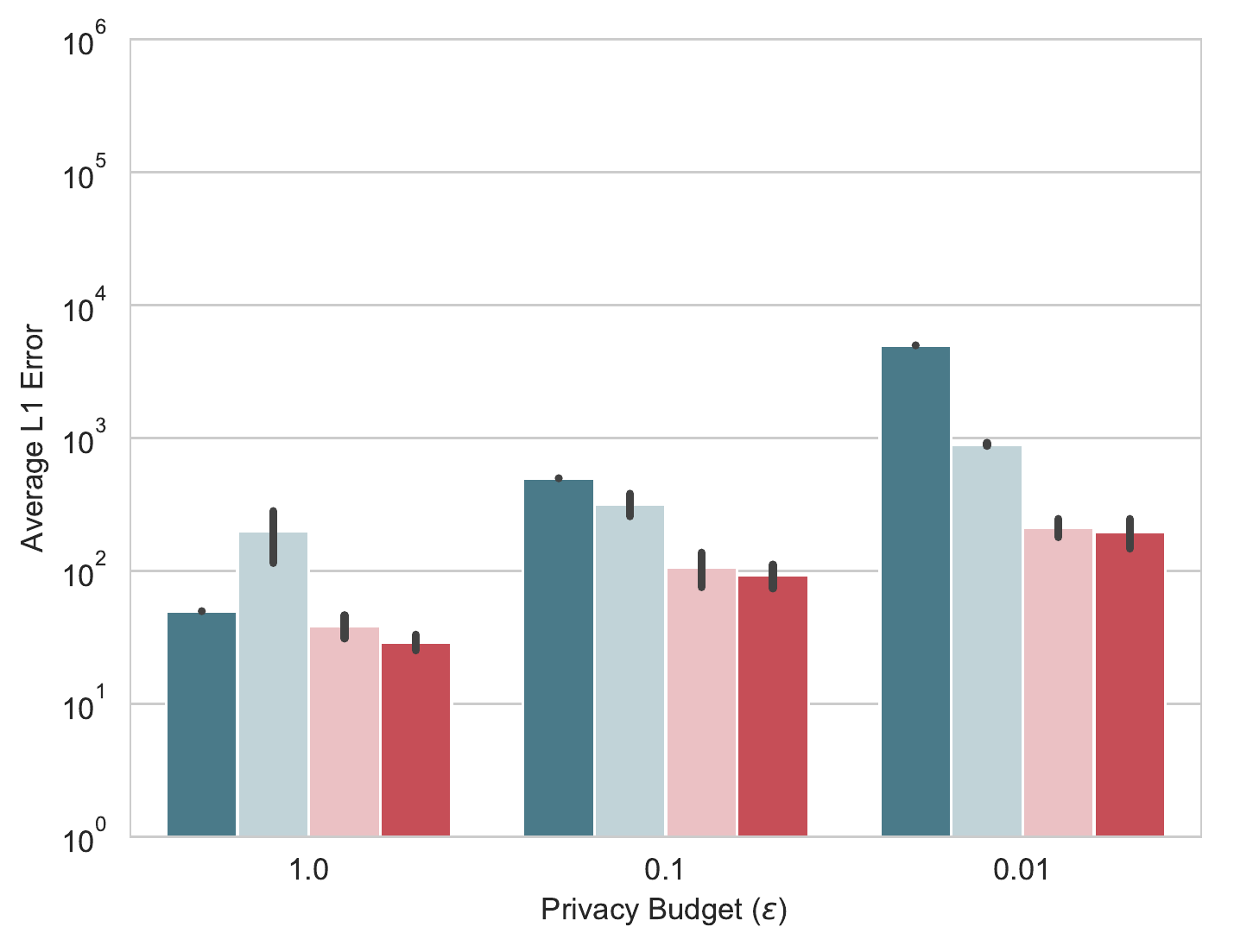}\\
    {\small
    \hspace{60pt}February \hfill ~~~~June~~~~ \hfill October\hspace{50pt}}

    \caption{$L_1$-error analysis: Hierarchical energy load stream data for the months of February, June, and October. 
    The $y$-axis reports $\log_{10}$ of the average $L_1$-error 
    at national level (top) and at the regional level (bottom).
     }
    \label{fig:hierarchical_L1_error}
\end{figure}

Figure \ref{fig:hierarchical_L1_error} shows the results for three
different months of the year: February (left), June (middle), and
October (right), and under different indistinguishability parameters
$\alpha$ for privacy budget $\epsilon=1.0$. The top row of the figure
reports the average $L_1$-errors when releasing stream data $\bx(R)$
associated with each region $R$, while the bottom row gives the
$L_1$-errors when releasing the stream data $\bx$ at national level.
Each histogram reports the log$_{10}$ value of the average error of 30
random trials.  The results illustrates similar trends to those in
previous experiments. Overall, \algname\ with the adaptive
$L1$-sampling produces stream data with the lowest average
$L_1$-errors for both levels of the stream hierarchy.  In addition to
the improved error it is important to note that \algname\ ensures the
consistency of the values of the private stream in the hierarchy,
i.e., for each time step, the reported sum of the loads across all
regions equals the reported load at national level. Neither the
Laplace mechanism nor DFT do ensure such property.

\subsubsection{Impact of Privacy on Forecasting Demand}

The final results evaluate the capability of the released private
streams to accurately predict future consumptions.  To do so, we adopt
the Autoregressive Moving Average (ARMA) model
\cite{alwan1988time,zhang2003time,cochrane2005time,hipel1994time}. ARMA
is a popular stochastic time series model used for predicting future
points in a time series (forecast). It combines an Autoregressive (AR)
model \cite{alwan1988time} and a Moving Average (MA) model
\cite{alwan1988time,cochrane2005time}, i.e., ARMA($p, q$) combines
AR($p$) and MA($q$) and is suitable for univariate time series
modeling.  In an AR($p$) model, the future value of a variable $x_t$
is assumed to be a linear combination of the past $p$ observations and
a random error: $x_t = c + \sum_{i=1}^p \phi_i x_{t-i} + \beta_t$,
where $c$ is constant, $\beta_t$ is a random variable modeling white
noise at time $t$, and the $\phi_i (i=1,\ldots,p)$ are model
parameters. A MA($q$) model uses the past $q$ errors in the time
series as the explanatory variables. It estimates a variable $x_t$
using $\mu + \beta_t + \sum_{i=1}^{q} \theta_i \beta_{t-i}$, where the
$\theta_i (i=1,\ldots, q)$ are model parameters, $\mu$ is the
expectation of $x_t$, and the $\beta_t$ terms are white noise error
terms.  The ARMA model with parameters $p$ and $q$ refers to the model
with $p$ autoregressive terms and $q$ moving-average terms: It
estimates a future time step value $x_t$ as $c + \beta_t +
\sum_{i=1}^p \phi_i x_{t-i} + \sum_{i=1}^q \theta_i \beta_{t-i}$. In
our experiments, we use an ARMA model with parameters $p=q=1$ to
estimate the future 48 time steps (corresponding to a day) when
trained with the past four weeks of the private data stream estimated
using Laplace, DFT, and \algname\ with $L1$-sampling.  All models use
the same parameters adopted in the previous sections.

\def \wlen{.245\linewidth}
\def \wlenLabel{.10\linewidth}
\begin{figure}[!t]
    \centering
    \includegraphics[width=\wlen]{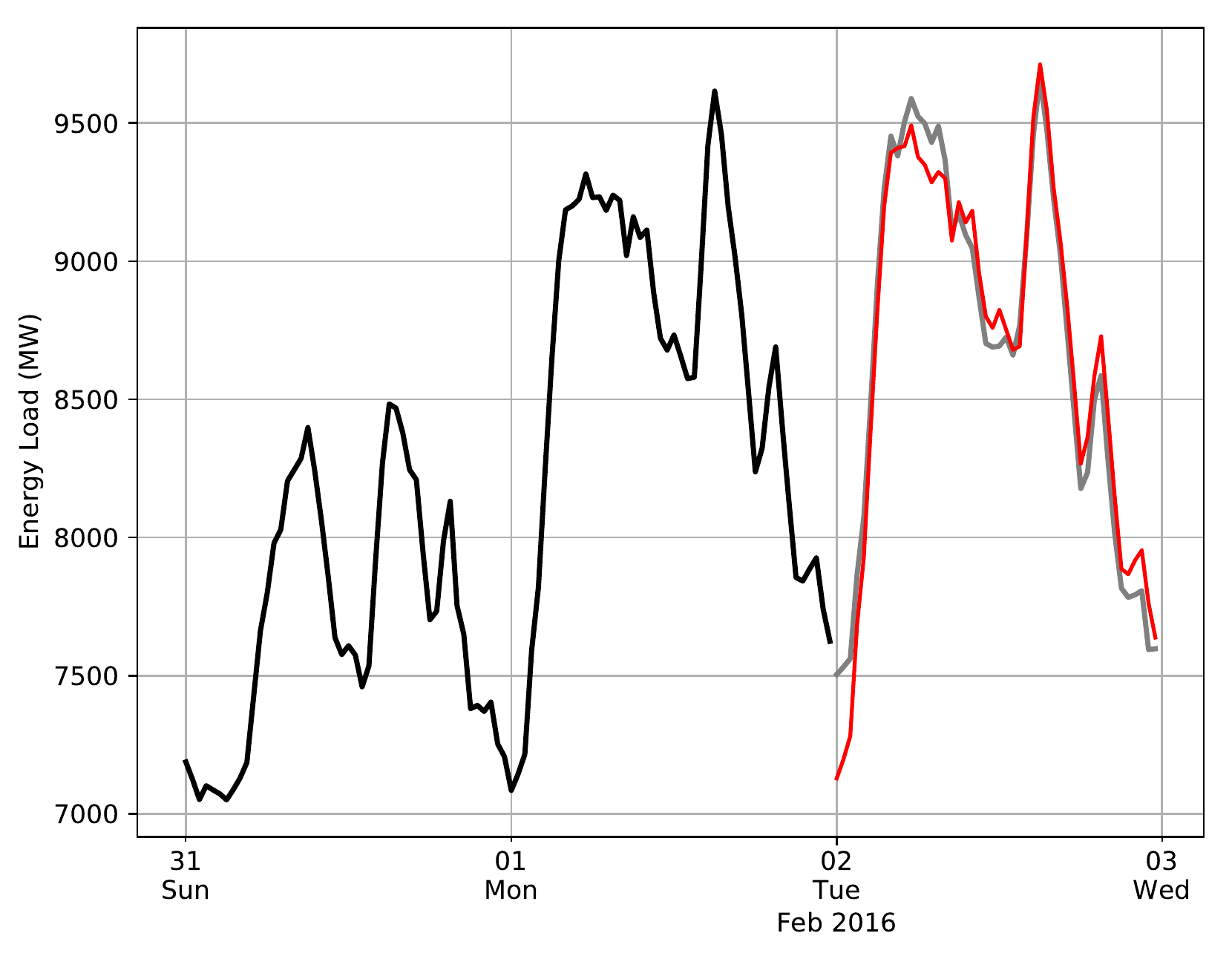}
    \includegraphics[width=\wlen]{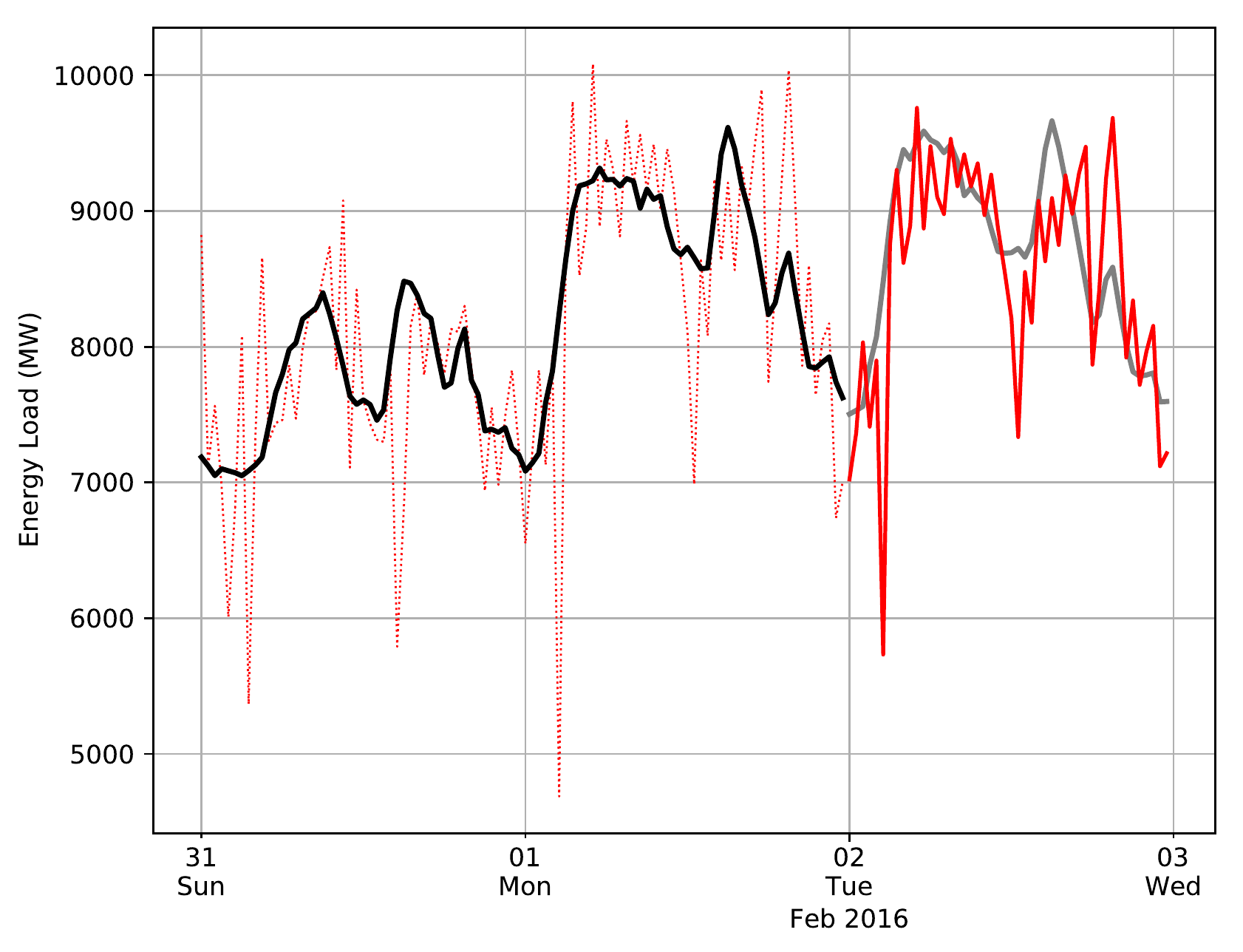}
    \includegraphics[width=\wlen]{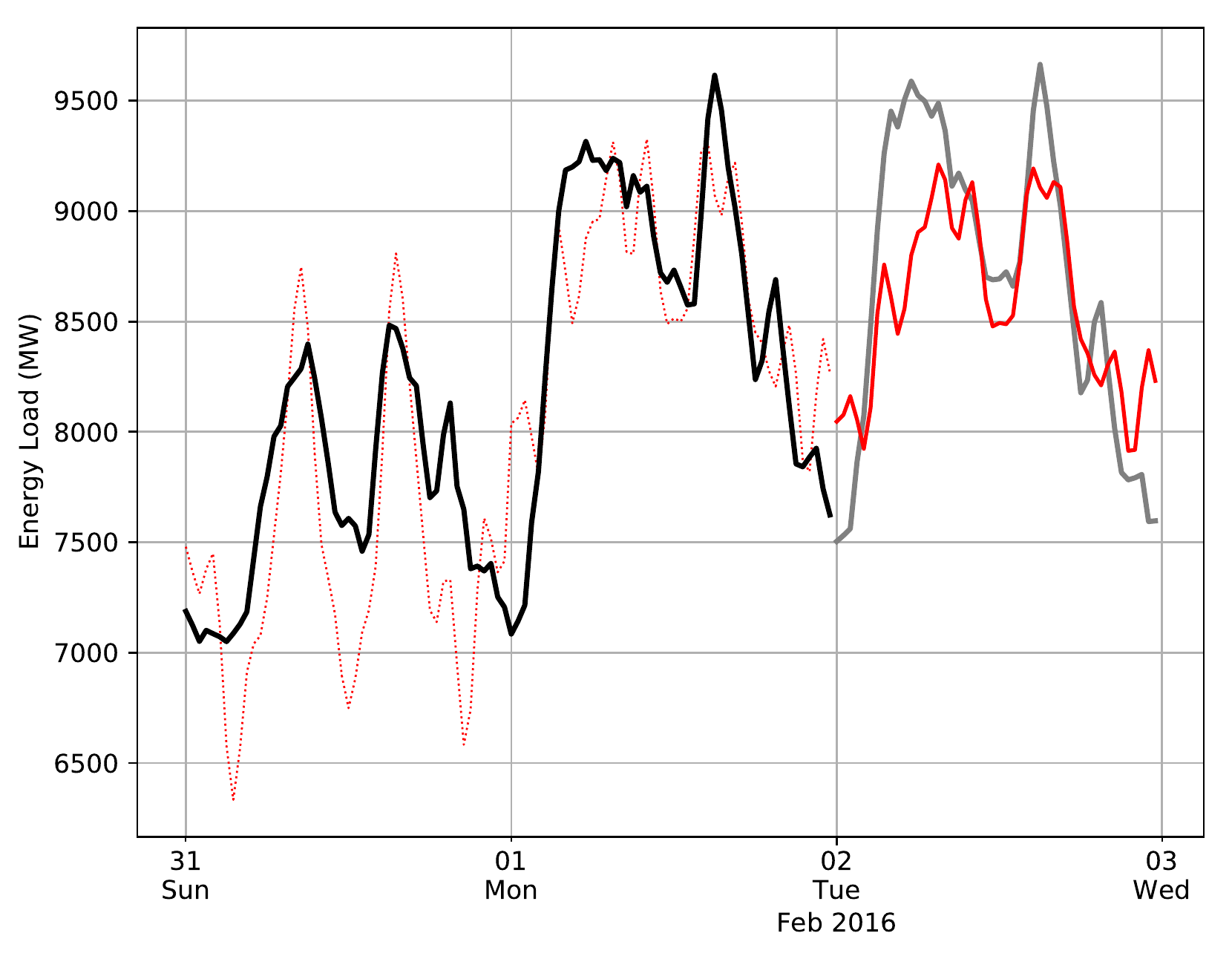}
    \includegraphics[width=\wlen]{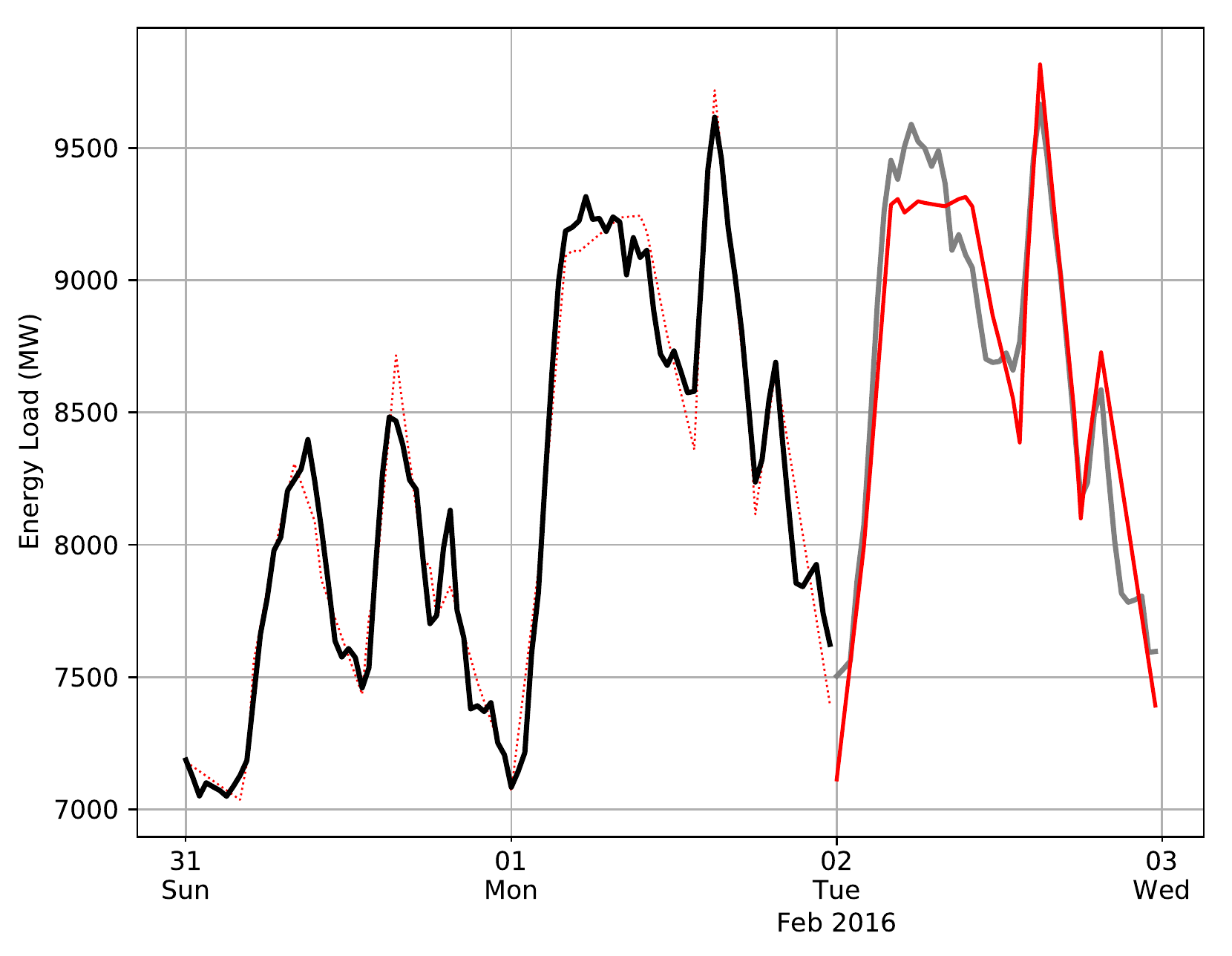}\\ 
    {\small\centering
    \hspace{35pt}
    ~~Real~ \hspace{75pt}
    Laplace \hspace{70pt}
    ~~DFT~~ \hspace{60pt}
    \algname
    }
    \caption{Prediction error: Forecast for a one day load consumption through an ARMA model on the real load consumption data (Real) and its private versions obtained using Laplace, DFT, and \algname{} with privacy budget $\epsilon=0.1$.
     }
    \label{fig:predictions}
\end{figure}

Figure \ref{fig:predictions} visualizes the forecast for the load
consumptions in the Auvergne-Rh\^{o}ne-Alpes region for February 2,
2016.  The black and gray solid lines illustrates, respectively, the real load values observed so far and those of the day to be forecasted.
The dotted red lines illustrates the private stream data estimated so
far (and used as input to the prediction model) and the solid red
lines depict the prediction obtained using the ARMA model.  Figure
\ref{fig:predictions} shows the forecast results using the real
data (Real) and the private stream obtained through Laplace, DFT, and
\algname, respectively. The figure clearly shows that \algname{} is
able to visually produce better estimates for the next day forecast.

\medskip
\def \wlen{.32\linewidth}
\def \wlenLabel{.22\linewidth}
\begin{figure}[htb]
    \centering
    \includegraphics[width=180pt]{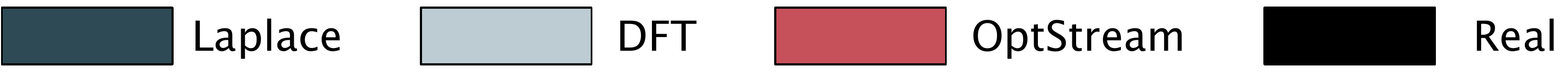}\\
    \includegraphics[width=\wlen]{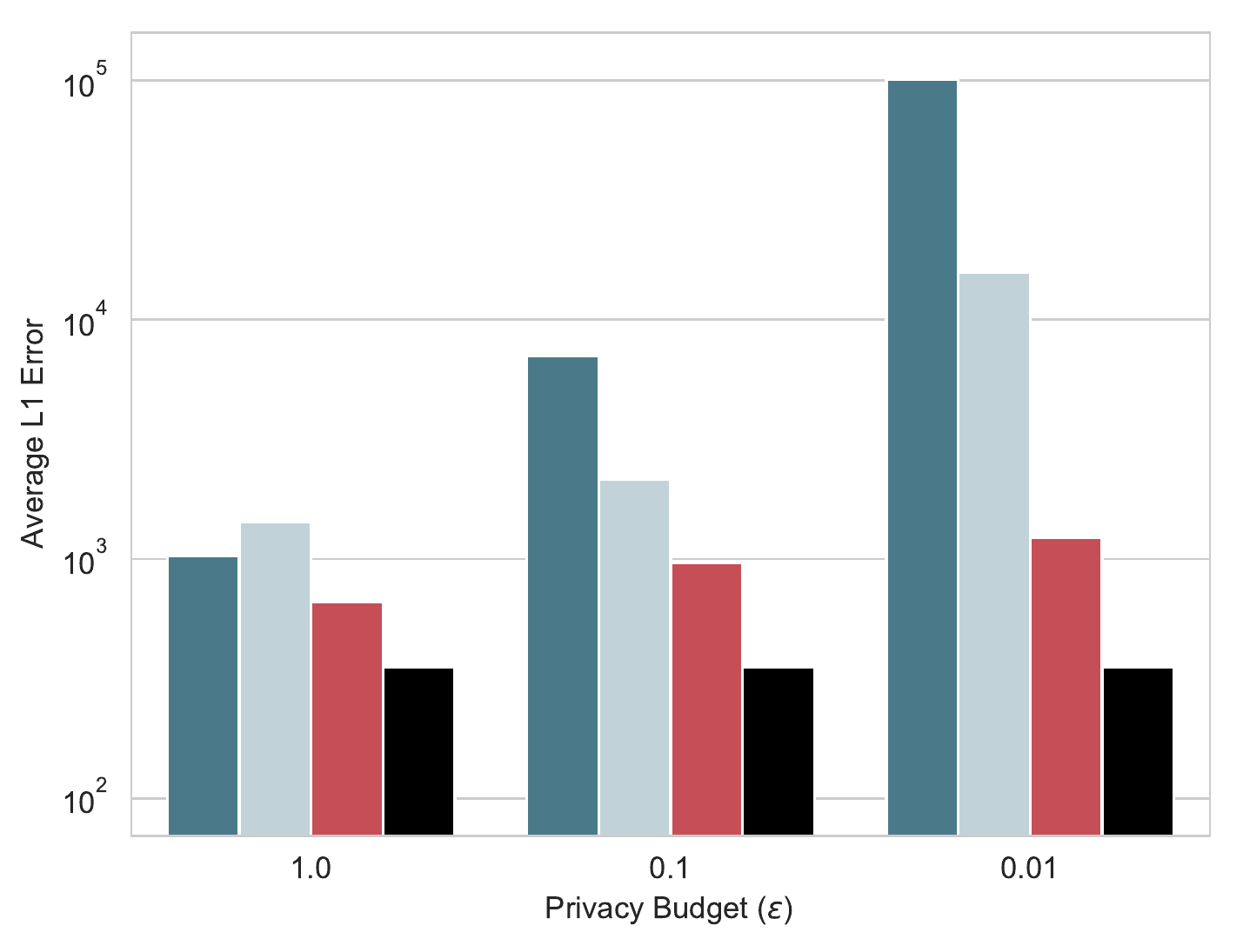}
    \includegraphics[width=\wlen]{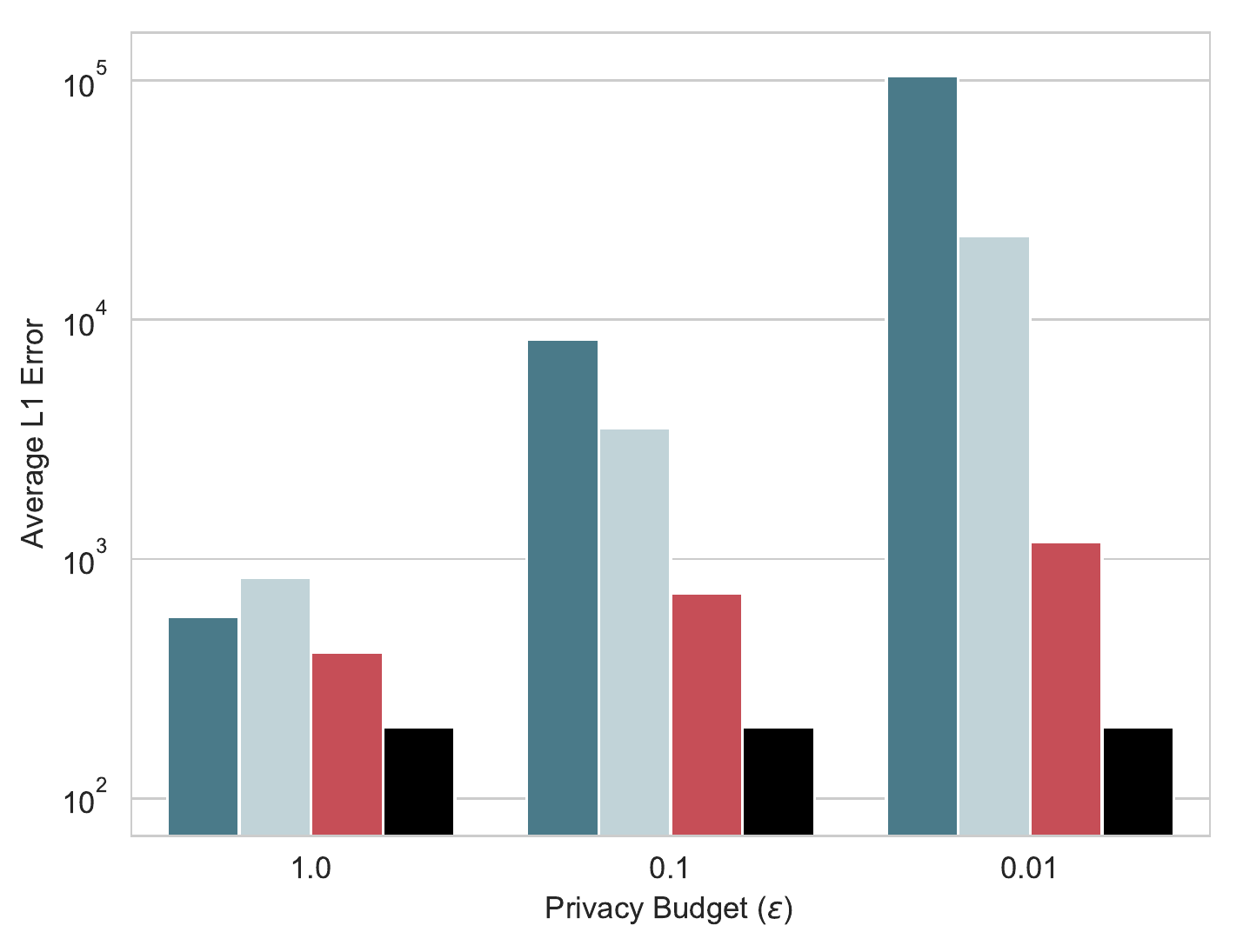}
    \includegraphics[width=\wlen]{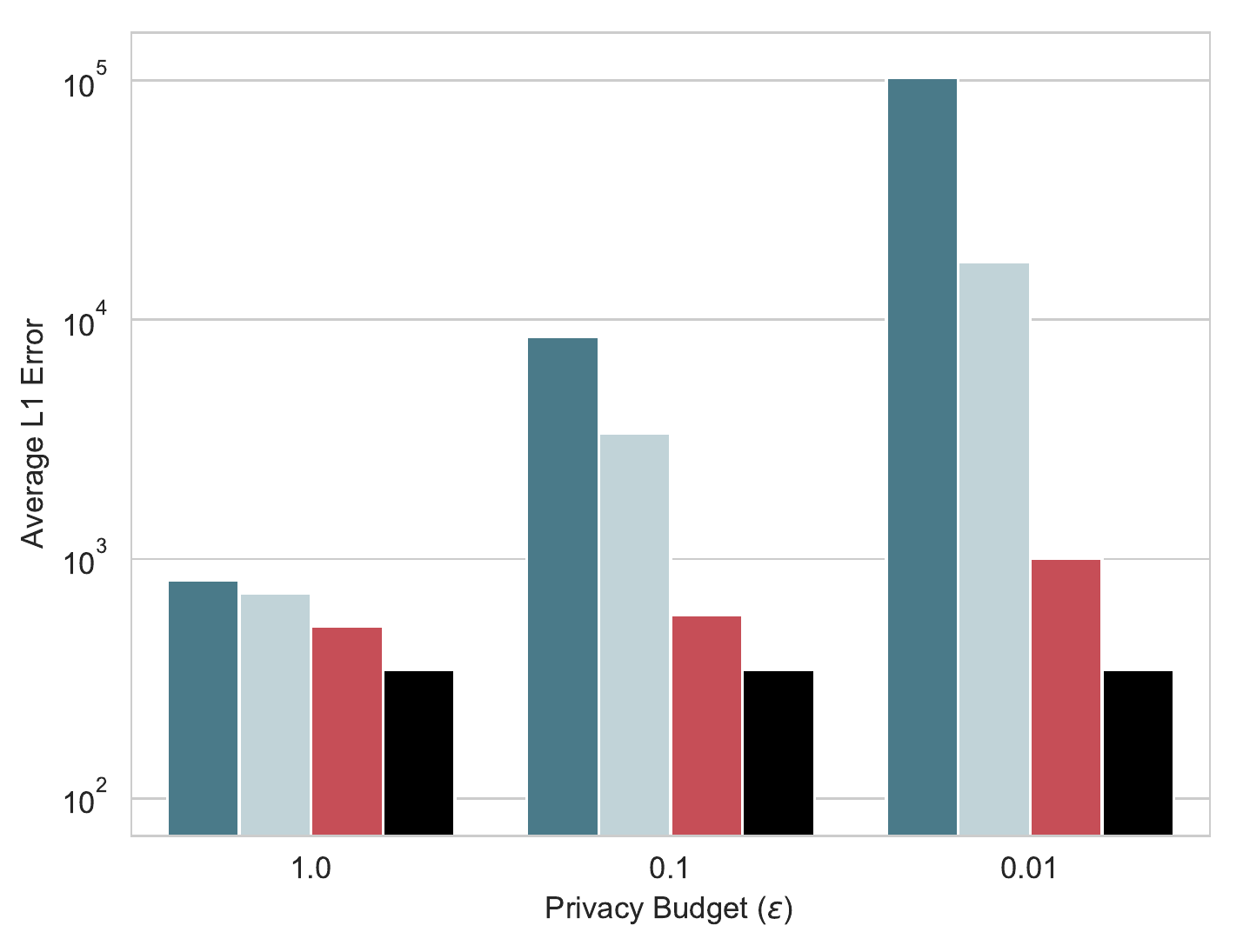} \\
    {\small
    \hspace{60pt}February \hfill ~~~~June~~~~ \hfill October\hspace{50pt}}
    \caption{$L_1$ error analysis: ARMA forecasting model on stream data for the energy loads of the months of February, June, and October for the Auvergne-Rh\^{o}ne-Alpes region. 
     }
    \label{fig:prediction_L1_errors}
\end{figure}

We also quantitatively evaluate the average $L_1$-error for each
prediction produced by the mechanisms. We adopt the same setting as
above for the prediction and report, in Figure
\ref{fig:prediction_L1_errors}, the average $L_1$-error for predicting
each day in the month of February, June, and October for the
Auvergne-Rh\^{o}ne-Alpes region.  Each histogram reports the
log$_{10}$ value of the average error of 30 random trials.  We observe
that \algname\ reports the smallest errors compared to all other
privacy-preserving algorithms, and that the error made by \algname\ in
reporting the next day forecast is closer to the error made in the
forecast prediction using the real data than when using another
method.

\subsection{Evaluation of \algname{} Individual Components}

\def \wlen{.19\linewidth}
\def \wlenLabel{.28\linewidth}
\begin{figure}[!t]
    {\centering\footnotesize
    \hspace{25pt} 
    Real \hspace{60pt}
    $\langle P, \circ , \circ \rangle$ \hspace{50pt}
    $\langle P, \circ , O \rangle$ \hspace{45pt}
    $\langle P, S , \circ \rangle$ \hspace{50pt}
    $\langle P, S , O \rangle$  
    }\\
    \centering
    \includegraphics[width=0.99\linewidth]{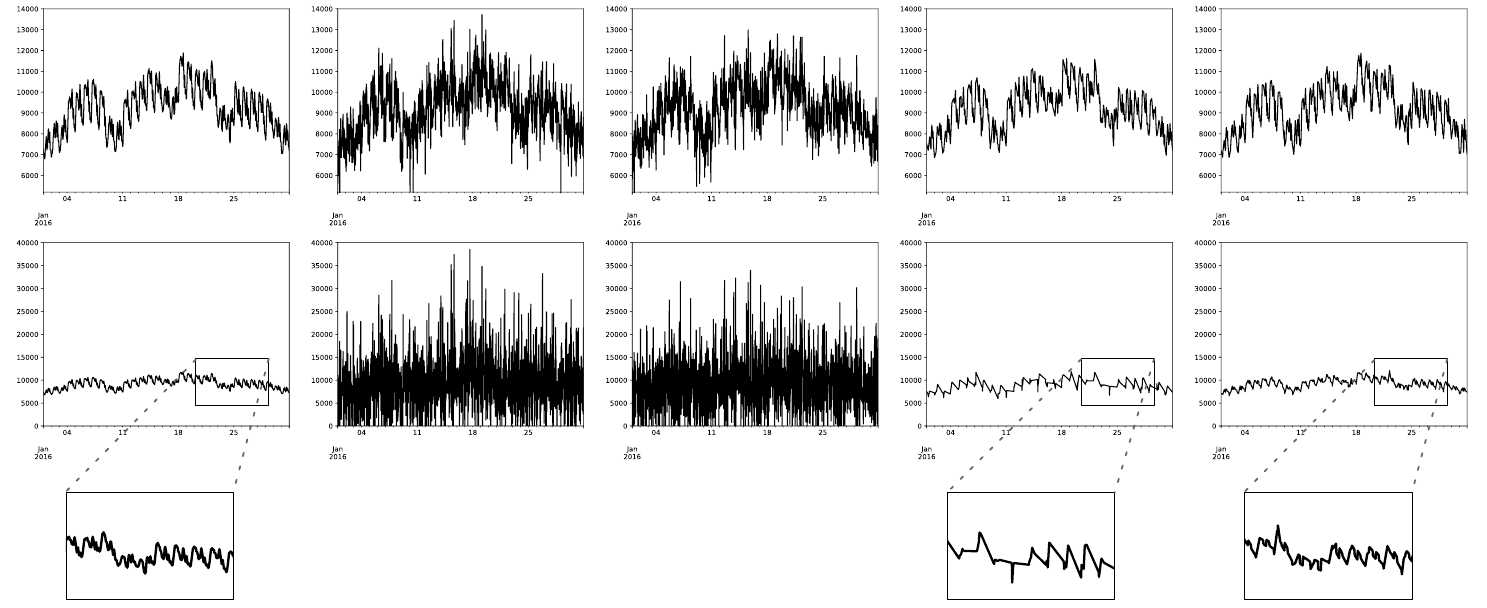}
    \caption{
    Real load consumption data for the Auvergne-Rh\^{o}ne-Alpes region in January, 2016 (Real) and 
    \algname{} activating: 
    exclusively the perturbation step ($\langle P, \circ , \circ \rangle$), 
    perturbation and optimization steps ($\langle P, \circ , O \rangle$),
    perturbation and sampling steps ($\langle P, S , \circ \rangle$), and 
    all steps ($\langle P, S , O \rangle$ ).
    The top row, illustrates the results for privacy budget $\epsilon=0.1$, while the bottom row uses  $\epsilon=0.01$.
    The boxes in the last two quadrants provide a zoom to illustrate in more details the reconstructed time series.
     }
    \label{fig:individual_components}
\end{figure}

We also evaluate the effect of the sampling step as
well as the benefits of the post-processing procedure of \algname{}.
Figure \ref{fig:individual_components} visualizes the real (first
column) and \algname-based private version (other columns) load
consumption data for the Auvergne-Rh\^{o}ne-Alpes region in January,
2016. The top row and bottom rows illustrate results for privacy
budgets of $\epsilon = 0.1$ and $0.01$ respectively.

In order from the second to the last row, Figure
\ref{fig:individual_components} illustrates the results of a version
of \algname{} that enables only the perturbation step ($\langle P,
\circ , \circ \rangle$), the perturbation and the optimization steps
($\langle P, \circ , O \rangle$), the perturbation and the sampling
steps ($\langle P, S , \circ \rangle$), and all steps ($\langle P, S ,
O \rangle$).  The privacy budget is divided equally among all active
components of the algorithm.  When only the perturbation step is
enabled (second column), \algname{} has the same behavior of the
Laplace mechanism.  The addition of the optimization step (third
column) helps increasing the fidelity of the data stream reported (for
instance, the heavy peaks are less pronounced), although the overall
signal is still noisy.  When the perturbation step is combined with
the sampling step under the $L1$ reconstruction procedure (fourth
column), the resulting private stream captures with much higher
fidelity the salient load \emph{peaks} and the streams looks much more
similar to the real one.  However, a careful inspection reveals
several shortcomings: Several load peaks are not captured accurately,
and the loads are often distributed uniformly between sampled points,
making the resulting stream unrealistic.  Finally, the
optimization-based post-processing shines when integrated with the
perturbation and sampling steps (5-th column). The salient features of
the data are effectively captured and the noise redistribution
provides higher resemblance to the real data stream.

\def \wlen{.32\linewidth}
\begin{figure}[!t]
    \centering
    \includegraphics[width=180pt]{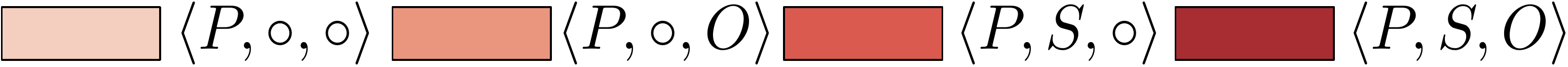}\\
    \includegraphics[width=\wlen]{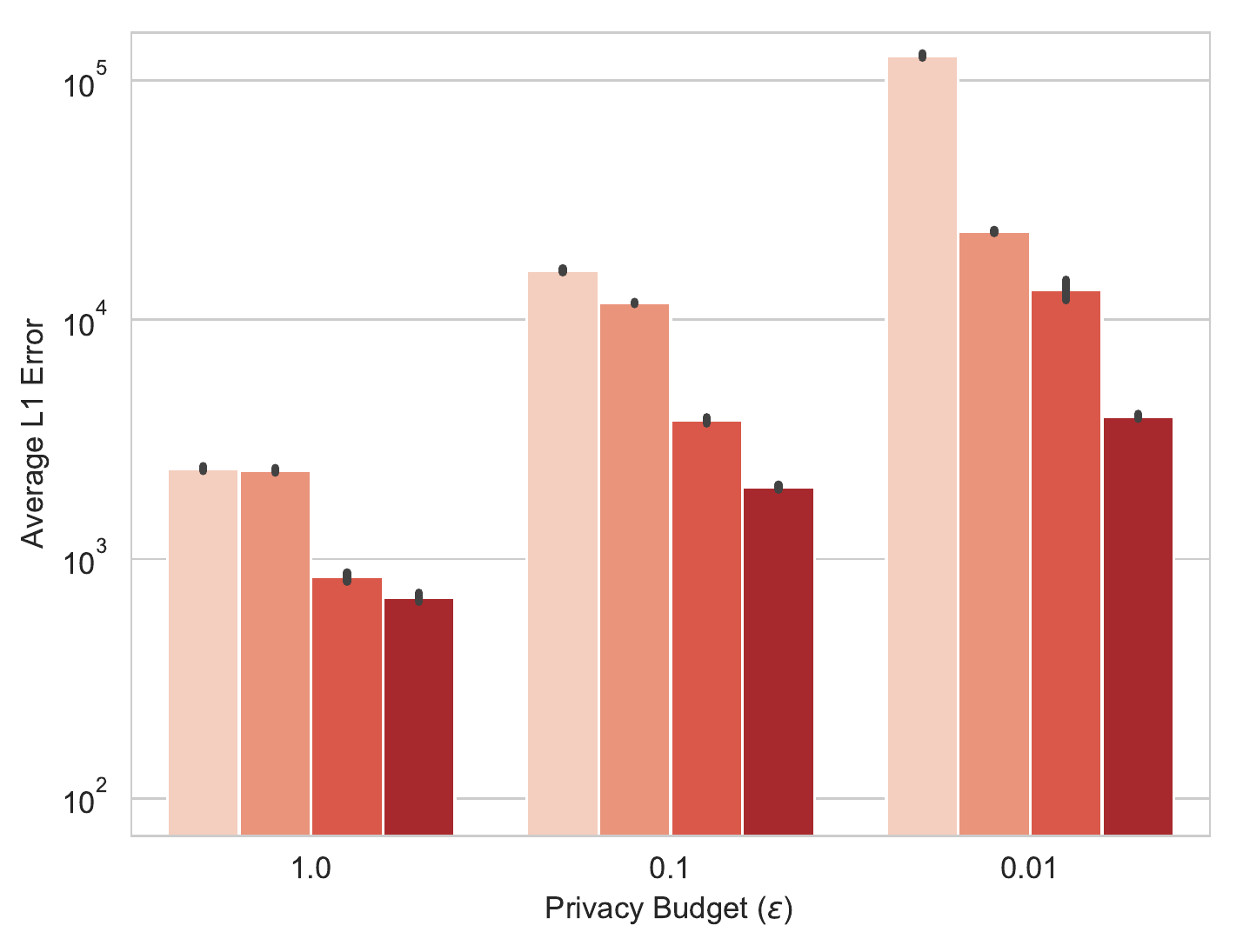}
    \includegraphics[width=\wlen]{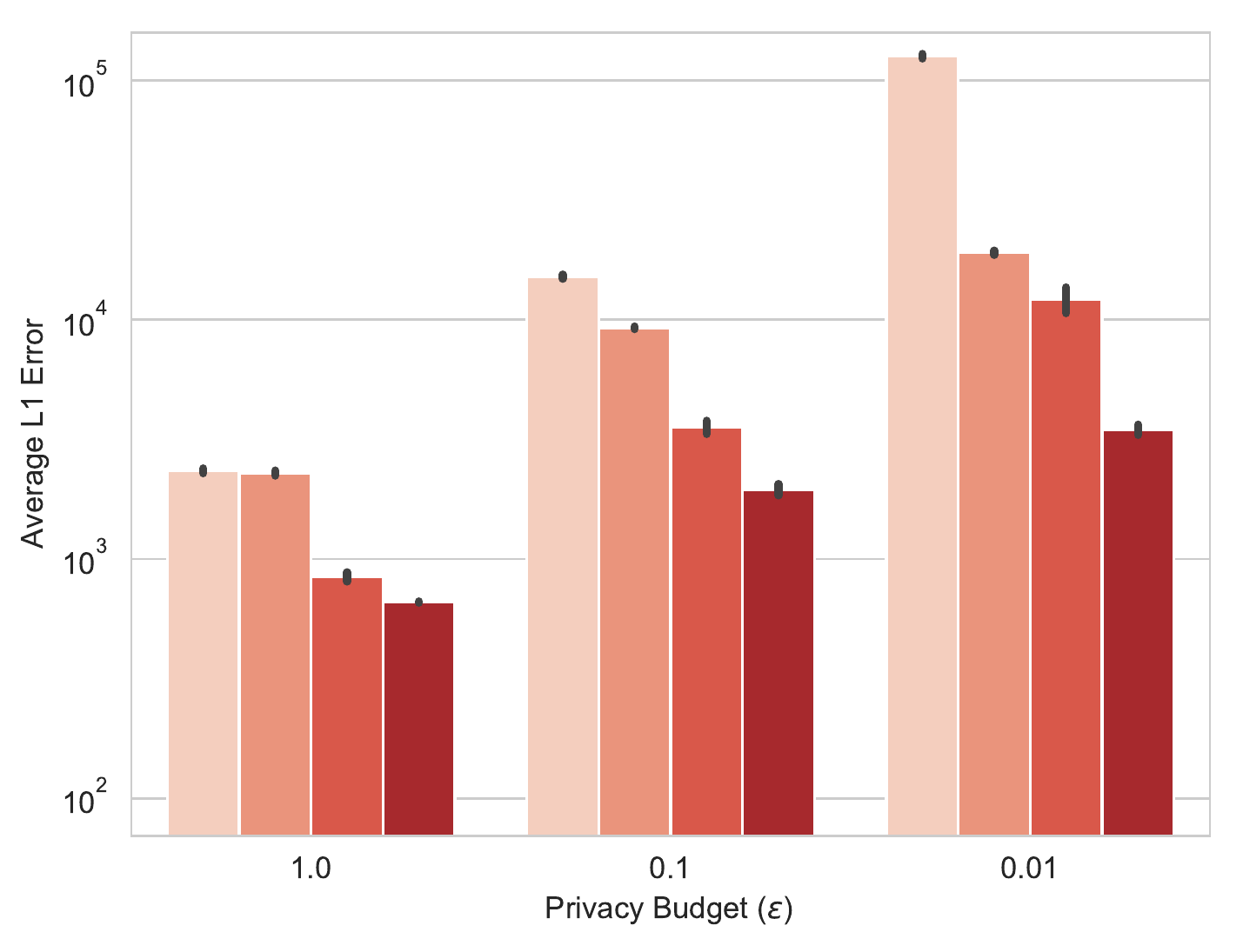}
    \includegraphics[width=\wlen]{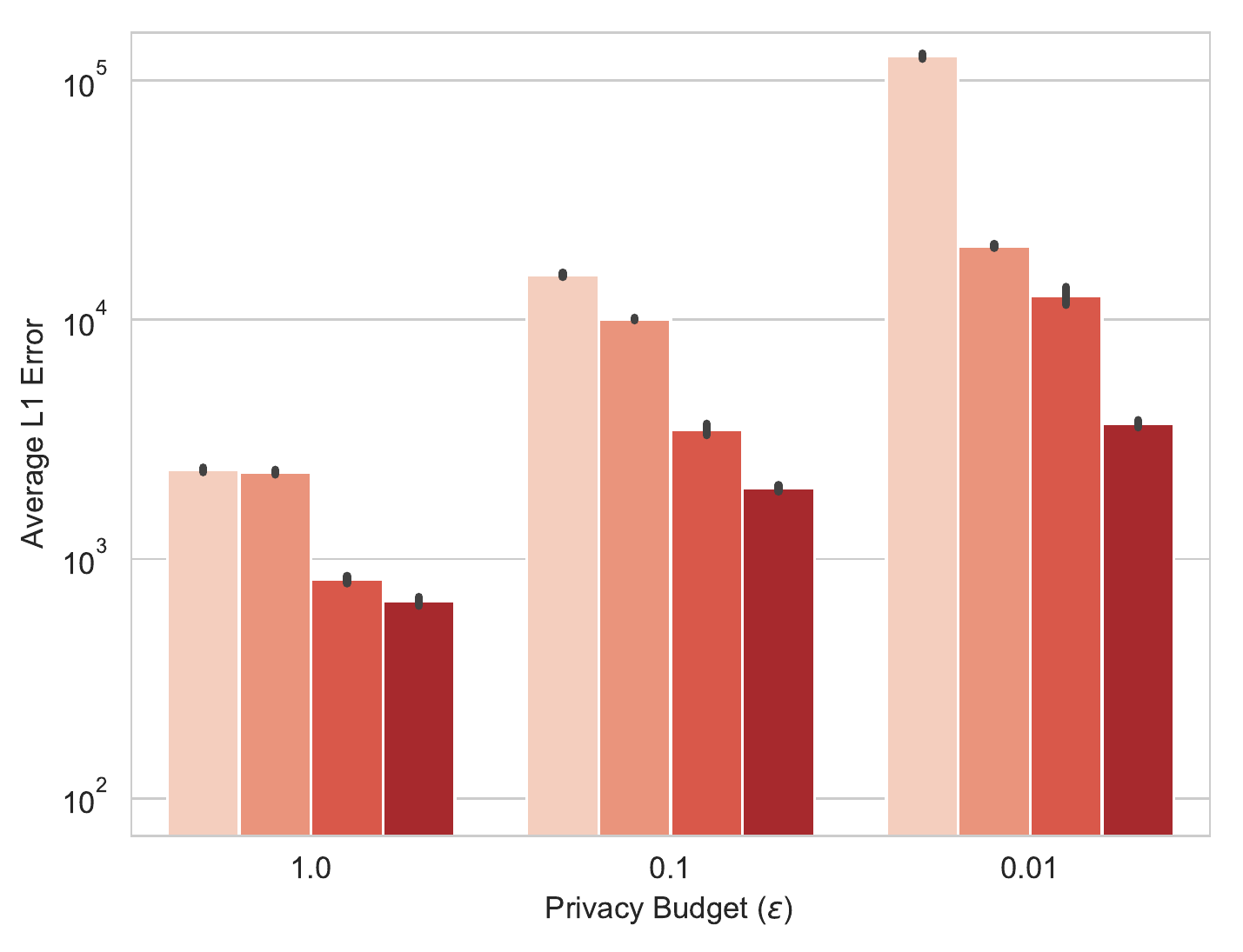} \\
    {\small
    \hspace{60pt}February \hfill ~~~~June~~~~ \hfill October\hspace{50pt}}
    \caption{OptStream components analysis:
    Average $L_1$-error associated with the energy loads of the months of February, June, and October. 
     }
    \label{fig:comp_L1_errors}
\end{figure}

Figure \ref{fig:comp_L1_errors} details the L1-errors associated with
the private energy loads for the months of February, June, and
October.  This quantitative analysis further highlights the importance
of the four components of the proposed \algname{} algorithm. It
illustrates a gradual error reduction of the framework when it adopts,
in order, the perturbation step only, the perturbations and the
optimization steps, the perturbation and the sampling with linear 
interpolation steps, and finally, all four steps.

\subsection{Stream Data-Release in the $(w, \alpha)$-Indistinguishability Model}

Having analyzed the algorithms in the $w$-privacy setting, this
section illustrates the results of the private data streams released
under the combined $w$-event and $\alpha$-indistinguishability models
(see Section \ref{sec:indistig}).  The motivating application requires
releasing the private data streams without disclosing chosen amounts
of load changes within a time window.  We thus analyze the results of
privately releasing a stream of data under the $(w,
\alpha)$-indistinguishability model and use an adapted version of
\algname{}, the Laplace mechanism, and the DFT algorithm. These
versions recalibrate the noise required by the mechanisms to satisfy
the $(w, \alpha)$-indistinguishability definition as detailed in
Section \ref{sec:indistig}.

\def \wlen{.24\linewidth}
\begin{figure}[tbh]
    {\small\centering
    \hspace{35pt}
    ~~Real~ \hspace{75pt}
    Laplace \hspace{70pt}
    ~~DFT~~ \hspace{60pt}
    \algname
    }\\
    \centering
    \includegraphics[width=\wlen]{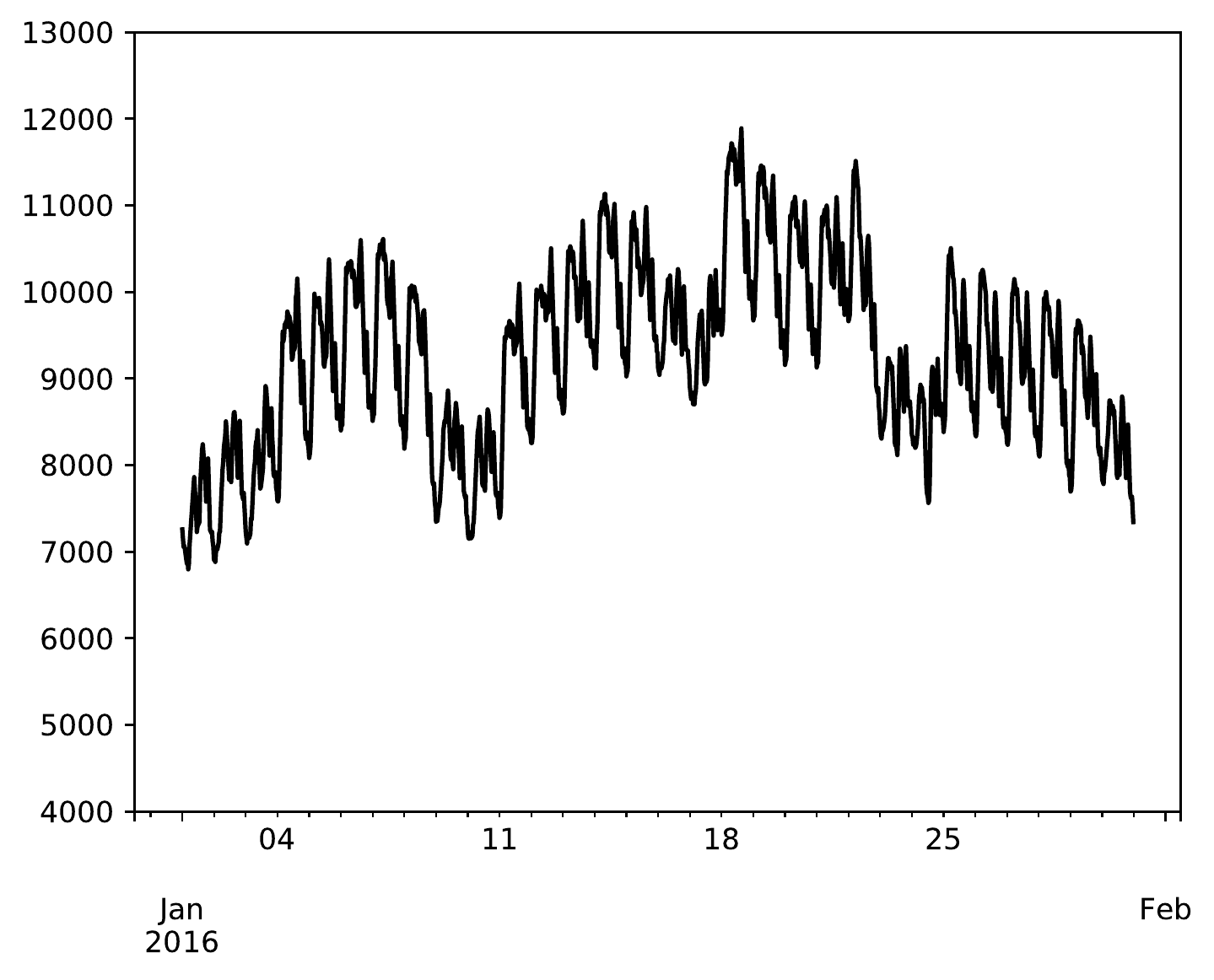}
    \includegraphics[width=\wlen]{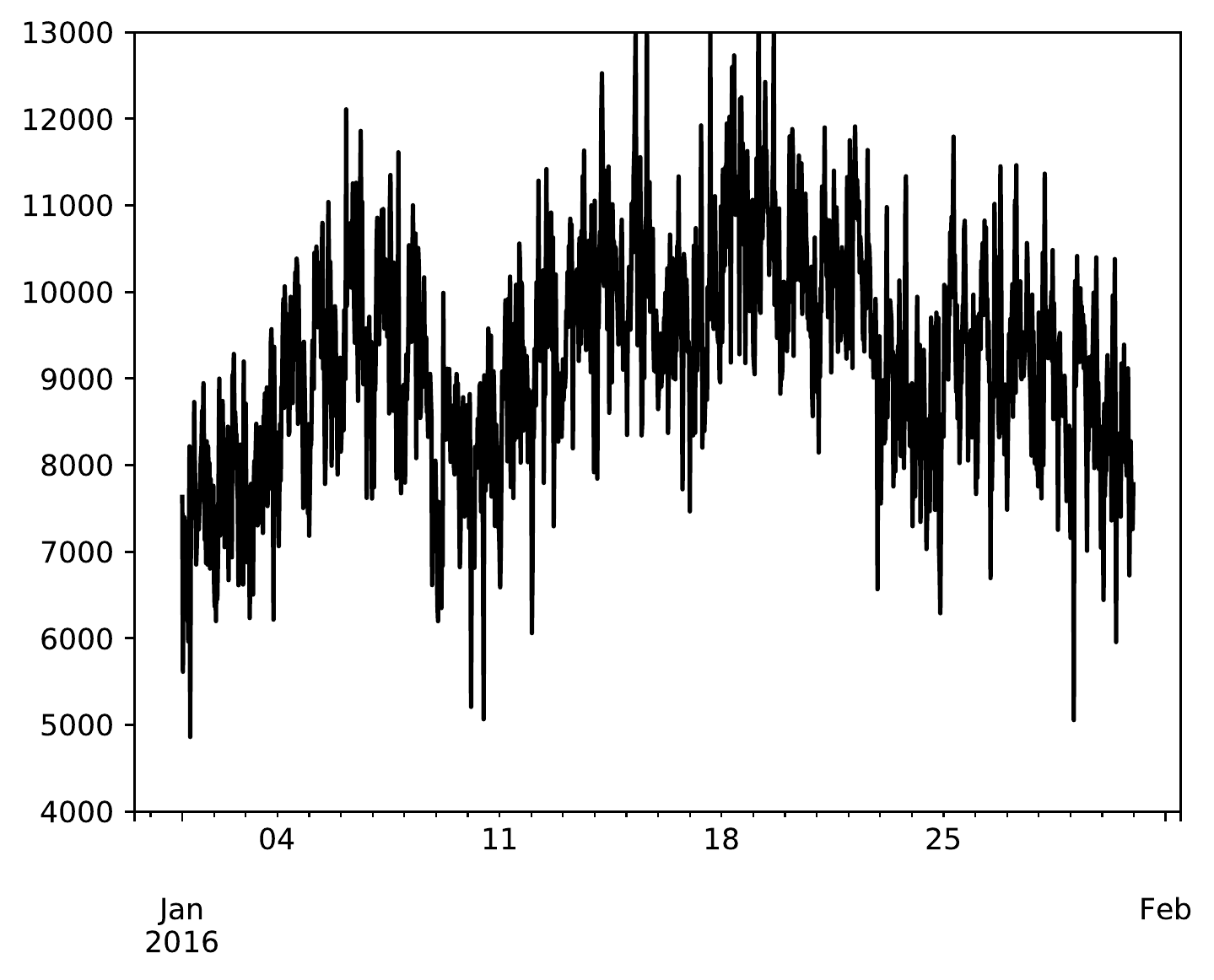}
    \includegraphics[width=\wlen]{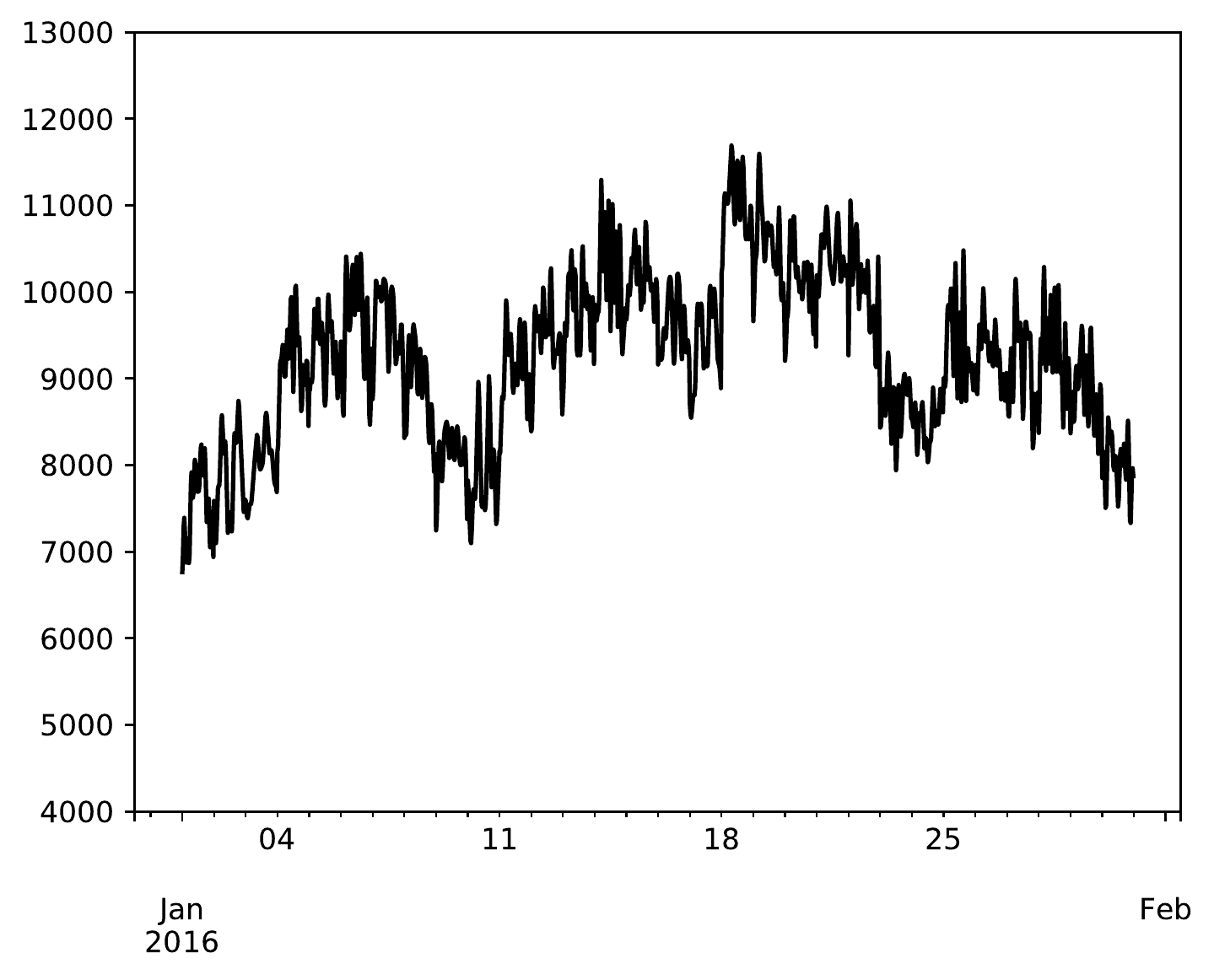}
    \includegraphics[width=\wlen]{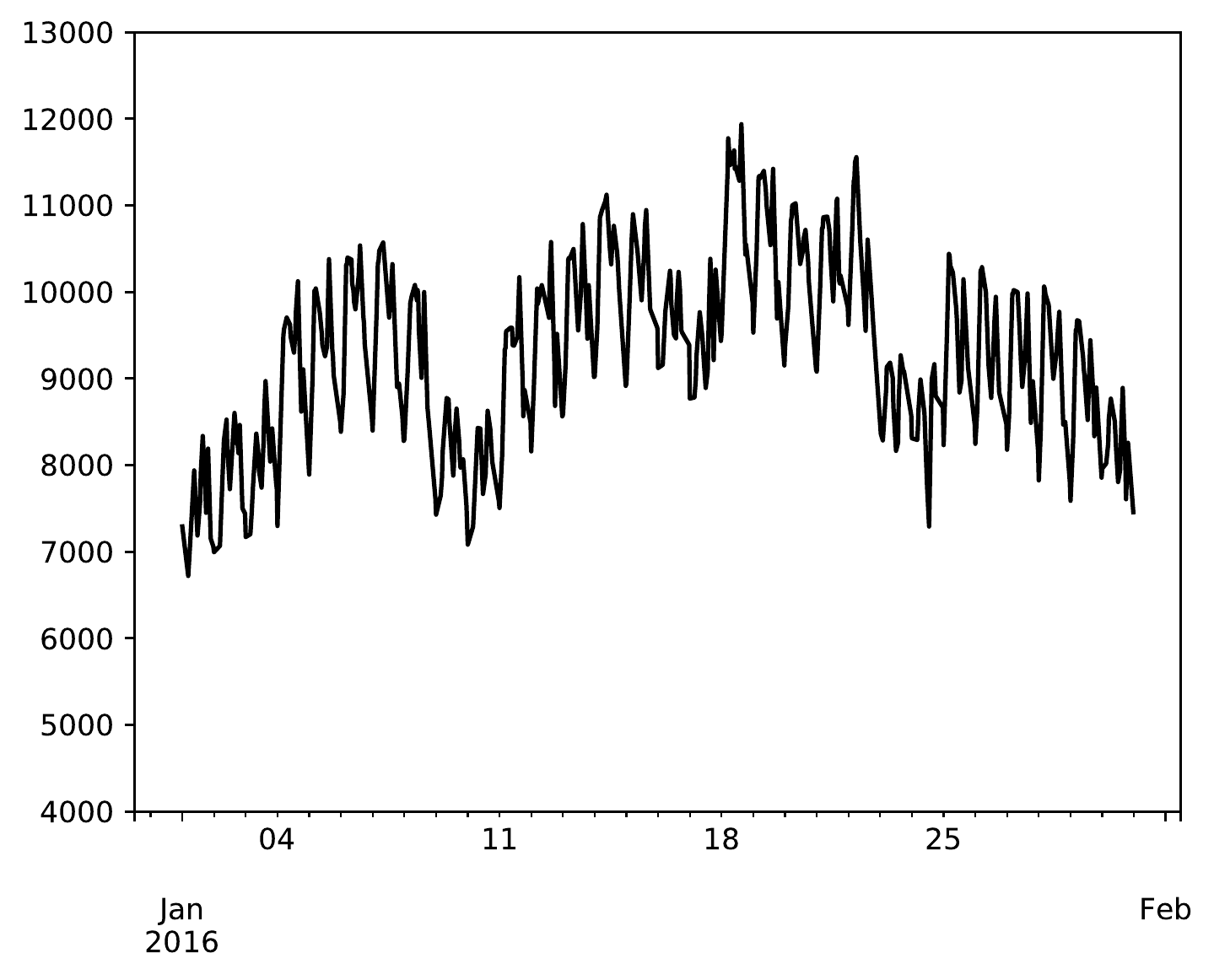}\\
    \includegraphics[width=\wlen]{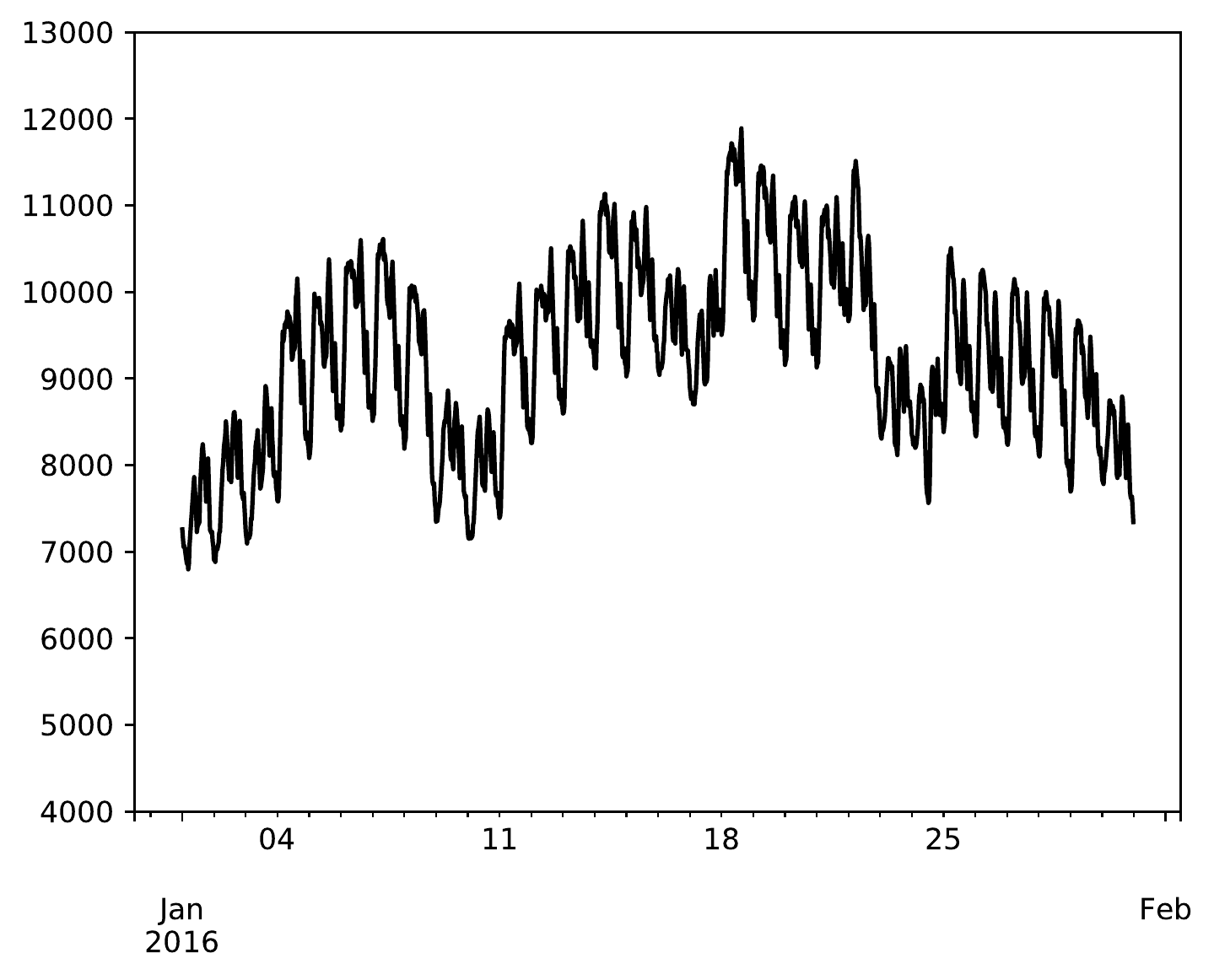}
    \includegraphics[width=\wlen]{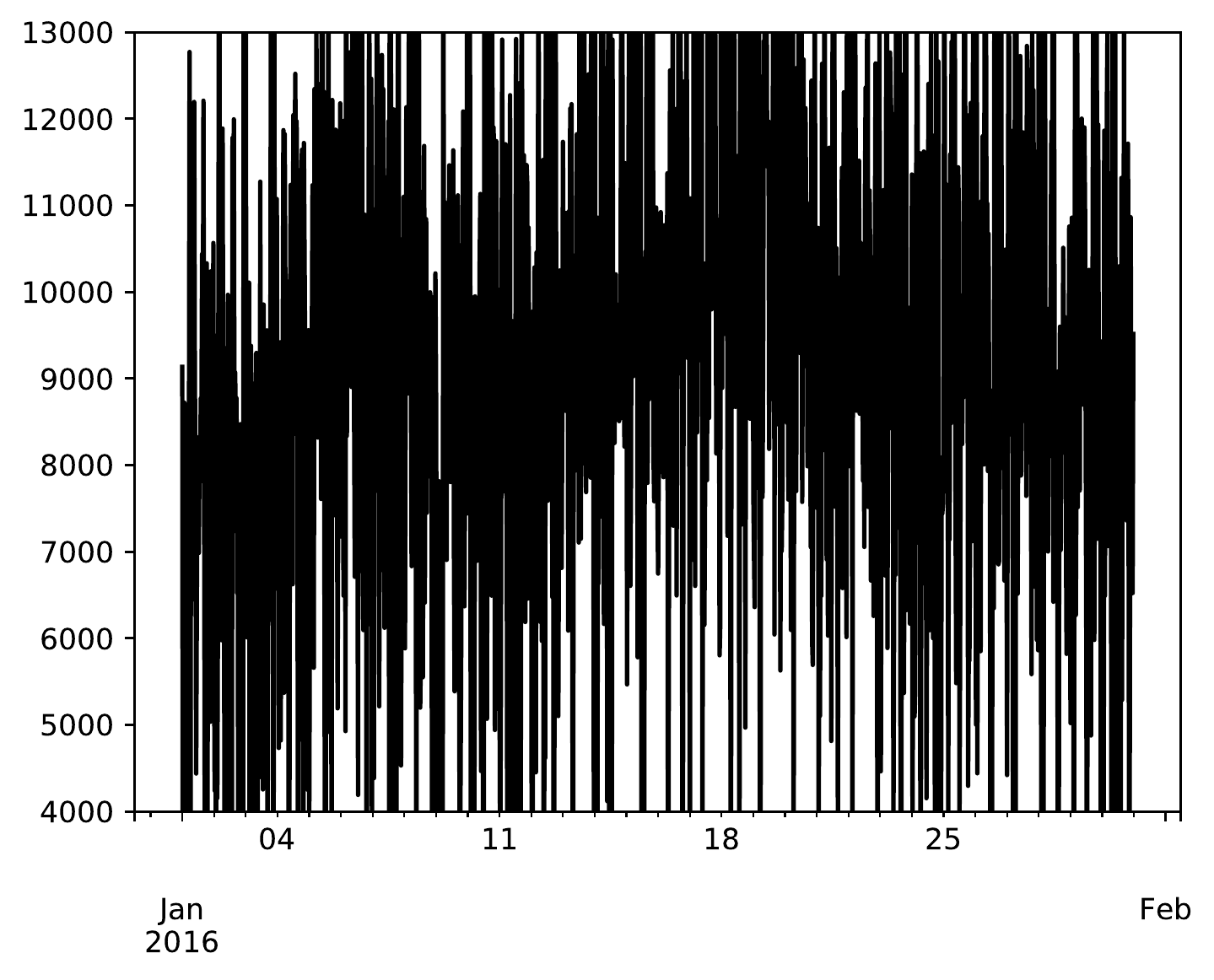}
    \includegraphics[width=\wlen]{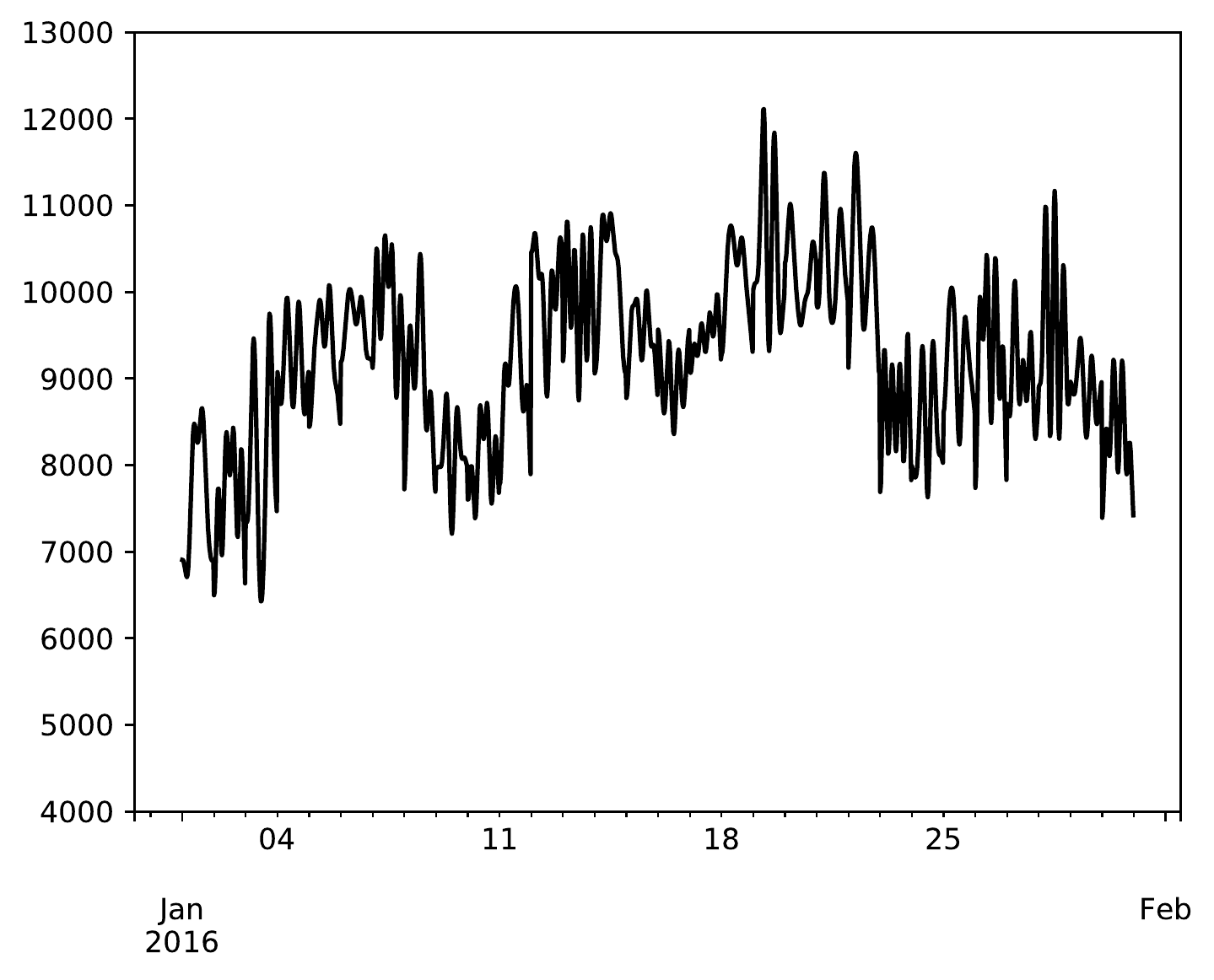}
    \includegraphics[width=\wlen]{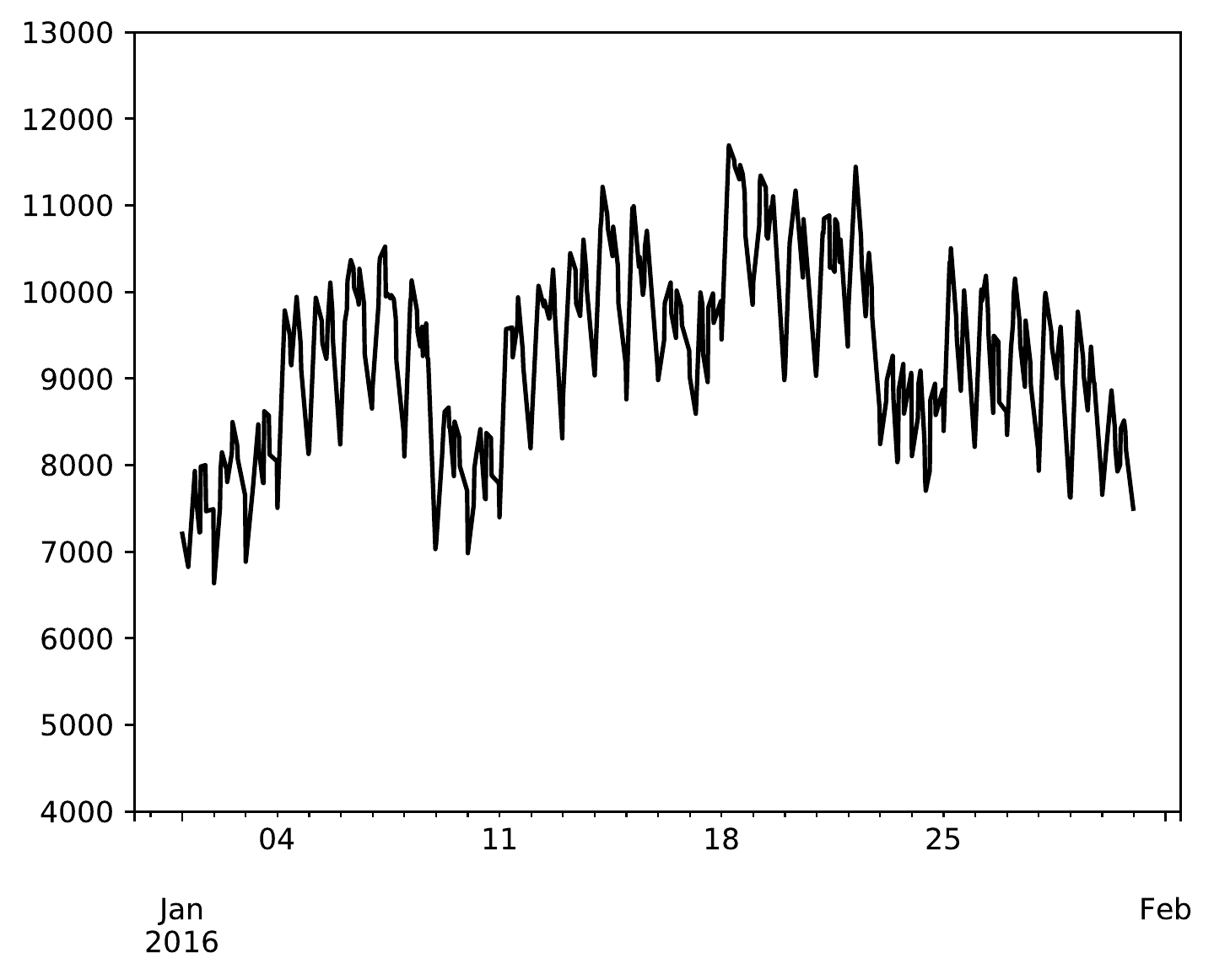}\\
    \includegraphics[width=\wlen]{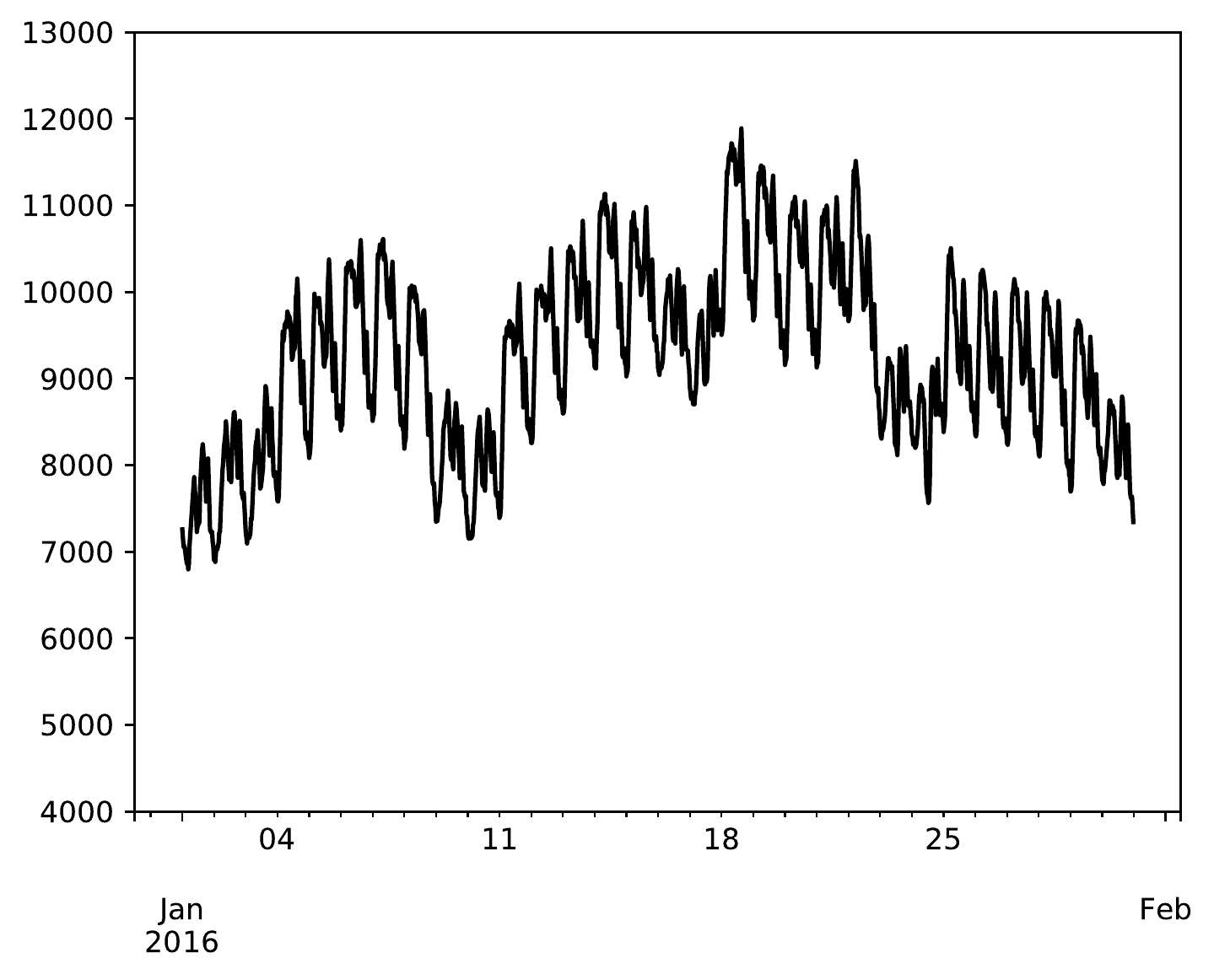}
    \includegraphics[width=\wlen]{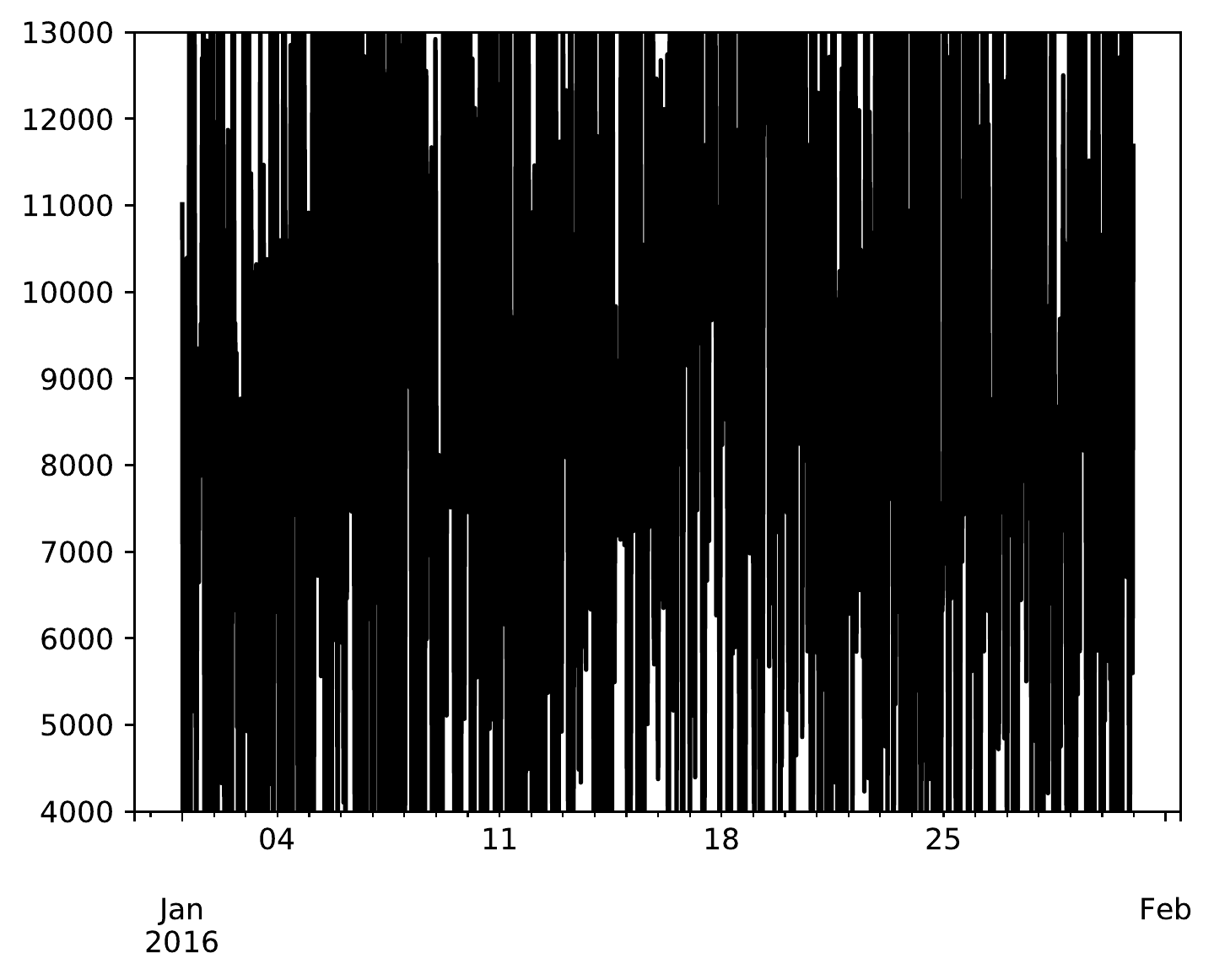}
    \includegraphics[width=\wlen]{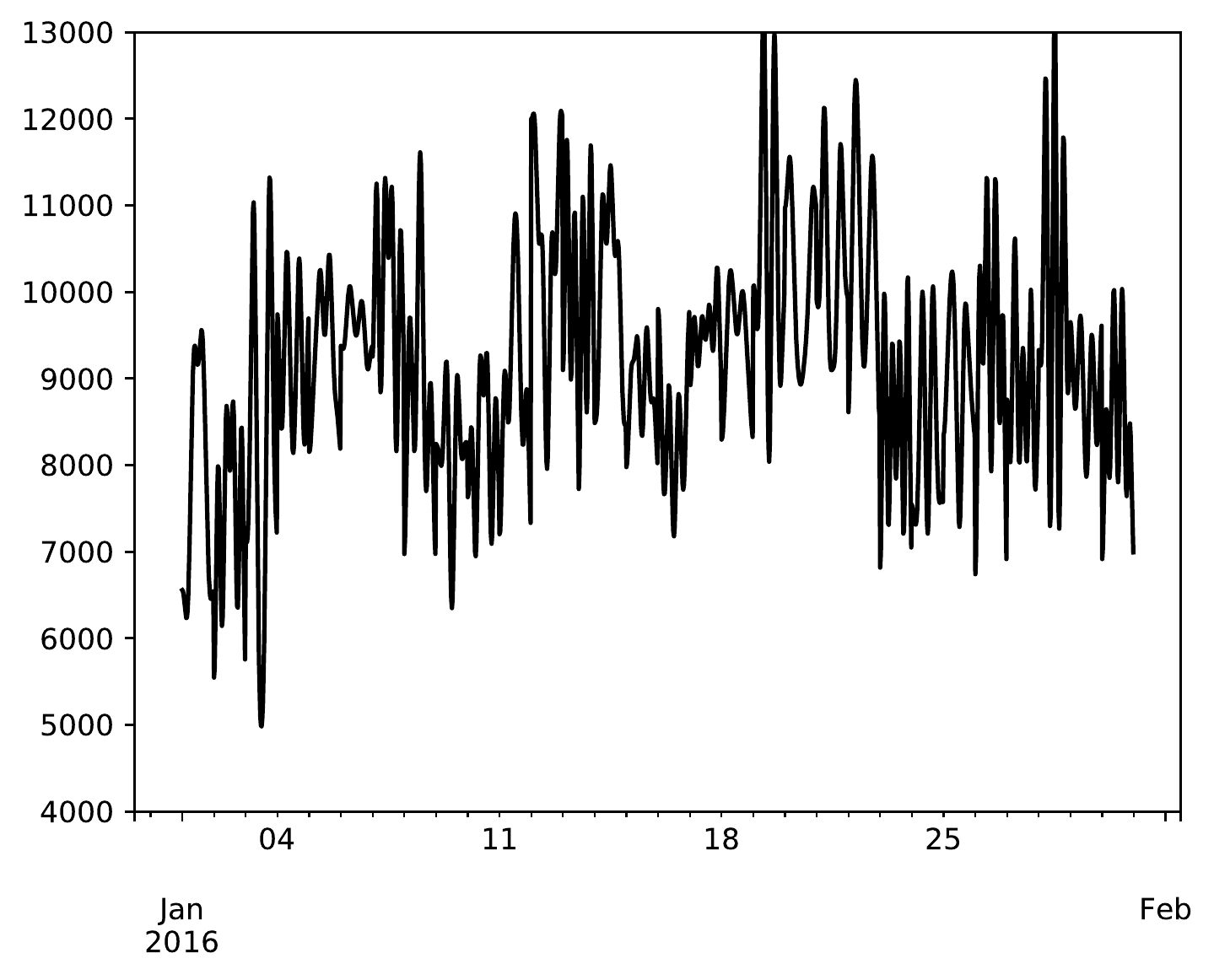}
    \includegraphics[width=\wlen]{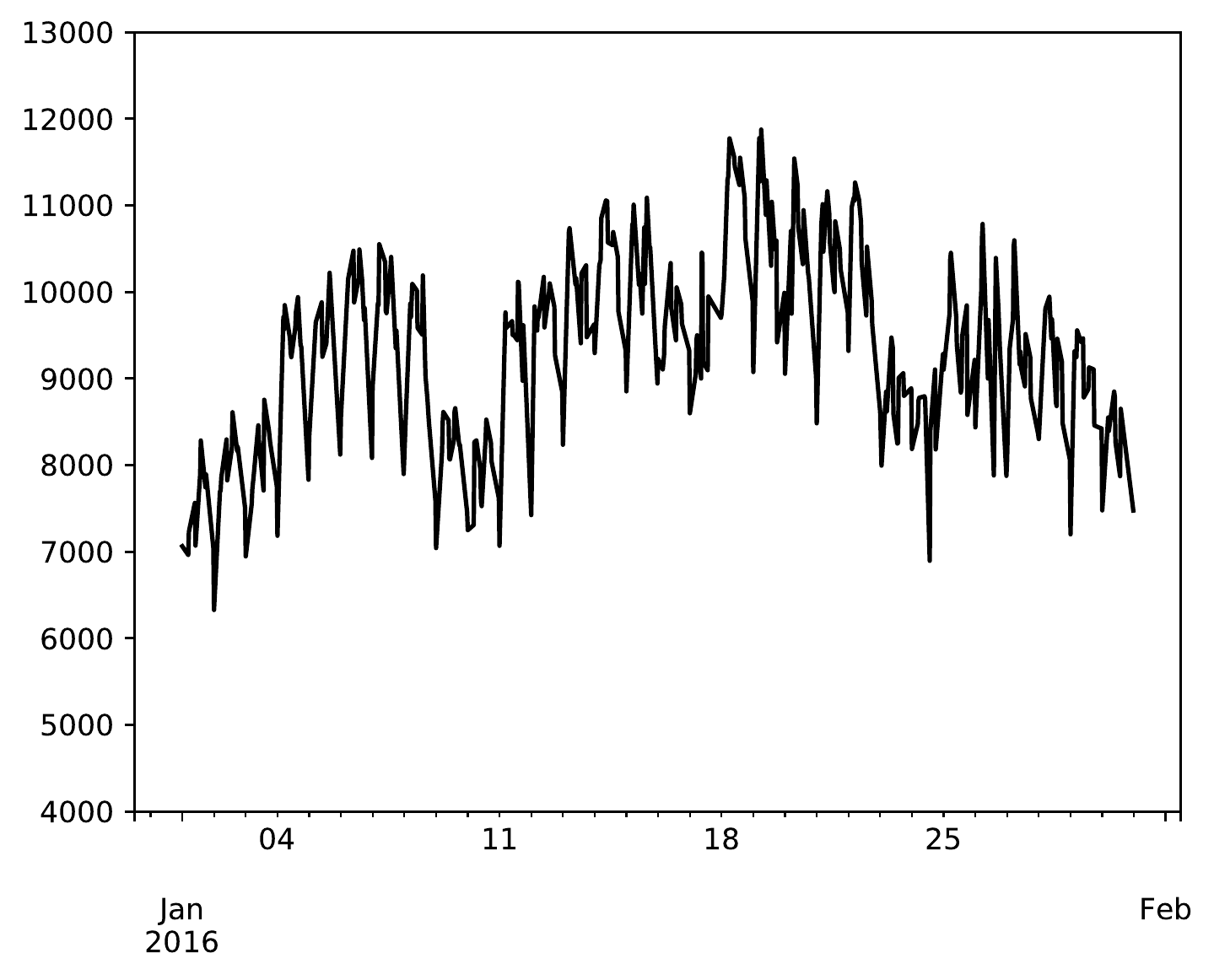}
    \caption{Real load consumption data for the Auvergne-Rh\^{o}ne-Alpes region in January 2016, and its private versions obtained using Laplace, DFT, and \algname{} with privacy budget $\epsilon=1$ and $\alpha=10$ (top), $50$ (middle), and $100$ (bottom).
     }
    \label{fig:real_and_noisy_data} 
\end{figure}

Figure \ref{fig:real_and_noisy_data} illustrates the real and private
versions of the data stream for $w$-periods of size 48 (i.e., one day)
given privacy budget of $\epsilon = 1$ and indistinguishability
parameter $\alpha=10$ (top row), $\alpha=50$ (middle row) and $\alpha
= 100$ (bottom row).  Recall that the choice of the
indistinguishability parameter $\alpha$ allows the data curator to
ensure the protection of the observed increase/decrease power
consumption up to $\alpha$ MegaWatts (MW), while the $w$-period
specifies the length of the obfuscation period.  The figure compares our
proposed \algname\ algorithm against the Laplace mechanism and DFT .

For DFT and \algname{}, the experiments set the number of Fourier
coefficients and sampling steps to $10$ (when $\alpha \leq 50$) and
$5$ (when $\alpha=100$). As in the previous experiments, the privacy
budget allocated to perform each measurement is divided
equally. Finally, \algname{} uses the same feature query set $\sF$
used in the previous experiments. Figure \ref{fig:real_and_noisy_data}
clearly illustrates that \algname\ produces private streams that are
more accurate than its competitors when visualized. This is especially
evident for high indistinguishability parameters, when the Laplace
mechanism can barely preserve any signal from the original data.

\def \wlen{.32\linewidth}
\def \wlenLabel{.20\linewidth}
\begin{figure}[!t]
    \centering
    \includegraphics[width=180pt]{legend_1.pdf}\\
    \includegraphics[width=\wlen]{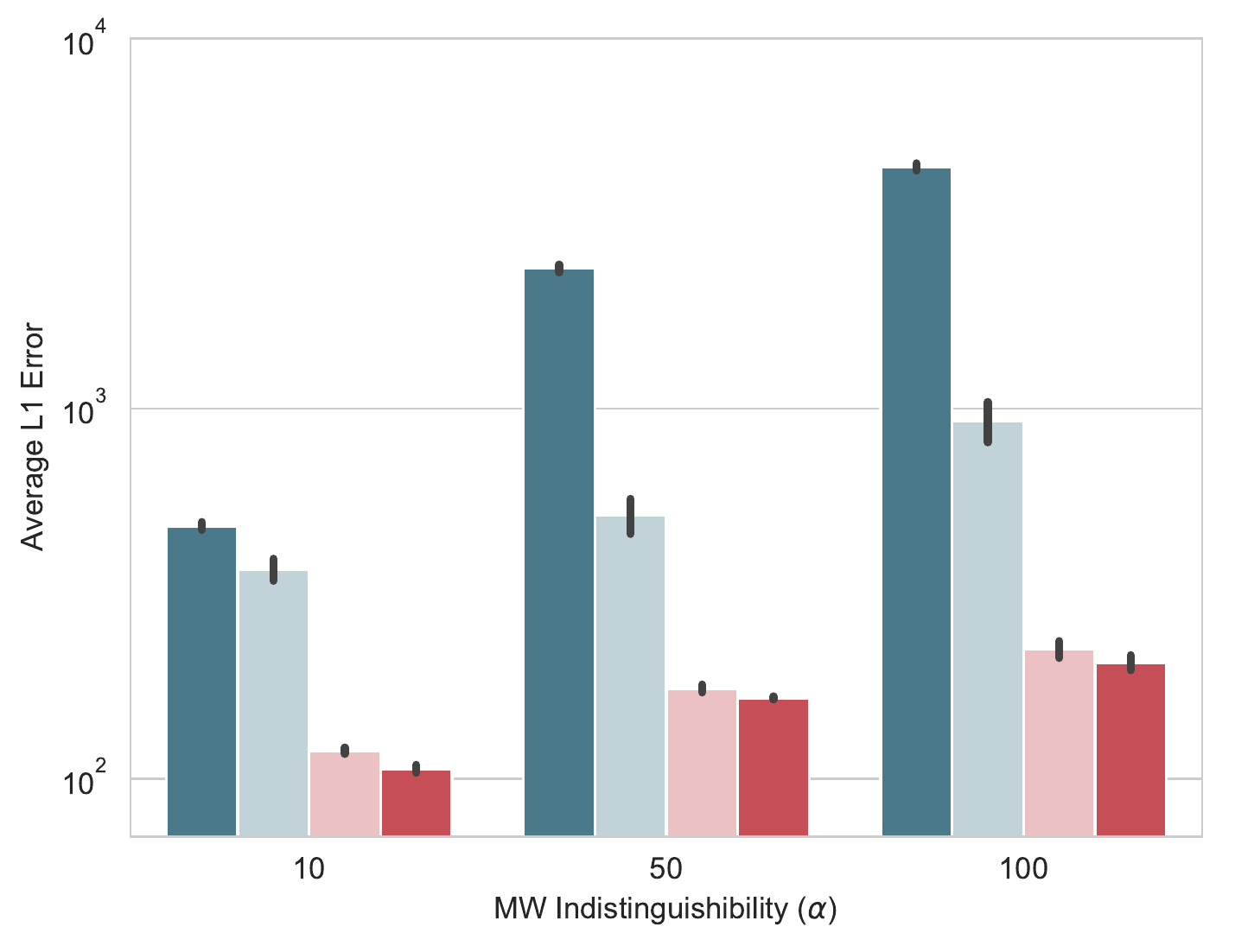}
    \includegraphics[width=\wlen]{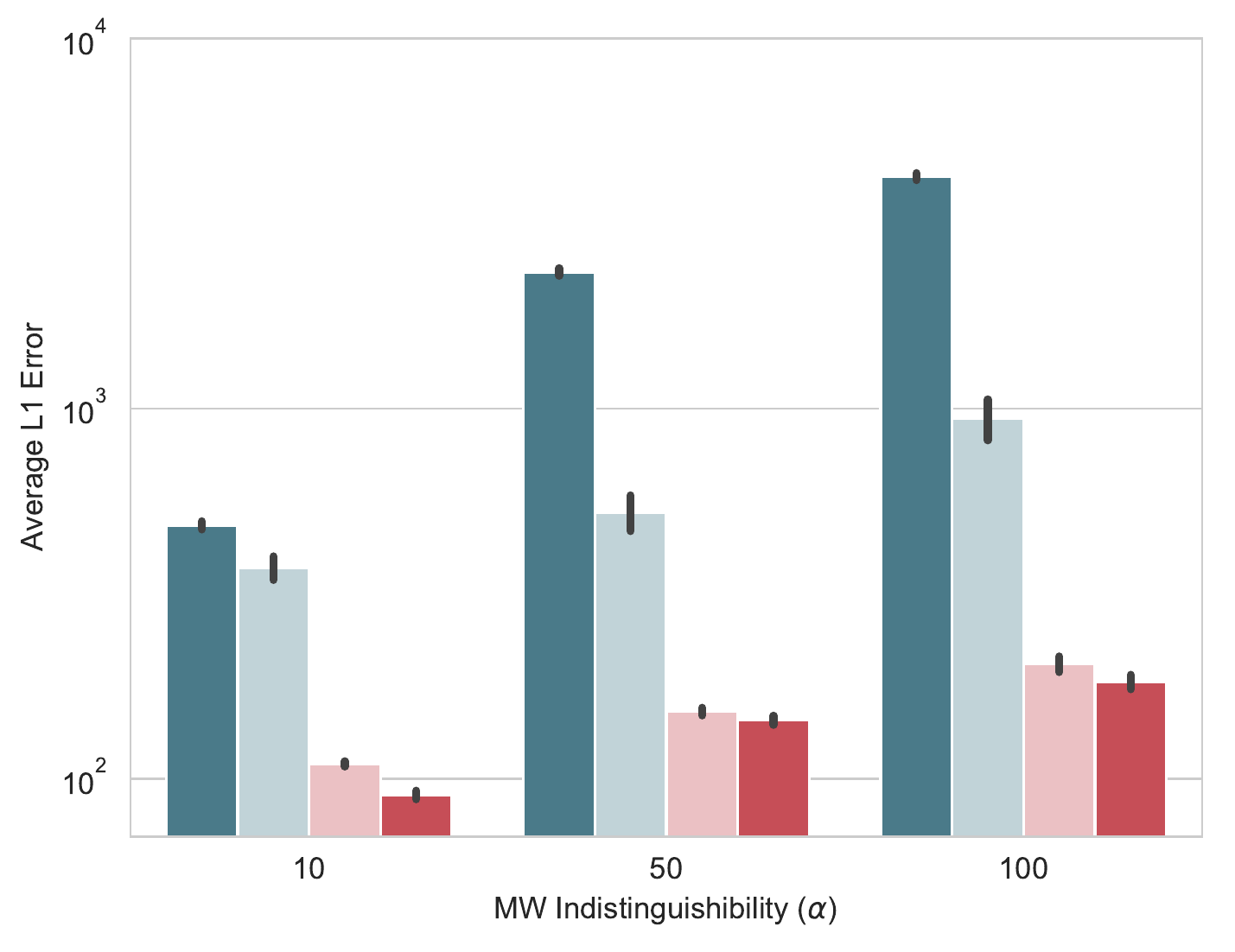}
    \includegraphics[width=\wlen]{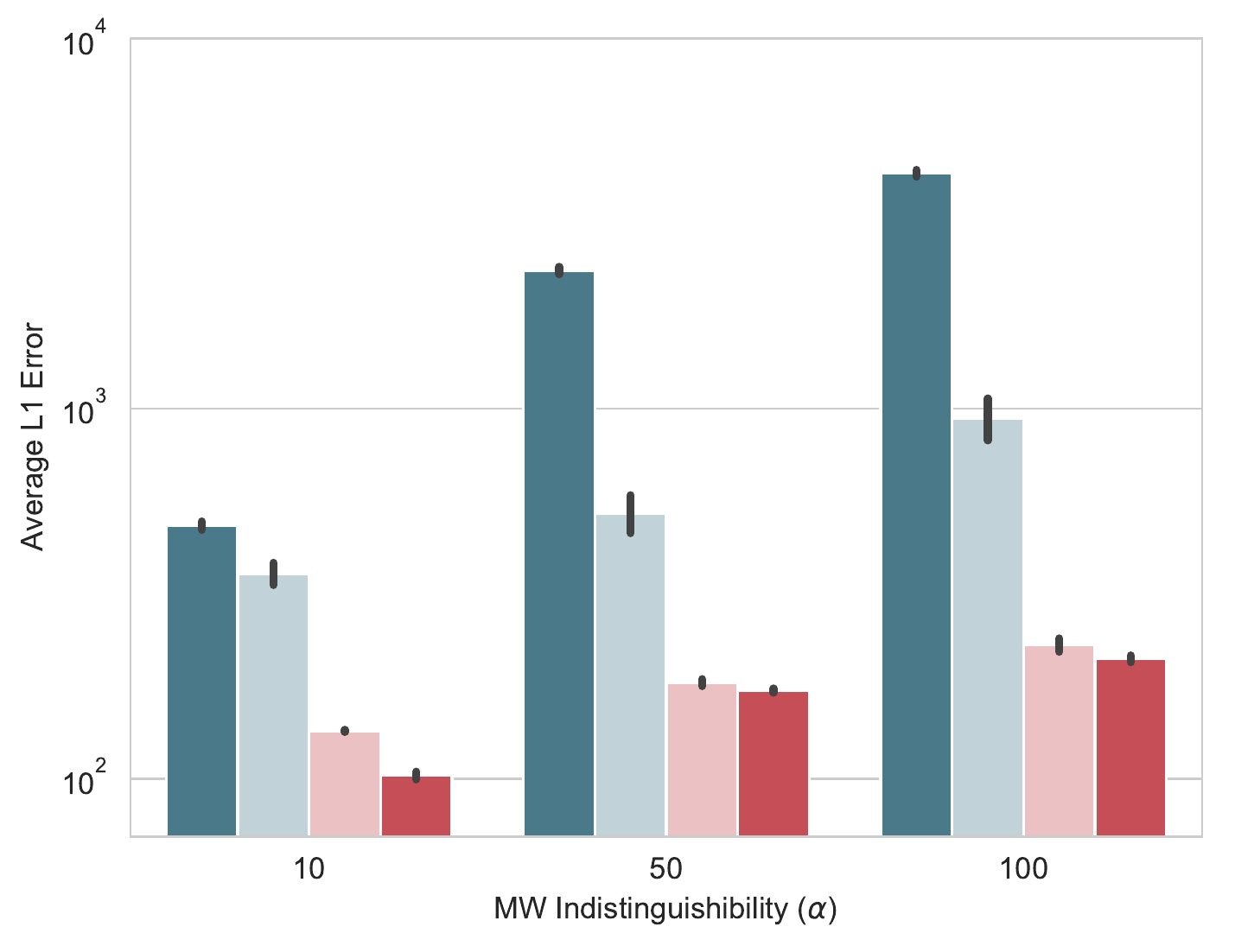}\\
    {\small
    \hspace{60pt}February \hfill ~~~~June~~~~ \hfill October\hspace{50pt}}

    \caption{$L_1$-error analysis: Energy load stream data for the months of February, June, and October. The $y$-axis reports $\log_{10}$ errors averaged across the $L_1$ errors for each stream data $R \in \sR$.}
    \label{fig:L1_error_2}
\end{figure}

We now report the $L_1$-errors averaged for each $w$-period produced
by the algorithms.  Figure \ref{fig:L1_error_2} reports the average
results for each streaming region for the months of February, June,
and October, at varying of the indistinguishability parameter $\alpha
= \{10, 50, 100\}$ given a privacy budget $\epsilon=1.0$.  Each
histogram reports the log$_{10}$ value of the average error of 30
random trials.  While all algorithms induce an $L_1$-error which
increases as the indistinguishability level $\alpha$ increases, the
figure highlights that \algname{} consistently outperforms the other
algorithms.


\section{Discussion}
\label{sec:discussion}

One of the core advantages of the w-privacy model is that it provides
flexibility regarding the size of the time frame to protect.
\algname{} exploits this flexibility to select a small set of
representative points to sample and performs additional optimization
over such values to redistribute noise and enhance accuracy.
Practically, this means that the data stream needs to be batched in
periods of $w$ times steps, each of which is privately released at
once.

Recall that the privacy model adopted in this paper combines the
$w$-privacy model \cite{kellaris:14} with the definition of
$\alpha$-indistinguishability \cite{chatzikokolakis:13}. The resulting
$(w, \alpha)$-indistinguishability model allow us to protect events
related to quantities which are not directly related to user
identities but can be used a proxy to disclose a \emph{secret} (e.g.,
a sensitive information) about a user. For instance, in our
application domain, it is important to both protect the given stream
from inferences about load increases or decreases--which are
typically related to sensitive user activities--while also
protecting the stream in an entire day.

While setting $w=1$ would allow to use \algname{} to release the
stream data online -- without requiring batching --, this setting
offers no room for \algname{} to improve accuracy, since it would be
similar to the Laplace mechanism. To the best of our knowledge, the
only available algorithm that can release a private data stream
online is \emph{PeGaSus} \cite{chen2017pegasus}. It uses a
combination of the Laplace mechanism, a partitioning strategy to group
values in contiguous time steps whose deviation is small, and a
smoothing function. PeGaSus, was not introduced in the w-event model
as it focuses on protecting data in a single time-step.

\def \wlen{.23\linewidth}
\def \wlenLabel{.12\linewidth}
\begin{figure}[!t]
    {\small\centering
    \hspace{60pt}
    ~~Real~ \hspace{65pt}
    PeGaSus \hspace{50pt}
    \algname
    }\\
    \centering
    \includegraphics[width=\wlen]{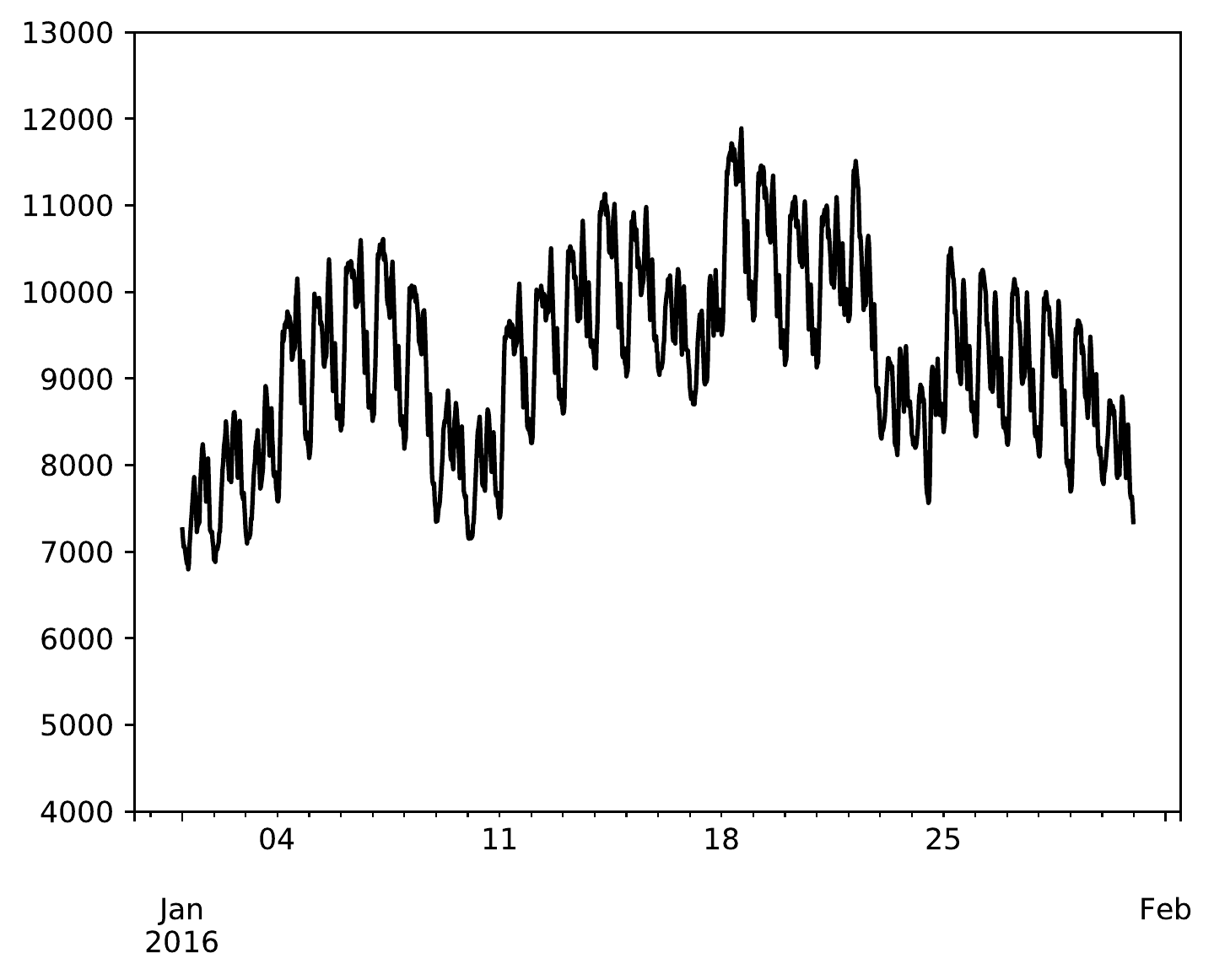}
    \includegraphics[width=\wlen]{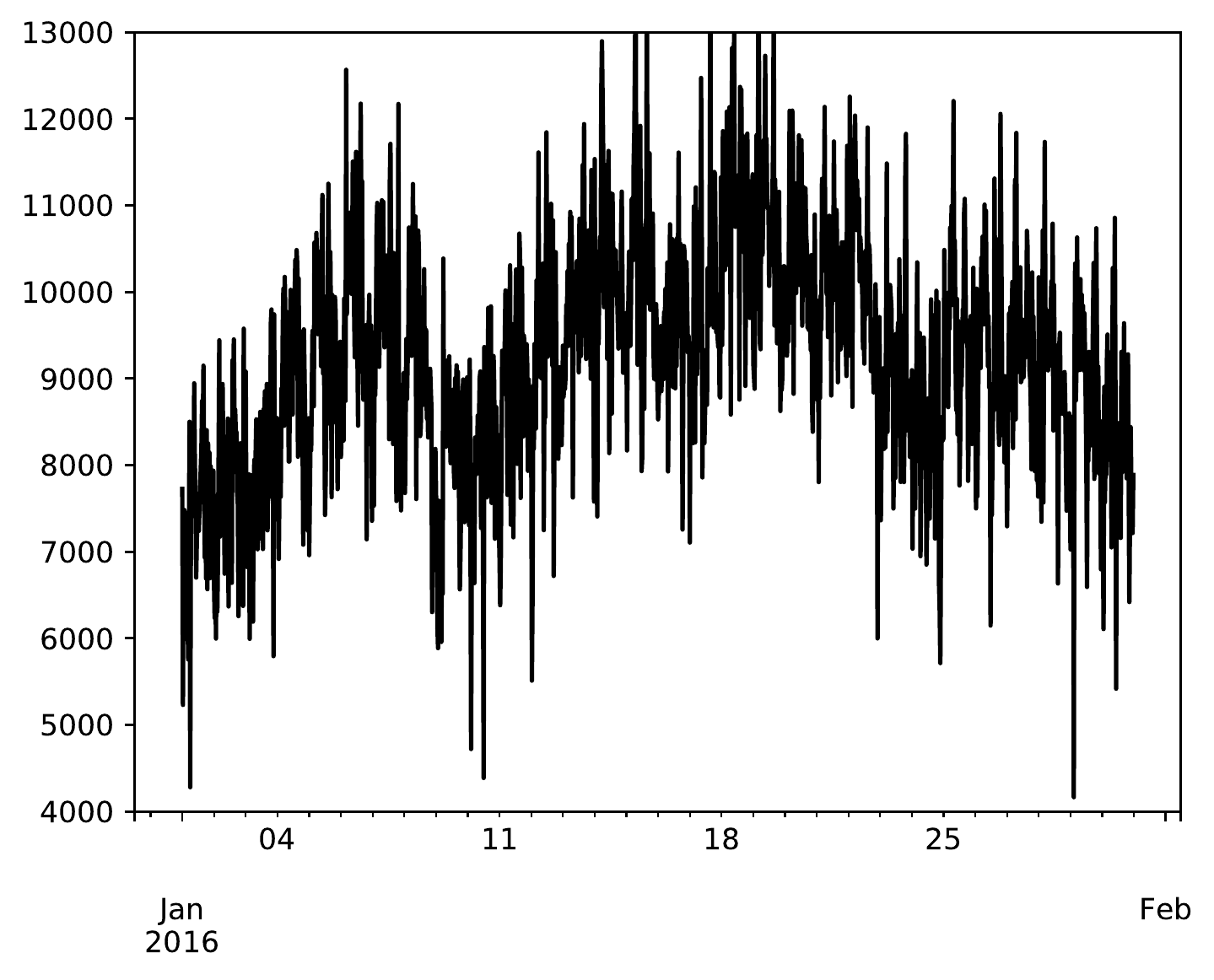}
    \includegraphics[width=\wlen]{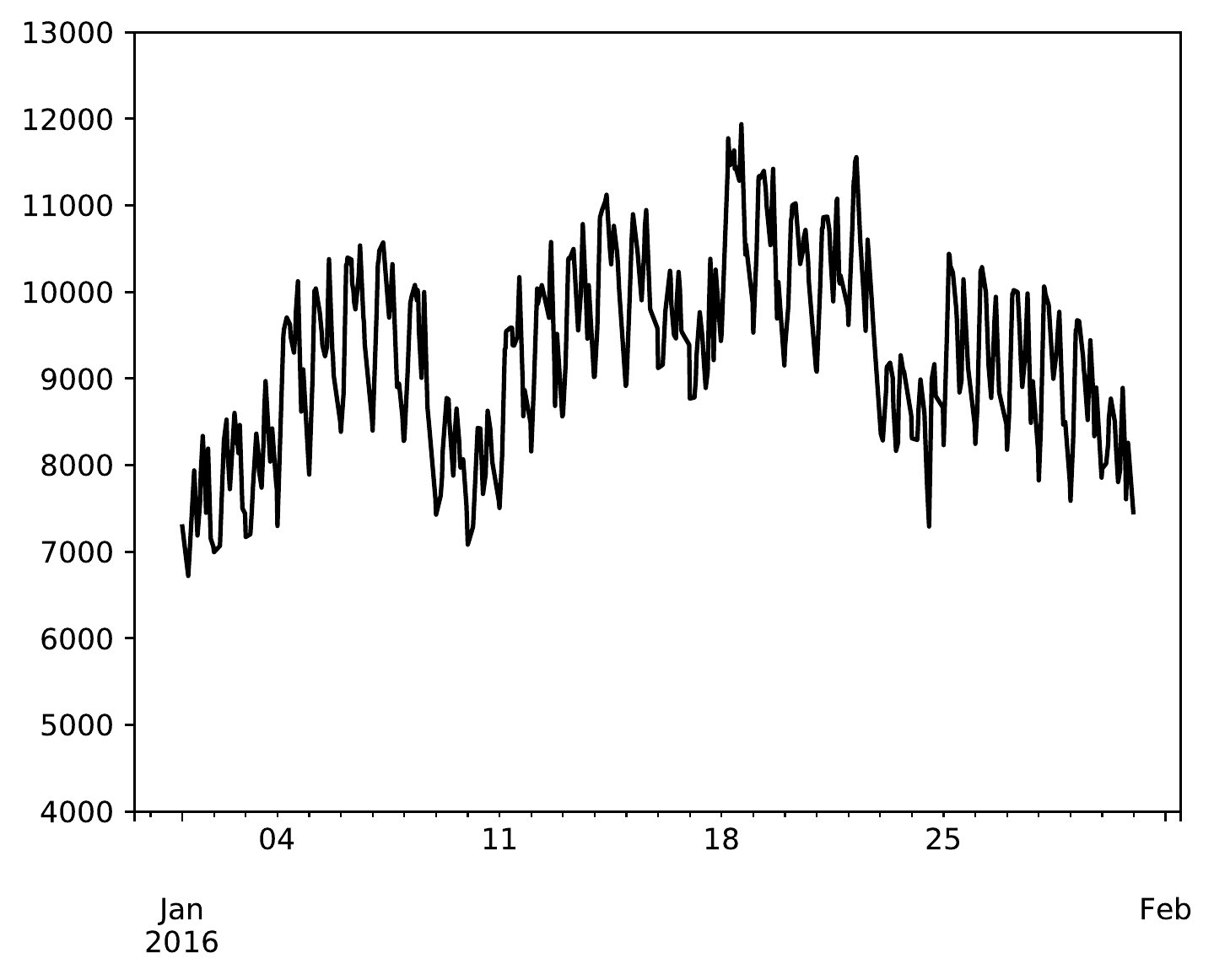}\\
    \includegraphics[width=\wlen]{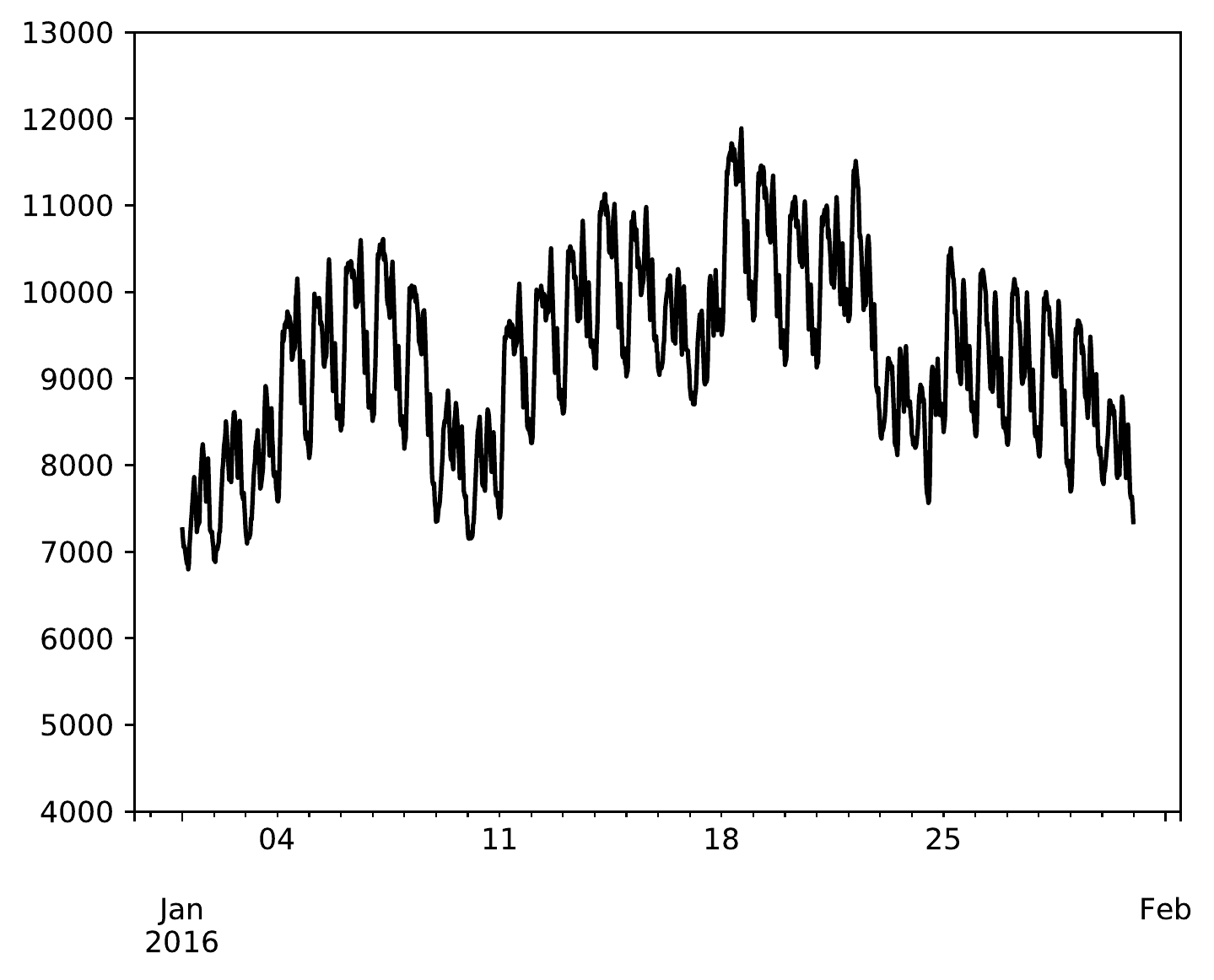}
    \includegraphics[width=\wlen]{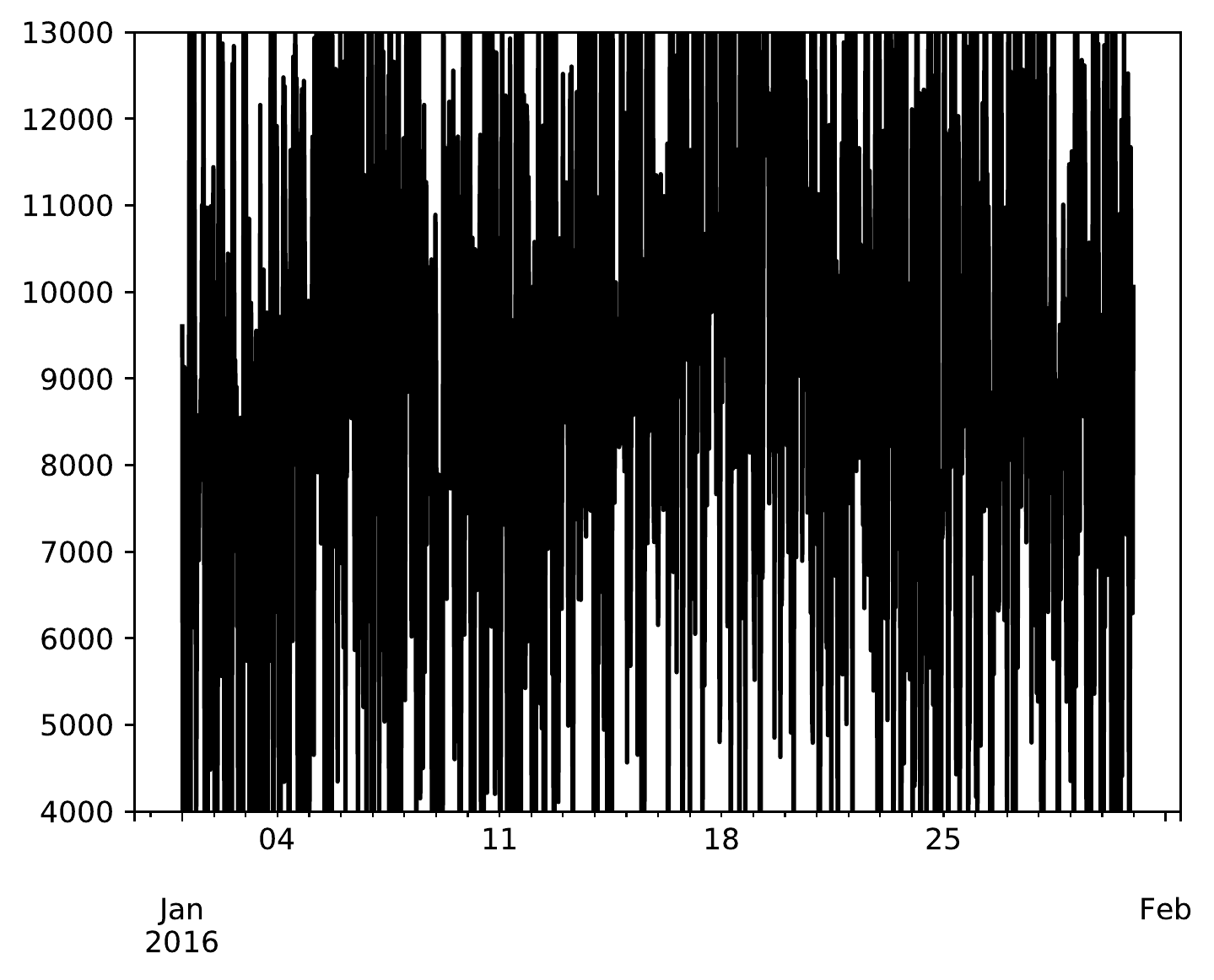}
    \includegraphics[width=\wlen]{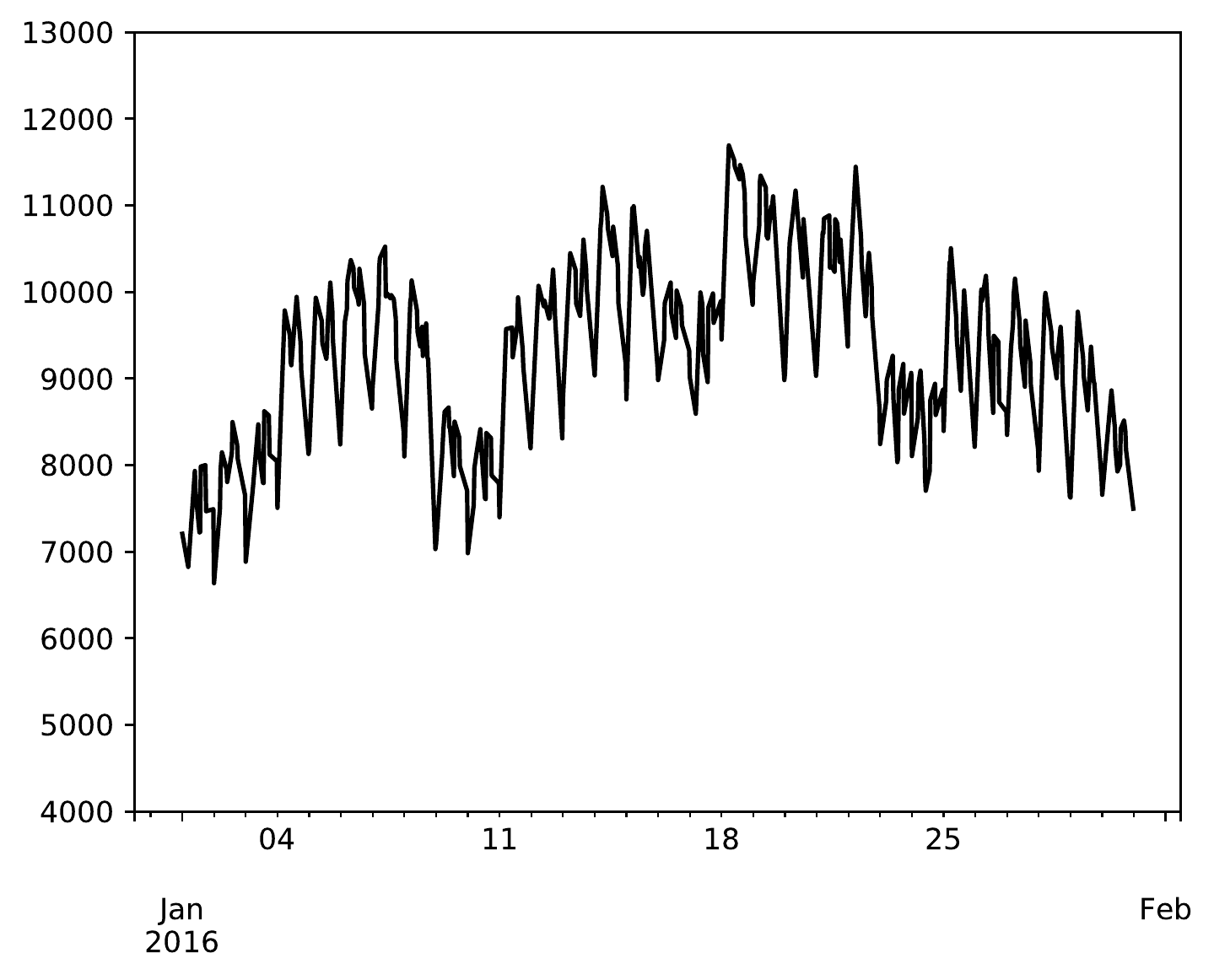}\\
    \includegraphics[width=\wlen]{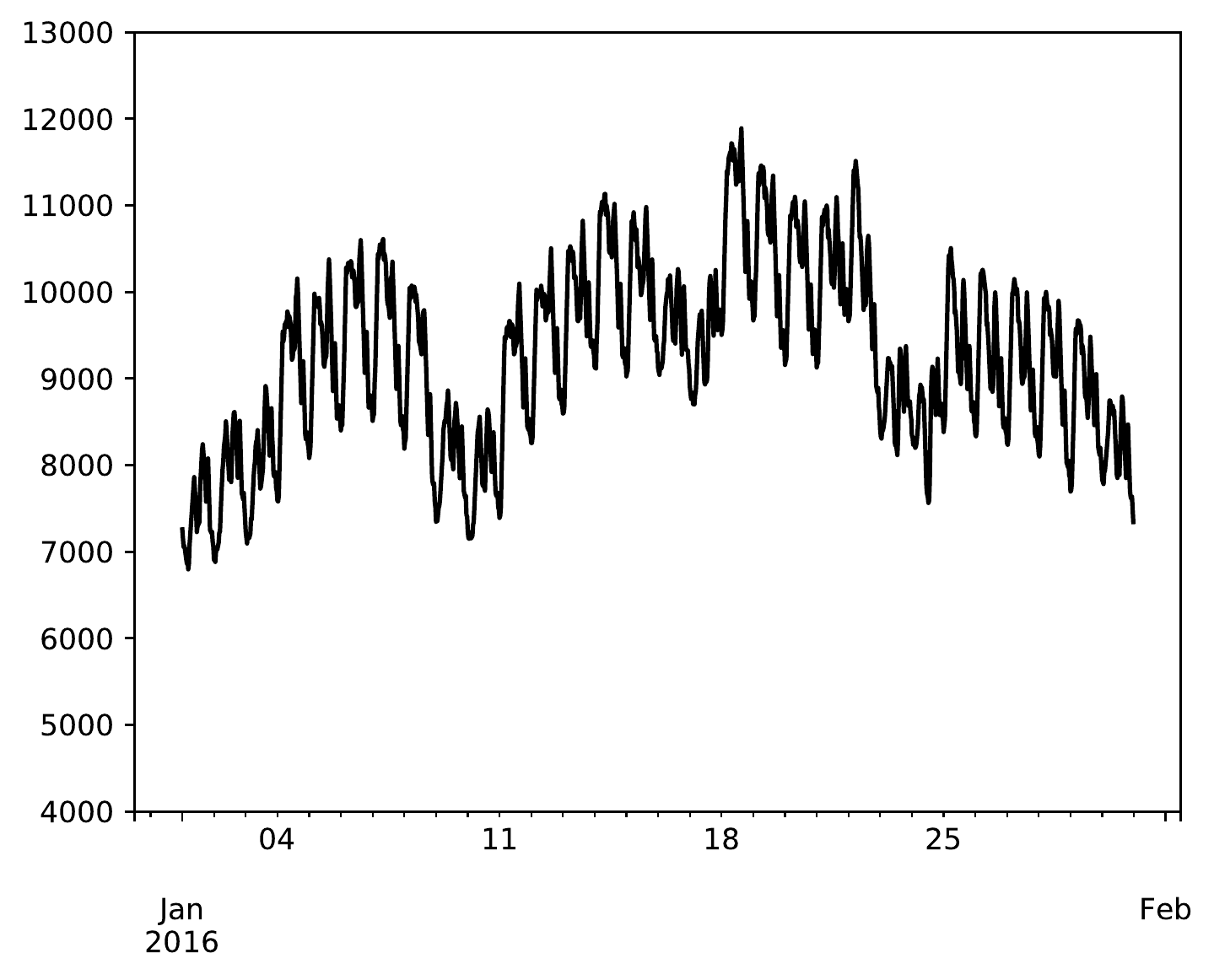}
    \includegraphics[width=\wlen]{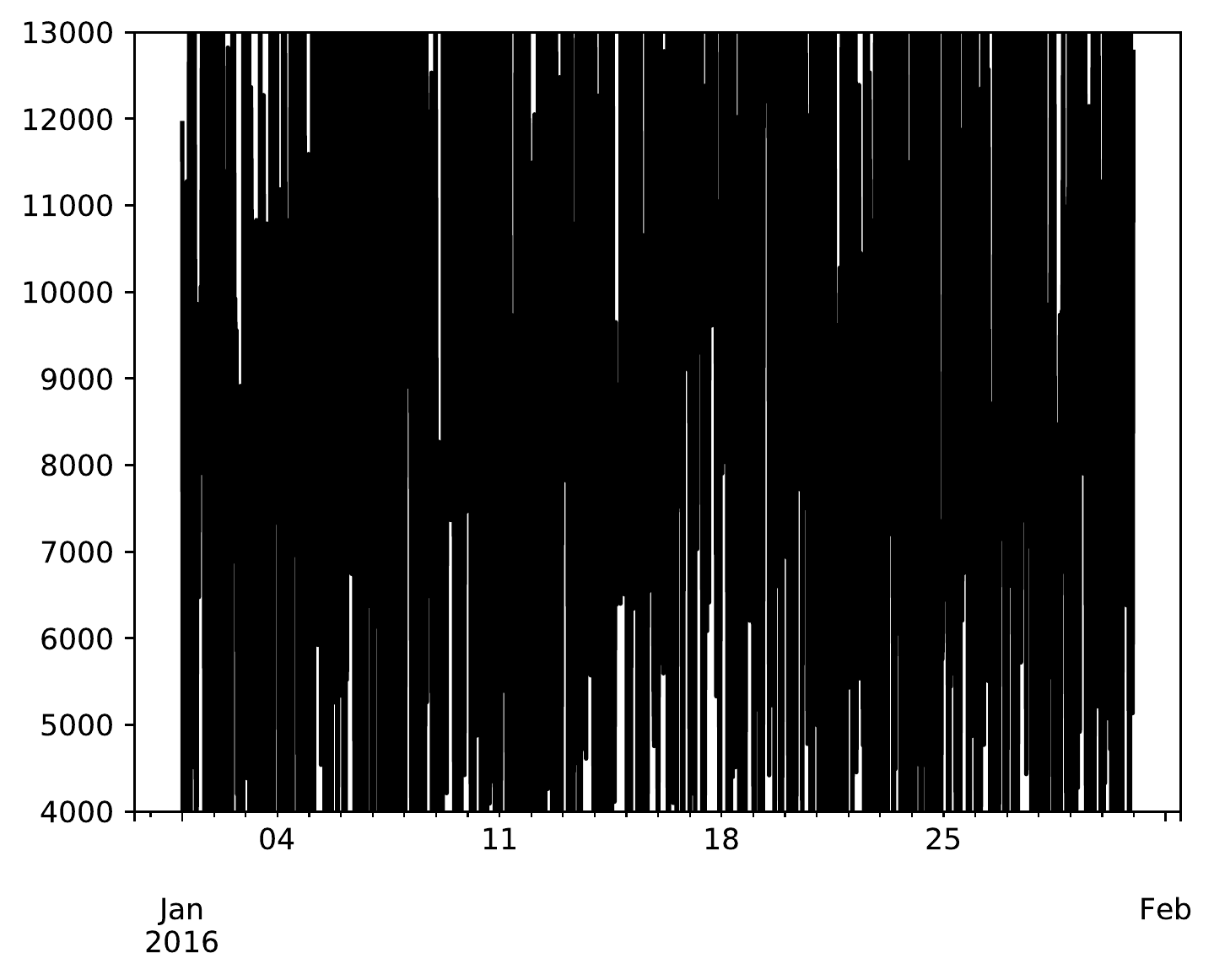}
    \includegraphics[width=\wlen]{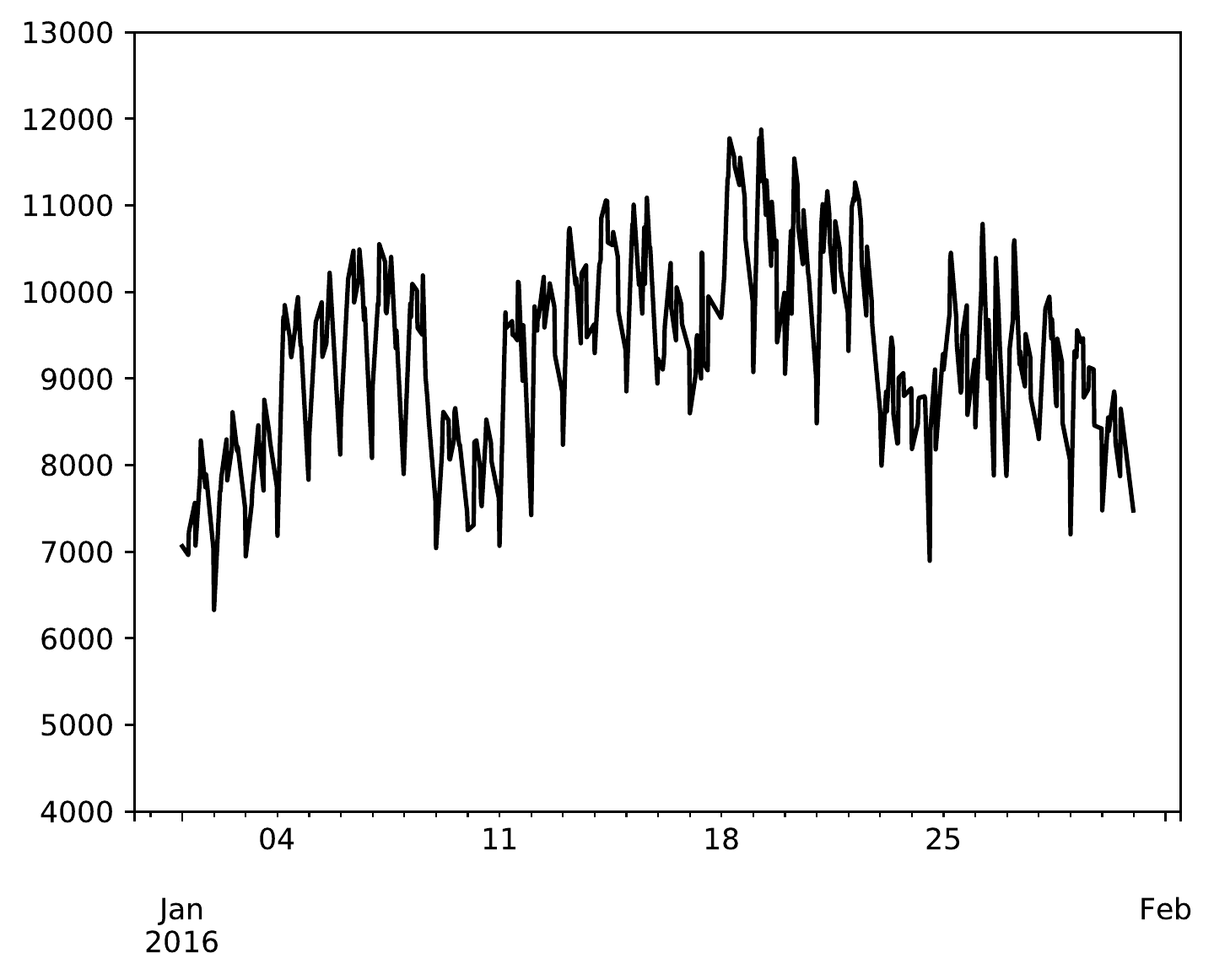}\\    
    \caption{Real load consumption data for the Auvergne-Rh\^{o}ne-Alpes region in January 2016 (left), and its private versions obtained using PeGaSus (center), and \algname{} (right) with privacy budget $\epsilon=1$ (top-row), $\epsilon=0.1$ (middle-row), and $\epsilon=0.01$ (bottom-row).
    }
    \label{fig:com_pegasus} 
\end{figure}

Despite the fact that the two algorithms were designed for solving
different problems, we adapted PeGaSus to the $w$-privacy model and to
satisfy the $(w,\alpha)$-indistinguishability and compare it against
\algname{} for completeness. Figure \ref{fig:com_pegasus} illustrates
a comparison of the algorithms on privately releasing load streams for
the Auvergne-Rh\^{o}ne-Alpes region in January 2016.  The real loads
are highlighted in the first column, while their private counterparts
are shown in the second and third columns. The figure compares our
proposed \algname\ (third column) against PeGaSus (second column) for
indistinguishability parameters $\alpha=10$ (top rows) $\alpha=50$
(middle row) and $\alpha=100$ (bottom row), and fixed a privacy budget
$\epsilon=1$. For PeGaSus, the privacy budget is divided equally for
each time step in the $w$-period. Additionally, the experiments use
the values of the meta-parameters indicated in the original paper
\cite{chen2017pegasus}. The settings for \algname{} are the same as in the
previous sections.  Figure \ref{fig:com_pegasus} clearly illustrates
that \algname\ produces private streams that are more accurate than
those produced by PeGaSus, when the $(w, \alpha)$-indistinguishability
model is adopted. It is an interesting research avenue to combine
PeGaSus and \algname{}.


\section{Related Work}
\label{sec:related_work}

Continuous release of aggregated real-time data has been studied in
previous work including \cite{dwork:2010a,dwork:10}. Most of the state-of-the-art either focuses on event-level privacy on infinite streams \cite{rastogi:10} or on user-level privacy on finite streams \cite{dwork:10}.  Dwork proposed an adaptation of differential privacy to a continuous observation setting \cite{dwork:10}. Her work focused on releasing bit-streams and proposed an algorithm for counting the number of 1s in the stream under event-level differential privacy. \citeA{mir2011pan} proposed \emph{pan-privacy} for estimating counts, moments, and heavy-hitters on data streams while preserving differential privacy even if the internal memory of the algorithms is observed by an attacker.  \citeA{kellaris:14} proposed the notion of $w$-event privacy to balance event-level and user-level privacy, trading off utility and privacy to protect event sequences within a time window of  $w$ time steps. \algname{} applies to the last model.

Within the differential privacy proposal for data streams that fit such model, \citeA{wang:16} proposed \emph{Rescue DP}, which is designed explicitly for spatiotemporal traces. PeGaSus \cite{chen2017pegasus} is another seminal work that allows to  protect  event privacy using a combination of perturbation, partitioning, and smoothing. The algorithm uses a combination of Laplace noise to perturb the data stream for privacy, a grouping strategy which incrementally partitions the space of observed data points by grouping points with small deviations, and a smoothing schema which is used to post-process the private data stream. Both algorithms rely on the idea of contiguous grouping time steps and average the perturbed data within every region in a group. We compared \algname{} against PeGaSus, which is the state-of-the-art on private data-stream release and has been showed to be effective in several settings.

\citeA{rastogi:10} proposed an algorithm which perturbs a small number of Discrete Fourier Transform (DFT) coefficients of the entire time series and reconstructs a released version of the Inverse DFT series. While, in the original paper, the algorithm requires the entire time-series, such approach has been used for $w$-event privacy in the context of data streams \cite{Fan:14}. We also compare \algname{} against DFT. 

\citeA{fan2013fast} proposed Fast, an adaptive algorithm for private release of aggregate statistics which uses a combination of sampling and filtering. Their algorithm is based on user-level differential privacy, which is not comparable to the chosen model of $w$-event level differential privacy.  Additionally, in their work, the authors compare the proposed approach against DFT \cite{rastogi:10} and, while showing improvements, the error remains within the same error magnitude of those produced by DFT. In contrast, our experimental analysis (Section \ref{sec:experiments}) clearly illustrates that \algname\ reduces the error of one order of magnitude when compared to DFT.

Finally, \citeA{chan:11} focused on releasing prefix-sums of the streaming data counts while adopting an event-based privacy model. \citeA{bolot:13} also used an event-based model and proposed an algorithm for answering sliding window queries on data streams.  The model adopted in their work is however incompatible with the privacy model adopted in this work, and hence we did not compare against such approaches.


\section{Conclusions}
\label{sec:conclusions}

This paper presented \algname{}, a novel algorithm for privately
releasing stream data in the $w$-event privacy model. \algname{} is a
4-step procedure consisting of sampling, perturbation, reconstruction,
and post-processing modules. The \emph{sampling} module selects a
small set of data points to measure privately in each period of
interest. The \emph{perturbation} module injects noise to the sampled
data points to ensure privacy. The \emph{reconstruction} module
reconstructs the data points excluded from measurement from the
perturbed sampled points. Finally, the \emph{post-processing} module
uses convex optimization over the private output of the previous
modules, along with the private answers of additional queries on the
data stream, to ensure consistency of quantities associated with salient
features of the data. \algname{} was evaluated on a real dataset from
the largest transmission operator in Europe. Experimental results on
multiple test cases show that \algname{} improves the accuracy of
the state-of-the-art by at least one order of magnitude in this
application domain. The accuracy improvements are measured, not only
in terms of the error distance to the original stream but also in the
accuracy of a popular load forecasting algorithm trained on 
private data sub-streams. The results additionally show that
\algname{} exhibits similar benefits on hierarchical stream data which
is also highly desirable in practice. Finally, the experimental
analysis has demonstrated that both the adaptive sampling and
post-processing optimization are critical in obtaining strong
accuracy. Future work will be devoted to generalizing these results to
the streaming setting where a data element is emitted at each time
step.



\bibliographystyle{theapa}
\bibliography{differential_privacy,time}

\end{document}